\newtheorem{defi}{Definition}[section]
\newtheorem{theorem}[defi]{Theorem}
\newtheorem{prop}[defi]{Proposition}
\newtheorem{lemma}[defi]{Lemma}
\newtheorem{coro}[defi]{Corollary}
\newcommand{\bornof}{{bounded}\xspace}
\newcommand{\lct}{\xspace{lctvs}\xspace}
\newcommand{\Lin}{\mathbf{Lin}}
\newcommand{\Smo}{\mathbf{Smooth}}
\newcommand{\Quant}{\mathbf{Quant}}
\newcommand{\linb}{\mathcal{L}}
\newcommand{\nlin}{\linb^n}
\newcommand{\pol}{\mathcal{P}}
\newcommand{\Cin}{\mathcal{C}^{\infty}} 
\newcommand{\series}{S}
\newcommand{\tb}[1]{\tau_b(#1)} 
\newcommand{\Ec}{E^{\times}}
\newcommand{\Ecc}{E^{\times \times}}
\newcommand{\Fc}{F^{\times}}
\newcommand{\CC}{\mathbb{C}}
\newcommand{\RR}{\mathbb{R}}
\newcommand{\W}{\mathcal{W}}
\newcommand{\DD}{\mathbb{D}}
\newcommand{\NN}{\mathbb{N}}
\newcommand{\CCwo}{\mathbb{C}^\ast}
\newcommand{\RRp}{\mathbb{R}^+}
\newcommand{\ie}{i.e.\xspace}
\newcommand{\wlg}{w.l.o.g.\xspace}
\newcommand{\eg}{e.g.\xspace}
\newcommand{\mco}{\xspace{Mackey-complete}\xspace}
\newcommand{\mca}{{Mackey-cauchy}\xspace}
\newcommand{\mctens}{\hat\otimes}
\newcommand{\unit}{1}
\newcommand{\conv}{\ast}
\newcommand{\abs}[1]{\mathopen{|}#1\mathclose{|}}
\newcommand{\cC}{\mathcal C}
\newcommand{\cB}{\mathcal B}
\date{\today}
\author{Marie Kerjean and Christine Tasson}
\title{\mco spaces and power series -- A topological model of Differential Linear Logic}
\begin{document}

\label{firstpage}
\maketitle

\begin{abstract}
In this paper, we have described a denotational model of Intuitionist Linear Logic which is also a differential category. Formulas are interpreted as Mackey-complete topological vector space and linear proofs are interpreted by bounded linear functions. So as to interpret non-linear proofs of Linear Logic, we have used a notion of power series between Mackey-complete spaces, generalizing the notion of entire functions in $\CC$. Finally, we have obtained a quantitative model of Intuitionist Differential Linear Logic, where the syntactic differentiation correspond to the usual one and where the interpretations of proofs satisfy a Taylor expansion decomposition.

\end{abstract}


\tableofcontents

\section{Introduction}

Logic is by nature discrete, and linear logic is not different. The interpretation of linearity in terms of resource consumption still manipulates discrete notions, \ie proofs are seen as operators on multisets of formulas. Many denotational models of linear logic are also discrete, for example, based  on graphs such as coherent spaces~\cite{systemflater}, on games~\cite{ho,ajm}, or on sets and relations which can also be endowed with an additive structure giving rise to vector spaces with bases~\cite{Ehr02,Ehr05}. Furthermore, Ehrhard and Regnier explain in~\cite{ER06diff} and~\cite{ER03} how it is possible to add a differentiation rule to Linear Logic, constructing this way Differential Linear Logic (DiLL). In this work, differentiation is seen as a way to transform a non-linear proof $ f : A \Rightarrow B$ into a linear proof $ Df: A \multimap (A\Rightarrow B)$. In some models such as the relational model, differentiation has a combinatorial interpretation. Continuous models of DiLL, where non-linear proofs are interpreted by differentiable functions, are even more appealing, as the synctactical differentiation corresponds to the mathematical one. In~\cite{Ehr02} and~\cite{Ehr05}, for instance, non-linear proofs are interpreted by power series between  Köthe spaces and Finiteness spaces respectively, that are sequence spaces. One could even ask for an interpretation of the differentiation rule in more general spaces.

\paragraph*{Bornologies.}
The search for topological models of Linear Logic relies on some fundamental mathematical issues. Indeed, having a cartesian closed category of topological spaces is not straightforward. Several answers exists (see~\cite{escardo} for a past account), and among them is the definition of convenient spaces and  smooth functions by Fr\"olicher, Kriegl and Michor in~\cite{FroKri} and~\cite{KriMi}. Those are the smooth functions used in~\cite{BET10} for modelling DiLL. Moreover, as explained by Girard in the introduction of~\cite{Gir96}, if the proofs are interpreted by continuous functions, then, notably, the interpretations of the proofs of $A, A \Rightarrow B \vdash B$ and  of $A \vdash (A \Rightarrow B) \Rightarrow B$ are also continuous. That is, $ x, f \mapsto f(x) $ and $ x \mapsto (\delta_x : f \mapsto f(x)) $ must be continuous. This would be the case if linear function spaces bore both a uniform convergence and a pointwise convergence topology. We believe that this is solved by the use of bounded sets, \ie by using the advantages of the theory of bornologies (see~\cite{Hogbe} for an overview of this theory). Indeed, the Banach-Steinhauss theorem says that between Banach spaces, the topology of uniform convergence on bounded sets and the pointwise convergence topology on a space of linear functions give rise to the same bounded sets. This theorem is generalized in~\cite{KriMi}, where the authors use Mackey-complete spaces (complete spaces for a specific version of Cauchy sequences) and \bornof linear maps (linear maps preserving bounded sets). 
This observation was exploited in~\cite{FroKri} and~\cite{KriMi} where \bornof linear functions replace continuous ones.

\paragraph*{Quantitative semantics.}
Introduced by Girard in~\cite{Gir88}, quantitative semantics refine the analogy between linear functions and linear programs (consuming exactly one times its input). Indeed, programs consuming exactly $n$-times their resources are seen as monomials of degree $n$. General programs are seen as the disjunction of their executions consuming $n$-times their resources. Mathematically, this means that non-linear programs are interpreted by potentially infinite sums of monomials, that are power series. This analogy can be found in many denotational models of variant of Linear Logic such as Fock spaces~\cite{Fock94} Köthe spaces~\cite{Ehr02}, Finiteness spaces~\cite{Ehr05}, Probabilistic Coherent spaces~\cite{danosehrhard}, or, in a more categorical setting, in analytic functors~\cite{2aaf} and generalised species~\cite{Gspecies}.

\paragraph*{\mco spaces and Power series.}
In this article, we have brought to light a model of Intuitionist Differential Linear Logic, whose objects are \emph{locally convex topological vector spaces} that are \mco (see Definition~\ref{def:mackey}).  The ingredients of the model have been chosen with care so that they correspond cleanly to the constructions of Differential Linear Logic: for instance, vector spaces are used to interpret linearity, and topology to interpret differentiation. 

We use the notion of bounded set when we ask \emph{linear functions} not to be continuous but \emph{\bornof}, that is to send a bounded set on a bounded set. The two notions are closely related, but distinct. As a consequence, the interpretation of the \emph{negation} is based on the bounded dual and not on the usual continuous dual. 

The \emph{multiplicative conjunction} $ \otimes$ of Linear Logic is interpreted by the \emph{\bornof} tensor product of topological vector spaces which has then to be Mackey-completed.

The \emph{additive conjunction} $\with$ and \emph{disjunction} $\oplus$ are interpreted respectively by the cartesian product and the coproduct in the category of \mco spaces and \bornof linear functions. Finite products and coproducts coincide, so that the category is equipped with finite biproducts. Notice that if we wanted to ensure that the bounded dual of infinite products are coproducts (the reverse comes automatically), we would need 
to work with spaces whose cardinals are not strongly inaccessible \cite[13.5.4]{Jar81}. This assumption is not restrictive as it is always possible to construct a model of ZFC with non accessible cardinals.

Non-linear proofs of DiLL are interpreted by \emph{power series}, that are sums of \bornof $n$-monomials. In order to work with these functions, we must make use of the theory of holomorphic maps developed in the second chapter of~\cite{KriMi}. This is made possible since the spaces we consider are in particular Mackey-complete. We have proven that the category of \mco spaces and power series is cartesian closed, by generalizing the Fubini theorem over distributions $\series(E\times F,\CC)\simeq \series(E,\series(F,\CC))$ and by using interchange of converging summations in $\CC$. The exponential modality is interpreted as a \mco subspace of  the \bornof dual of the space of scalar power series. Indeed, any space can be embedded in its \bornof bidual $!E\subset (!E)^{\times\times}=(!E\multimap \bot)^\times$ and using the key decomposition $!E \multimap \bot \simeq E \Rightarrow \bot=\series(E,\CC)$ of Linear Logic gives us that $!E\subseteq \series(E,\CC)^\times$. Finally, because we are working with topological vector spaces, the interpretation of the co-dereliction rule of DiLL is the operator taking the directed derivative at $0$ of a function. 



\paragraph*{Related works.}

Our model follows a long history of models establishing connections between analyticity and computability. 

Fock spaces~\cite{Fock94} and Coherent Banach spaces~\cite{Gir96} were the first step towards a continuous semantics of Linear Logic. More precisely, Fock spaces are Banach spaces and Coherent Banach spaces are dual pairs  of Banach spaces (see~\cite[Chap. 8]{Jar81} for an overview of the theory of dual pairs). In Fock spaces, linear programs are interpreted as contractive bounded linear maps and general programs as holomorphic or analytic functions. Similarly, in Coherent Banach spaces, linear programs are interpreted as continuous linear functions and general programs as bounded analytic functions defined on the open unit ball. Yet, it is easy to guess that neither Fock spaces nor Coherent Banach spaces are completely a model of the entire linear logic, but they are a model of a linear exponential, that is of weakening. However, it is remarkable that both works were already using bounded sets (\eg bounded linear forms and continuous linear forms correspond on Banach spaces) and we take advantage of replacing Banach spaces norms by bornologies.

With Köthe spaces~\cite{Ehr02} and then Finiteness spaces~\cite{Ehr05}, Ehrhard designed two continuous semantics of Linear Logic. The objects of the two models are sequence spaces equipped with a structure of topological vector spaces. Köthe spaces are locally convex spaces over the usual real or complex fields, whereas Finiteness spaces are endowed with a linearised topology over a field (potentially of reals or complexes) endowed with discrete topology. The linear proofs are interpreted by continuous linear functions and the non-linear ones by analytic mappings. Even if the interpretation of linear logic formula enjoys an intrinsic characterization, these models are related to the relational semantics. Indeed, a Linear Logic formula is interpreted by a space of sequences whose indices constitute its relational interpretation. Furthermore, the interpretation of a proof is a sequence whose support (the indices of non zero coefficients) is its relational interpretation. Although interpretations of formulas may differ, proofs are identically interpreted in Köthe or Finiteness models (and in the model presented in the present article). The main difference between our model and these Köthe or Finiteness spaces models is precisely that \mco spaces do not have to be sequence spaces. They digress from the discrete setting of the relational model. Since Köthe spaces and \mco spaces are both endowed with locally convex topology, one could think that the first are a special case of the last. However, the function spaces are endowed with the compact open topology for Köthe spaces and with the bounded open topology for the \mco spaces. In particular, the dual $E_X^{\bot}$ of a K\"othe Space $E_X$ is isomorphic to the topological dual of $E_X$, which is in general a strict subset of the bornological dual (all K\"othe spaces are not bornological). It raises an interesting question about whether a description with bounded subsets would help having an intrinsic description of K\"othe spaces. On the contrary, although Finiteness spaces do not have the same kind of topology, their use of bounded sets is central and our model borrows a lot of Finiteness spaces constructions.

The present work is thought as a restriction of Convenient spaces~\cite{BET10}, that is \mco spaces and smooth maps. In this model of Intuitionist Linear Logic, which is a differential category, non-linear proofs are interpreted with some specific smooth maps. No references are made to a discrete setting, but as in Finiteness spaces, the topology and the bornology are dually related. Although this bornological condition facilitates
 the proofs, it is not necessary to interpret Intuitionnist Linear Logic. 
Thus, in our model, we release the bornological condition on the topology

Remember that in many Quantitative models of Linear Logic, as in Normal functors~\cite{Gir88,2aaf}, Fock spaces~\cite{Fock94} or Finiteness spaces~\cite{Ehr05,Ehr07} non-linear proofs are interpreted as analytic functions.
In our model, we refine smooth maps into analytic ones. On the way, we consider topological vector spaces over $\CC$ to be able to handle holomorphic functions. This is another difference with Convenient Vector spaces as presented in~\cite{BET10}.


\paragraph{Content of the paper.}

We begin the paper by laying down the bornological setting (Subsection~\ref{subsec:borno}) and by defining the central notion of \mco spaces (Subsection~\ref{subsec:mco}). Then, in Section~\ref{sec:lin}, we begin the definition of the model by the linear category of \mco spaces and \bornof linear maps that is cartesian and symetric monoidal closed. This linear part is the base of the present work, but also of the model of \mco spaces and smooth functions introduced in~\cite{BET10}. We have given an overview of this work in a slightly different setting in Section~\ref{sec:smooth} in order to properly describe the landscape of our work. Finally, in Section~\ref{sec:pws}, we introduce the power series, their definition and properties that are useful in demonstrating that \mco spaces and power series constitute a quantitative model of Intuitionistic Differential Linear Logic.

\paragraph{Acknowledgements.}
The authors thank Rick Blute and Thomas Ehrhard for the inspiration and the lively conversations.

\section{Preliminaries}

\subsection{Topologies and Bornologies}\label{subsec:borno}

Let us first set the topological scene. We will handle \emph{complex} topological vector spaces. We denote by $\CC$ the field of complex and by $\CCwo=\CC\setminus \{0\}$.

The monoidal structure could have been described either with complex or real vector spaces. However, in section~\ref{sec:pws}, we work with power series and make an extensive use of their holomorphic properties. 

More precisely, we will work with \emph{locally convex separated topological vector spaces}  (see~\cite{Jar81} I.2.1) and refer to them as \emph{lctvs}. From now on, $E$ and $F$ denote \lct.  A set $C$ in a  $ \CC$-vector space is said to be \emph{absolutely convex} when for all $x, y \in C$, for all $\lambda , \mu  \in \CC$, if $|\lambda|+|\mu| < 1$ then $\lambda x + \mu y \in C$. By definition, the topology of an \lct is generated by a basis of neighbourhood of $0$ made of absolutely convex subsets. We will use that if $C$ is an absolutely convex subset of an \lct, then $\bar{C} \subset 3 C$, and  $\lambda C + \mu C \subset (\lambda + \mu) C$ for all $\lambda , \mu\in \CC $.

\paragraph{Bounded sets.} We will also work with \emph{bornologies}, that is collections of bounded sets with specific closure properties. A \emph{subset} $b$ of an \lct is \emph{bounded} when it is absorbed by every $0$-neighbourhood $U$, that is there is $\lambda \in \CC $ such that $b \subseteq \lambda U$. A \emph{disk} is a bounded absolutely convex set.
A \emph{function} is \emph{\bornof} when it sends a bounded set of its domain on a bounded set of its codomain.  Two spaces are \emph{\bornof equivalent}, noted $E\simeq F$, when there is a bijection $\phi:E\rightarrow F$ such that $\phi$ and $\phi^{-1}$ are both linear and  bounded.

 Let us denote $E'$ the space of linear \emph{continuous} forms on $E$, $\Ec$ the space of linear \emph{\bornof} forms on $E$, and $E^{\star}$ the space of linear forms on $E$. 
Remark that any linear continuous function is bounded and so $E' \subset \Ec\subset E^\star$. 

\paragraph{The Mackey-Arens Theorem.} It is a fundamental theorem for the theory of bornologies. It states that bounded subsets can be characterized as the one that are sent to a bounded ball by any continuous linear form. We state it for bounded linear forms.

\begin{lemma}
\label{scalbound}
A subset $b \subset E$ is bounded if and only if it is scalarly bounded, that is: 
$$\forall \ell \in \Ec,\ \exists \lambda>0,\ |\ell(b)|<\lambda.$$
\end{lemma}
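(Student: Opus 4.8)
The statement is an equivalence, so I would treat the two implications separately, spending almost all of the effort on the converse. For the direct implication, assume $b$ is bounded. By definition every $\ell \in \Ec$ is a \bornof linear form, so it sends the bounded set $b$ to a bounded subset of $\CC$. Since the bounded subsets of $\CC$ are precisely those contained in some disk $\{z : |z| < \lambda\}$, this immediately yields a $\lambda > 0$ with $|\ell(b)| < \lambda$. Hence $b$ is scalarly bounded, and this direction needs nothing beyond unfolding the definition of a \bornof function.

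For the converse — the substantive direction — assume $b$ is scalarly bounded against $\Ec$. The plan is to reduce to the classical situation. The key observation is the inclusion $E' \subset \Ec$ recorded just above the statement: every continuous linear form is \bornof. Restricting the scalar-boundedness hypothesis from $\Ec$ to the smaller continuous dual $E'$ therefore shows that $b$ is scalarly bounded against $E'$, that is, $b$ is $\sigma(E,E')$-bounded (weakly bounded). This reduction is the only part of the argument genuinely specific to the bornological setting, and it is elementary once the inclusion $E' \subset \Ec$ is in hand.

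It then remains to pass from weak boundedness to boundedness for the original topology of the \lct $E$. This is exactly the content of the classical Mackey theorem, which I would invoke from \cite{Jar81}: in a locally convex space a subset is bounded if and only if it is weakly bounded. The heart of that theorem, and the only non-formal step in the whole proof, is a uniform boundedness argument. Fixing a closed absolutely convex $0$-neighbourhood $U$, the bipolar theorem identifies $U$ with $U^{\circ\circ}$, so bounding $b$ in a multiple of $U$ amounts to controlling $\sup_{x \in b}|\ell(x)|$ uniformly over $\ell$ ranging in the equicontinuous polar $U^{\circ}$. The pointwise bounds supplied by weak boundedness are upgraded to this uniform bound by a Banach--Steinhaus argument on $U^{\circ}$, which is $\sigma(E',E)$-compact by Alaoglu--Bourbaki. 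This produces $\lambda$ with $b \subset \lambda U$, and since $U$ was an arbitrary basic neighbourhood, $b$ is bounded.

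I expect the main obstacle to be concentrated entirely in this last classical step, which relies on the bipolar theorem together with the uniform boundedness principle; for the purposes of this paper it is cleanest to cite it rather than reprove it. The remaining logic — the trivial direct implication and the reduction of scalar boundedness against $\Ec$ to scalar boundedness against $E'$ via $E' \subset \Ec$ — is then routine.
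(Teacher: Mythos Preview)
Your proof is correct and follows essentially the same route as the paper: the forward direction by definition of \bornof linear forms, and the converse by restricting the hypothesis from $\Ec$ to $E'$ via the inclusion $E'\subset\Ec$ and then invoking the classical result that weakly bounded sets are bounded. The paper cites this last step as the Mackey--Arens theorem (referencing Schaefer), while you call it the Mackey theorem and sketch its proof via bipolars and uniform boundedness; this is a difference of presentation only.
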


\begin{proof}
By definition of the bounded linear forms, the image of a bounded set is bounded. For the reverse implication, we use  the Mackey-Arens theorem (see for example~\cite[IV.3.2]{Sch71}). Indeed, since for any $l\in E'$, $l\in \Ec$, we have $l(b)$ is bounded, and so $b$ is bounded.
\end{proof}

\paragraph{The Hahn-Banach Theorem.}
Usually, the Hahn-Banach separation theorem is stated for continuous linear forms (see~\cite[proposition 7.2.2.a]{Jar81}). We adapt it to bounded linear forms as $E'\subseteq\Ec$. The principal flaw to the theory of vectorial spaces and bornologies is that there is no version of the Hahn-Banach extension theorem for bounded linear maps~\cite{Hog70} .

\begin{prop}
\label{HBsep}
Let $C$ be a closed convex subset of $E$. If $x \in E\setminus C$, then there is $ \ell \in E'\subset\Ec$ such that $|\ell(x)|= 1 $ and for all $y \in C $ $|\ell(y)| =0 $.
\end{prop}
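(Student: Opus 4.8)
The plan is to observe that the only genuinely new ingredient over the classical statement is the inclusion $E'\subseteq\Ec$ recalled just above: once we produce a \emph{continuous} form $\ell$ with the required properties, it lies automatically in $\Ec$, so the bounded-form version follows for free. Thus the whole content is to run the usual geometric Hahn--Banach argument for continuous forms and then renormalize. Concretely, I would invoke the classical separation theorem \cite[proposition 7.2.2.a]{Jar81} applied to the point $x$ and the closed convex set $C$ of the separated \lct $E$, and spend the remaining effort only on the complex-scalar bookkeeping and the normalization $|\ell(x)|=1$.

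For the core argument I would first separate $x$ from $C$. Over $\CC$ this is handled through real parts: the classical theorem yields a continuous $\RR$-linear form $u$ and a real $\alpha$ with $u(x)>\alpha\ge u(y)$ for every $y\in C$, and a $\CC$-linear continuous form is recovered by setting $\ell_0(v)=u(v)-i\,u(iv)$, so that $\mathrm{Re}\,\ell_0=u$ and $\ell_0\in E'$. To reach the stated vanishing $\ell|_C=0$ I would use that the inequality $u\le\alpha$ on $C$, combined with $ty\in C$ for all real $t$ and $iy\in C$, forces both $\mathrm{Re}\,\ell_0(y)=0$ and $\mathrm{Im}\,\ell_0(y)=0$, hence $\ell_0(y)=0$ for all $y\in C$, while $\ell_0(x)\neq0$. (Equivalently, and more cleanly, I would pass to the separated quotient $E/C$, in which $x+C\neq0$, extract a continuous form equal to $1$ there, and pull it back along the quotient map.) Finally I would renormalize, replacing $\ell_0$ by $\ell=\ell_0/\ell_0(x)$, which is still continuous, satisfies $\ell(x)=1$ hence $|\ell(x)|=1$, and still vanishes on $C$; since $E'\subseteq\Ec$ we get $\ell\in\Ec$, which concludes.

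The step I expect to be the main obstacle is not analytic but a matter of reading the statement: the requirement $|\ell(y)|=0$ for \emph{all} $y\in C$ is only compatible with $C$ being contained in a hyperplane through the origin, i.e.\ with $C$ carrying the structure of a (closed) linear subspace — and it is precisely this subspace structure (the stability $ty\in C$, $iy\in C$) that turns the separating inequality into genuine vanishing. For a general closed convex body one can only obtain strict separation, namely $|\ell(y)|<|\ell(x)|$, rather than $\ell|_C=0$. Modulo this reading, the proof is a routine transcription of the classical theorem, the only subtleties being the passage from the separating real form to a $\CC$-linear one and the verification that scaling by $1/\ell_0(x)$ preserves membership in $E'$, and therefore in $\Ec$.
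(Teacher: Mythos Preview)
Your approach matches the paper's exactly: the paper does not give a proof but simply records that the classical statement \cite[proposition 7.2.2.a]{Jar81} is for continuous forms and that the inclusion $E'\subseteq\Ec$ carries it to bounded forms. Your proposal spells out this same reduction with the additional complex-linear bookkeeping and normalization, which the paper omits.

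Your diagnosis in the last paragraph is also correct and worth noting: as literally stated, the conclusion $\ell|_C=0$ forces $C$ to be a closed $\CC$-linear subspace rather than an arbitrary closed convex set, and your argument uses precisely this subspace structure. The paper's later invocations of this proposition (e.g.\ in the proofs of Propositions~\ref{Cauchy_ineq_curves} and~\ref{Cauchy_ineq_pws}) in fact only use the standard separation conclusion --- if $l(x)\in l(b)$ for every $l\in E'$ and $b$ is closed absolutely convex, then $x\in b$ --- which holds for general closed convex $b$ and does not require vanishing. So the discrepancy you flag is a genuine imprecision in the statement, not in your proof.
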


\paragraph{Bornivorous subsets.} We introduced bounded sets as a definition depending of the topology. It is also possible to define $0$-neighbourhood from a bornology. 
\begin{defi}
A \emph{bornivorous} is a subset $U\subseteq E$ absorbing any bounded up to dilatation:
$\forall b\subset E\text{ bounded},\ \exists\lambda \in \RRp,\ \lambda b \subseteq U.$

The \emph{bornological topology} $\tau_b$ of $E$ is the topology generated by the bornivorous disks of $E$.  
\end{defi}

Note that any neighbourhood of $0$ in the topology of $E$ is bornivorous, but the converse is false, \ie the bornivorous topology $\tau_b$ is finer than the topology of $E$. The point of the bornologification of an \lct is precisely to enrich $E$ with all the bornivourous subsets as $0$-neighbourhood, so that we get better relations between continuity and boundedness (see~\cite[13.1]{Jar81} for details on this notion). 

\begin{prop}\label{prop:topoborno}
\begin{enumerate}
\item The bounded sets of $E$ and $\tb E$ are the same.
\item A linear function $f : E \rightarrow F$ is bounded if and only if $ f: \tb E \rightarrow F$ is continuous.
\end{enumerate}
\end{prop}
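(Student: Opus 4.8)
The plan is to treat the two items in turn, using throughout the observation recorded just before the statement that $\tau_b$ is finer than the topology of $E$, equivalently that every $0$-neighbourhood of $E$ is already a $\tau_b$-neighbourhood. For item~1, one inclusion is immediate from finer-ness: if $b$ is $\tb{E}$-bounded then it is absorbed by every $\tau_b$-neighbourhood of $0$, in particular by every $E$-neighbourhood of $0$, so $b$ is $E$-bounded. For the converse I would first check that the bornivorous disks form a genuine basis of $\tau_b$-neighbourhoods of $0$: a finite intersection $\bigcap_i U_i$ of bornivorous disks is again absolutely convex, and it is bornivorous since, given a bounded $b$ with $\lambda_i b \subseteq U_i$, the choice $\lambda = \min_i \lambda_i$ yields $\lambda b \subseteq \bigcap_i U_i$. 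Now let $b$ be $E$-bounded. By the very definition of ``bornivorous'', every bornivorous disk absorbs $b$; since such disks form a neighbourhood basis of $\tau_b$ at $0$, the set $b$ is absorbed by every $\tau_b$-neighbourhood, i.e.\ $b$ is $\tb{E}$-bounded. This proves the two bornologies coincide.

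For item~2, since $f$ is linear, $f : \tb{E} \to F$ is continuous if and only if it is continuous at $0$, i.e.\ if and only if $f^{-1}(V)$ is a $\tau_b$-neighbourhood of $0$ for every absolutely convex $0$-neighbourhood $V$ of $F$ (these form a basis because $F$ is an \lct). For the forward direction, assume $f$ bounded and fix such a $V$; set $U = f^{-1}(V)$, which is absolutely convex because $V$ is and $f$ is linear. I claim $U$ is bornivorous: for any bounded $b \subseteq E$, the set $f(b)$ is bounded in $F$, hence absorbed by $V$, say $f(b) \subseteq \lambda V$, and linearity gives $\tfrac1\lambda\, b \subseteq f^{-1}(V) = U$, so $U$ absorbs $b$. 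Thus $U$ is a bornivorous disk and hence a $\tau_b$-neighbourhood of $0$, proving $\tau_b$-continuity. For the reverse direction, if $f : \tb{E} \to F$ is continuous then, being a continuous linear map, it is bounded (as recalled before Lemma~\ref{scalbound}), so it sends $\tb{E}$-bounded sets to bounded sets; by item~1 these are exactly the $E$-bounded sets, whence $f : E \to F$ is bounded.

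The main obstacle, and the place where care is genuinely required, is matching the two opposite quantifier conventions --- ``bornivorous'' dilates the bounded set to fit inside $U$, whereas ``bounded'' dilates the neighbourhood to cover $b$ --- and confirming that finite intersections of bornivorous disks remain bornivorous disks, so that the generating family really is a neighbourhood basis and $f^{-1}(V)$ qualifies directly as a generator of $\tau_b$. Once these bookkeeping points are settled, both items reduce to the single computation carried out above: a neighbourhood $V$ absorbs the image $f(b)$ of a bounded set exactly when its preimage absorbs $b$.
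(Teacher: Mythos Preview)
Your proof is correct and follows essentially the same approach as the paper's own proof. The paper is much terser --- it dismisses item~1 as ``definition handling'' and sketches item~2 in two lines --- while you spell out the details that make the argument work, in particular that $f^{-1}(V)$ is not merely bornivorous but a bornivorous \emph{disk} (since $V$ can be taken absolutely convex and $f$ is linear), and that the bornivorous disks genuinely form a $0$-neighbourhood base of $\tau_b$ rather than just a subbase. These are exactly the points one should check, and your treatment is more careful than the paper's without departing from its line of reasoning.
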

\begin{proof}
  The first item stems from definition handling.
  For the second one,
  if $f: \tb E \rightarrow F$ is continuous, it is bounded and because
  $E$ and $\tb E$ bears the same bounded sets $f : E \rightarrow F$ is  bounded. Conversely, suppose that $f : E \rightarrow F$ is bounded. Then one see that when $V$ is a $0$-neighbourhood in $F$, $f^{-1}(V)$ is a bornivorous subset of $E$, hence a $0$-neighbourhood in $\tb E$. Thus ${f} : \tb E \rightarrow F$ is continuous.
  \end{proof}

\subsection{Mackey-complete spaces}
\label{subsec:mco}

Mackey-complete spaces are very common spaces in mathematics as Mackey-completeness is a very weak completeness condition. For example, every complete space, quasi-complete, or weakly complete space is Mackey-complete. Mackey-complete spaces are called locally complete spaces in~\cite{Jar81}, or convenient spaces in~\cite{KriMi}. Although it is not a very restraining notion, Mackey-completeness suffices to speak about smoothness between \lct, in the meaning of Kriegl and Michor~\cite{KriMi}. 

\begin{defi}\label{def:mackey}
Consider $E$ an \lct. A \emph{Mackey-Cauchy net} in $E$ is a net $(x_{\gamma})_{\gamma \in \Gamma}$ such that there is a net of scalars $\lambda_{\gamma, \gamma'}$ decreasing towards $0$ and a bounded set $b$ of $E$ such that:
$$ \forall \gamma, \gamma' \in \Gamma, x_{\gamma} - x_{\gamma'} \in \lambda_{\gamma, \gamma'} b.$$
A space where every Mackey-Cauchy net converges is called \emph{Mackey-complete}.  
\end{defi}

Note that a converging Mackey-Cauchy net does in fact \emph{Mackey-converge}, \ie there is a net of scalars $\lambda_{\gamma}$ decreasing towards $0$ such that $x_{\gamma} - \lim_\gamma x_{\gamma} \in \lambda_{\gamma} b$. Note also that a Mackey-converging net is always a converging net, by definition of boundedness in an \lct. 

Notice that the convergence of Mackey-cauchy nets and the convergence of Mackey-cauchy sequences are equivalent (see~\cite[I.2.2]{KriMi}).
Mackey-converging sequences and bounded functions behave particularly well together. Indeed, a bounded function is not continuous in general, so it does not preserve converging sequences but it preserves Mackey-Cauchy nets.

\begin{prop}
\label{bounded_Mackey}
Bounded linear functions preserves Mackey-convergence and Mackey-Cauchy nets.
\end{prop}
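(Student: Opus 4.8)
The plan is to exploit linearity to push the defining scalar factorization of a Mackey-Cauchy net through $f$, and then to invoke the boundedness of $f$ to certify that the resulting set is again bounded. No completeness or continuity of $f$ is needed; everything follows from the explicit shape of the definitions.

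First I would take a Mackey-Cauchy net $(x_\gamma)_{\gamma\in\Gamma}$ in $E$, with a witnessing net of scalars $\lambda_{\gamma,\gamma'}$ decreasing towards $0$ and a bounded set $b$ of $E$ such that $x_\gamma - x_{\gamma'} \in \lambda_{\gamma,\gamma'} b$ for all $\gamma,\gamma'$. For each pair with $\lambda_{\gamma,\gamma'}\neq 0$, the membership provides some $z_{\gamma,\gamma'}\in b$ with $x_\gamma - x_{\gamma'} = \lambda_{\gamma,\gamma'}\, z_{\gamma,\gamma'}$ (for pairs with $\lambda_{\gamma,\gamma'}=0$ the difference is already $0$, so nothing is needed). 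Applying $f$ and using linearity gives
$$ f(x_\gamma) - f(x_{\gamma'}) = f(\lambda_{\gamma,\gamma'}\, z_{\gamma,\gamma'}) = \lambda_{\gamma,\gamma'}\, f(z_{\gamma,\gamma'}) \in \lambda_{\gamma,\gamma'}\, f(b). $$
The key step is then that, since $f$ is \bornof and $b$ is bounded, $f(b)$ is a bounded subset of $F$. Hence $(f(x_\gamma))$ is a Mackey-Cauchy net in $F$, witnessed by the \emph{same} scalars $\lambda_{\gamma,\gamma'}$ and the bounded set $f(b)$. Note that the element $z_{\gamma,\gamma'}$ chosen to represent the difference depends on the index pair, but this is harmless: it is the single bounded set $f(b)$ that absorbs all of them at once.

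For Mackey-convergence the argument is identical. If $(x_\gamma)$ Mackey-converges to $x=\lim_\gamma x_\gamma$, then there are scalars $\lambda_\gamma$ decreasing to $0$ with $x_\gamma - x \in \lambda_\gamma b$, and by linearity $f(x_\gamma) - f(x) \in \lambda_\gamma\, f(b)$ with $f(b)$ bounded. This is exactly the statement that $(f(x_\gamma))$ Mackey-converges to $f(x)$, and in particular (as noted after Definition~\ref{def:mackey}) converges to it in the topology of $F$. The only point worth flagging is that a \bornof map need not be continuous, so one cannot conclude by invoking preservation of limits; instead one establishes Mackey-convergence of the image net directly, which simultaneously identifies its Mackey-limit as $f$ applied to the Mackey-limit of the original net. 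There is thus no genuine obstacle here beyond bookkeeping of the scalar factorization, the essential content being the interplay between linearity and boundedness.
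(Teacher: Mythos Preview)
Your proof is correct. The paper actually states this proposition without proof, treating it as immediate from the definitions; your argument is precisely the direct verification one expects, using linearity to carry the scalar factorization through $f$ and boundedness to ensure $f(b)$ is again bounded.
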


There is a nice characterization of the Mackey-completeness, through a decomposition into a collection of Banach spaces. 

\begin{defi}
\label{E_b}
Consider $b$ an absolutely convex and bounded subset of an \lct $E$. We write $E_b$ for the linear span of $b$ in $E$, and we endow it with the Minkowski functional defined as:
$$ p_b (x) = \inf\{ \lambda\in\RR^+\ |\ x \in \lambda  b \}. $$
It is a normed space.
\end{defi}

As a Mackey-Cauchy net is nothing but a Cauchy net in some specific $E_b$, we have : 

\begin{prop}{\cite[I.2.2]{KriMi}}
\label{E_b_mco}
An \lct $E$ is Mackey-complete if and only if for every bounded and absolutely convex subset $b$, $E_b$ is a Banach space.
\end{prop}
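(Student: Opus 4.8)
The plan is to prove both implications by translating between Mackey-Cauchy data in $E$ and ordinary Cauchy data in the normed spaces $E_b$ of Definition~\ref{E_b}. The single observation driving everything is that, for a disk $b$, a family $(x_\gamma)$ satisfying $x_\gamma - x_{\gamma'} \in \lambda_{\gamma,\gamma'} b$ is exactly a family whose differences have Minkowski functional $p_b(x_\gamma - x_{\gamma'}) \le \lambda_{\gamma,\gamma'}$; so ``Mackey-Cauchy with witnessing set $b$'' and ``Cauchy in $(E_b, p_b)$'' are two names for the same thing, once an affine translation has placed the family inside $E_b$. Before starting I would reduce to the case where the witnessing bounded set is a disk: in any Mackey-Cauchy net one may replace $b$ by its absolutely convex hull $\Gamma(b)$, which is again bounded because $E$ has a basis of absolutely convex $0$-neighbourhoods, and this only relaxes the containments $x_\gamma - x_{\gamma'} \in \lambda_{\gamma,\gamma'} b \subseteq \lambda_{\gamma,\gamma'}\Gamma(b)$.

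For the implication ``each $E_b$ is Banach $\Rightarrow$ $E$ is Mackey-complete'', I would invoke the cited equivalence between convergence of Mackey-Cauchy nets and of Mackey-Cauchy sequences, so that it suffices to converge a Mackey-Cauchy sequence $(x_n)$ with disk $b$ and scalars $\lambda_{n,m}\to 0$. Setting $y_n := x_n - x_0$, the differences $y_n - y_m = x_n - x_m$ lie in $\lambda_{n,m} b \subseteq E_b$, so $(y_n)$ is a sequence in $E_b$ with $p_b(y_n - y_m) \le \lambda_{n,m} \to 0$, i.e. a Cauchy sequence in the Banach space $E_b$. It therefore norm-converges to some $y \in E_b$. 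Finally the inclusion $E_b \hookrightarrow E$ is continuous: since $b$ is bounded, any absolutely convex $0$-neighbourhood $U$ of $E$ satisfies $b \subseteq \mu U$ for some $\mu$, whence $\{\, p_b < 1/\mu \,\} \subseteq (1/\mu) b \subseteq U$. Hence $y_n \to y$ also in $E$ and $x_n = y_n + x_0 \to y + x_0$ in $E$, so every Mackey-Cauchy sequence converges.

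For the converse ``$E$ Mackey-complete $\Rightarrow$ each $E_b$ Banach'', fix a disk $b$ and a Cauchy sequence $(x_n)$ in $(E_b, p_b)$. From $p_b(x_n - x_m) \to 0$ I would extract a monotone null sequence of scalars, for instance $\mu_k := \sup_{i,j \ge k} p_b(x_i - x_j)$ (finite since Cauchy sequences are bounded, and $\mu_k \to 0$ by Cauchyness), and set $\lambda_{n,m} := \mu_{\min(n,m)} + \tfrac{1}{\min(n,m)}$; then $p_b(x_n - x_m) < \lambda_{n,m}$ forces $x_n - x_m \in \lambda_{n,m} b$, with $(\lambda_{n,m})$ decreasing to $0$. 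Thus $(x_n)$ is a Mackey-Cauchy sequence in $E$, which converges to some $x \in E$ by hypothesis. By the Note following Definition~\ref{def:mackey} a convergent Mackey-Cauchy net in fact Mackey-converges, so $x_n - x \in \nu_n b$ with $\nu_n \to 0$; this shows both that $x = x_n - (x_n - x) \in E_b$ and that $p_b(x_n - x) \le \nu_n \to 0$, i.e. $x_n \to x$ in $E_b$. Hence $E_b$ is complete, a Banach space.

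The routine parts are the Minkowski-functional bookkeeping (the passage between $x \in \lambda b$ and $p_b(x) \le \lambda$, handled by using strict inequalities or slightly enlarged scalars) and the absolutely-convex-hull reduction. The only genuinely delicate point is the scalar construction in the converse: the definition of Mackey-Cauchy demands a \emph{single} net of scalars decreasing to $0$ that dominates all the differences simultaneously, whereas a Cauchy sequence a priori only controls differences past a threshold; the device $\mu_k = \sup_{i,j\ge k}p_b(x_i-x_j)$ is precisely what repackages the tail control into one monotone family. Everything else is transport along the continuous inclusion $E_b \hookrightarrow E$ together with the already-granted fact that convergent Mackey-Cauchy nets Mackey-converge.
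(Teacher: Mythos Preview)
The paper offers no proof beyond the one-line remark preceding the statement (``a Mackey-Cauchy net is nothing but a Cauchy net in some specific $E_b$'') and the citation to Kriegl--Michor; your write-up is exactly the fleshing-out of that remark, so the approaches coincide.

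There is, however, a genuine technical wrinkle in the converse direction. You invoke the Note after Definition~\ref{def:mackey} to conclude $x_n - x \in \nu_n b$ with the \emph{same} disk $b$ that witnessed the Mackey-Cauchy condition. That step needs $b$ to be closed: passing to the limit $m\to\infty$ in $x_n - x_m \in \lambda_{n,m} b$ only gives $x_n - x$ in a multiple of $\bar b$, not of $b$. This is not a cosmetic point---the proposition as literally stated is false. In $E=\ell^2$, take $b$ to be the set of finitely supported vectors of norm at most $1$; this is a bounded absolutely convex set and $\ell^2$ is certainly Mackey-complete, yet $E_b$ is the space of finitely supported sequences under the $\ell^2$ norm, which is not complete. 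Kriegl and Michor's actual equivalence is for \emph{closed} bounded absolutely convex $b$, and with that extra word both the paper's Note and your argument go through verbatim. So your proof is correct for the intended statement; the gap is inherited from the paper's omission of ``closed''.
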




Similarly to what happens in the more classical theory of complete spaces, we have a \textit{Mackey-completion} procedure. This one is slightly more intricate than the completion procedure, as it consists of the right  completion of each of the $E_b$.

\begin{prop}{\cite[I.4.29]{KriMi}}
\label{mcompletion}
For every \lct $E$ there is a unique (up to bounded isomorphism) \mco \lct $\tilde{E}$ and a bounded embedding $ \iota : E \rightarrow \tilde{E}$ 
such that for every \mco \lct $F$, for every bounded linear map $f : E \rightarrow F$ there is a unique bounded linear map 
$\tilde{f} : \tilde{E} \rightarrow F $ extending $f$ such that $ f = \tilde{f} \circ \iota$.
\end{prop}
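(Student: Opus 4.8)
The plan is to realise $\tilde E$ as the smallest \mco subspace of the ordinary completion $\hat E$ containing $E$, to take $\iota$ to be the canonical embedding, and to build $\tilde f$ by extending $f$ along Mackey-limits. First I would let $\hat E$ be the topological completion of $E$; it is a separated \lct and, being complete, it is \mco (every complete space is, as noted above), so I may identify $E$ with its image in $\hat E$. I then define a transfinite increasing chain of linear subspaces by $E^{(0)}=E$, by letting $E^{(\alpha+1)}$ be the set of $x\in\hat E$ that are Mackey-limits in $\hat E$ of Mackey-Cauchy sequences of $E^{(\alpha)}$ (using the equivalence of nets and sequences recalled after Definition~\ref{def:mackey}), and by $E^{(\lambda)}=\bigcup_{\alpha<\lambda}E^{(\alpha)}$ at limit ordinals. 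These are nested subsets of the fixed set $\hat E$, so for cardinality reasons the chain stabilises at some $\alpha_0$; set $\tilde E:=E^{(\alpha_0)}$ with the subspace topology and let $\iota$ be the corestriction of $E\hookrightarrow\hat E$, which is linear, injective and a topological embedding. This $\tilde E$ is \mco: a Mackey-Cauchy net in $\tilde E$ is Cauchy in the complete space $\hat E$, hence converges there, and its limit is a Mackey-limit of a Mackey-Cauchy sequence of $E^{(\alpha_0)}$, so it lies in $E^{(\alpha_0+1)}=E^{(\alpha_0)}=\tilde E$. The same induction shows $\tilde E$ is the \emph{smallest} \mco subspace of $\hat E$ containing $E$, since any \mco $M$ with $E\subseteq M\subseteq\hat E$ must contain each $E^{(\alpha)}$ (Mackey-limits computed in $\hat E$ agree with those in $M$ by separatedness).

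For the extension I would define $\tilde f$ on the chain by transfinite recursion, starting from $\tilde f\circ\iota=f$. Given a bounded linear extension on $E^{(\alpha)}$, for $x\in E^{(\alpha+1)}$ I pick a Mackey-Cauchy sequence $(x_n)$ of $E^{(\alpha)}$ Mackey-converging to $x$ and set $\tilde f(x):=\lim_n\tilde f(x_n)$. This limit exists because $(\tilde f(x_n))$ is Mackey-Cauchy in $F$ by Proposition~\ref{bounded_Mackey} and $F$ is \mco. Well-definedness and linearity follow from a disk computation: if $x_n\to x$ along a disk $b$ and $x'_n\to x$ along a disk $b'$, then $x_n-x'_n$ is Mackey-null along the disked hull of $b+b'$ (a bounded set), so $\tilde f(x_n)-\tilde f(x'_n)\to 0$ again by Proposition~\ref{bounded_Mackey}; additivity and homogeneity use $x_n+y_n\to x+y$ along $b+b'$ together with $\lambda C+\mu C\subseteq(\lambda+\mu)C$ for disks. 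Taking unions at limit ordinals yields a linear $\tilde f:\tilde E\to F$ extending $f$.

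Uniqueness of $\tilde f$ comes from the minimality of $\tilde E$: if $g,g'$ are two bounded linear extensions of $f$, then $N:=\{x\in\tilde E\mid g(x)=g'(x)\}$ is a linear subspace containing $E$, and it is \mco because a Mackey-Cauchy net in $N$ has a limit $x$ in $\tilde E$ for which Proposition~\ref{bounded_Mackey} and separatedness of $F$ force $g(x)=g'(x)$, \ie $x\in N$; by minimality $N=\tilde E$, so $g=g'$. Uniqueness of $\tilde E$ up to bounded isomorphism is then formal: given two solutions $(\tilde E_1,\iota_1)$ and $(\tilde E_2,\iota_2)$, the universal property produces bounded linear maps $\phi,\psi$ with $\phi\iota_1=\iota_2$ and $\psi\iota_2=\iota_1$, and the uniqueness clause applied to $\psi\phi$ and $\mathrm{id}$ (and symmetrically) shows they are mutually inverse.

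The hard part is the \emph{boundedness} of $\tilde f$, which the pointwise construction does not give for free. I would reduce it to the scalar codomain via Lemma~\ref{scalbound}: it suffices to bound $\ell\circ\tilde f$ on each bounded disk of $\tilde E$ for every $\ell\in\Fc$, and since $\ell$ preserves Mackey-limits (Proposition~\ref{bounded_Mackey}), $\ell\circ\tilde f$ is exactly the pointwise extension of the scalar bounded map $\ell\circ f$. For $\CC$-valued maps the bounds are controlled through the normed spaces of Definition~\ref{E_b}: for a disk $b$ of $E$, boundedness of $f$ provides a disk $c\supseteq f(b)$ of $F$ with $p_c(f(x))\le p_b(x)$, so $f$ restricts to a norm-decreasing map $\eb\to F_c$, where $F_c$ is Banach since $F$ is \mco (Proposition~\ref{E_b_mco}); this extends to the Banach completion $\widehat{\eb}\to F_c$ with the same norm, giving a uniform bound on the Mackey-closure of $b$. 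The remaining and genuinely delicate step is to show that every bounded disk of $\tilde E$ is absorbed by the image of such a Mackey-closure, so that these uniform estimates transfer to all of $\tilde E$; this is precisely where the description of $\tilde E$ as the right completion of each $\eb$ is essential, and I expect it to be the main obstacle of the proof.
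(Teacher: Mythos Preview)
The paper does not give a proof; it cites \cite[I.4.29]{KriMi} and merely sketches the construction in the paragraph following the statement. That sketch differs from yours in one essential point: before completing, one first \emph{bornologizes} $E$, passing to $E_{born}=\tb{E}$, and then takes the Mackey-closure of $E$ inside the completion $\widetilde{E_{born}}$ rather than inside the ordinary topological completion $\hat E$. Your transfinite description of the Mackey-closure itself is fine and matches the paper's warning (just after the proposition) that the Mackey-closure is not the one-step Mackey-adherence.

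The bornologification is precisely what dissolves the obstacle you isolate at the end. By Proposition~\ref{prop:topoborno}, a bounded linear $f:E\to F$ is the same datum as a \emph{continuous} linear map $E_{born}\to F$, so it extends uniquely to a continuous linear $\hat f:\widetilde{E_{born}}\to\hat F$. A transfinite induction along your chain $E^{(\alpha)}$ then shows $\hat f(\tilde E)\subseteq F$: at successor stages a Mackey-Cauchy sequence is sent by the bounded $\hat f$ to a Mackey-Cauchy sequence in $F$ (Proposition~\ref{bounded_Mackey}), which converges in $F$ since $F$ is \mco, and continuity of $\hat f$ together with separation forces this limit to be $\hat f(x)$. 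Thus $\tilde f:=\hat f|_{\tilde E}$ is \emph{continuous} on $\tilde E$, hence bounded, with no further argument needed. In your set-up $f$ is only bounded on $E$, not continuous, so there is no global continuous extension to lean on; your reduction via Lemma~\ref{scalbound} and the normed spaces $E_b$ still leaves open why an arbitrary bounded disk of the subspace $\tilde E\subseteq\hat E$ is controlled by the Banach completions $\widehat{E_b}$, which is exactly where your argument stalls.
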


The Mackey-completion procedure can be decomposed in three steps. First, one bornologizes the space $E_{born}$, so that $E_{born}$ bears a topology where the $0$-neighbourhoods are exactly the bornivorous ones. Then one Cauchy-completes this space into a space $\widetilde{E_{born}}$. The Mackey-completion of $E$ is the Mackey-closure of $E$ in $\widetilde{E_{born}}$. 

Beware that the \textit{Mackey-closure} procedure does not behave as simply as the closure procedure. Indeed, the Mackey-closure of a subset $B$ is the smallest Mackey-closed (i.e. closed for Mackey-convergence) set containing $X$. It does not coincide in general with the Mackey-adherence of $X$, that is the set of all limits of Mackey-converging sequences of elements of $X$, see~\cite[I.4.32]{KriMi}.

Let us describe finally a few preservation properties of Mackey-complete spaces.

\begin{prop}{\cite[I.2.15]{KriMi}}\label{prop:stability_mco}
Mackey-completeness is preserved by limits, direct sums, strict inductive limits of sequences of closed embeddings. It is not preserved in general by quotient nor general inductive limits.
\end{prop}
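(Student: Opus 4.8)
The plan is to establish the three positive closure properties one at a time, in each case reducing an arbitrary Mackey-Cauchy net (equivalently, by the equivalence recalled after Definition~\ref{def:mackey}, a Mackey-Cauchy sequence) to one that lives inside a constituent space already known to be Mackey-complete. The one recurring device is the remark following Definition~\ref{def:mackey}: as soon as a Mackey-Cauchy net is shown to \emph{converge} in the ambient topology, it automatically Mackey-converges. Hence in every case it suffices to exhibit a topological limit lying in the space, and I never need to track the witnessing bounded set all the way to the end.

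\medskip
\noindent\textbf{Limits.} First I would reduce to products and closed subspaces, since every categorical limit of \lct\ is the closed subspace of a product cut out by the equations coming from the diagram, each of which is the kernel of a continuous (hence bounded) map. For a product $\prod_i E_i$ the projections are continuous, hence bounded, so by Proposition~\ref{bounded_Mackey} each coordinate of a Mackey-Cauchy net $(x_\gamma)$ is Mackey-Cauchy in $E_i$ and, by Mackey-completeness of $E_i$, converges to some $y_i$; since coordinatewise convergence is exactly convergence in the initial (product) topology, $(x_\gamma)$ converges to $y=(y_i)$, and thus Mackey-converges. For a closed subspace $M\subseteq E$, bounded subsets of $M$ are bounded in $E$, so a Mackey-Cauchy net of $M$ is Mackey-Cauchy in the Mackey-complete space $E$, converges to some $y\in E$, and closedness forces $y\in M$ with convergence in the subspace topology.

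\medskip
\noindent\textbf{Direct sums and strict inductive limits.} The decisive input in both cases is a \emph{localisation} property of bounded sets. For a locally convex direct sum $\bigoplus_i E_i$ every bounded set is contained in, and bounded in, a finite subsum $\bigoplus_{i\in F}E_i$ (see~\cite{Jar81}). Thus if $b$ witnesses that $(x_\gamma)$ is Mackey-Cauchy, all differences $x_\gamma-x_{\gamma'}$ lie in the fixed subspace $\bigoplus_{i\in F}E_i$; fixing one index $\gamma_0$ places the entire net in the single coset $x_{\gamma_0}+\bigoplus_{i\in F}E_i$, hence in a finite subsum $E_{F\cup G}$, where $G$ is the support of $x_{\gamma_0}$. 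A finite direct sum is a finite product, so it is Mackey-complete by the previous step and closed in $\bigoplus_i E_i$; the net therefore converges there and a fortiori in $\bigoplus_i E_i$. For a strict inductive limit $E=\bigcup_n E_n$ of a sequence of closed embeddings, the analogue is the Dieudonn\'e--Schwartz property: every bounded subset of $E$ sits in some step $E_n$ and is bounded there, while each $E_n$ is closed in $E$ and retains its original topology. The same coset argument then confines any Mackey-Cauchy net to one step $E_m$, which is Mackey-complete by hypothesis, giving convergence in $E_m$ and hence in $E$.

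\medskip
\noindent\textbf{Main obstacle and the negative part.} The coordinatewise and coset manipulations are routine; the genuinely non-trivial ingredient, and the main obstacle, is the localisation of bounded sets in the strict inductive limit, i.e.\ the Dieudonn\'e--Schwartz theorem together with the fact that the steps are topologically embedded and closed. This is precisely where the hypothesis of \emph{closed} embeddings of a \emph{sequence} is used, and it is the part I would have to invoke rather than reprove. For the two negative statements I would not construct examples from scratch but cite the standard counterexamples in~\cite{KriMi}: there is a Mackey-complete space with a closed subspace whose quotient is no longer Mackey-complete, and an inductive limit of Mackey-complete spaces that is not strict (or not over a sequence) need not be Mackey-complete, both failures stemming from the breakdown of the bounded-set localisation that drives the positive cases.
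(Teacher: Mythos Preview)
The paper does not actually prove this proposition: it is stated with a bare citation to \cite[I.2.15]{KriMi} and no argument is given. So there is no ``paper's own proof'' to compare against beyond the reference itself.

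That said, your sketch is correct and is essentially the standard argument one finds in \cite{KriMi}. The reduction of limits to products and closed subspaces, the use of Proposition~\ref{bounded_Mackey} to push Mackey-Cauchy nets through bounded linear maps (projections), the localisation of bounded sets in direct sums to finite subsums, and the appeal to the Dieudonn\'e--Schwartz theorem for strict countable inductive limits of closed embeddings are exactly the ingredients needed. Your identification of the Dieudonn\'e--Schwartz step as the one genuinely nontrivial input is accurate: this is precisely where ``strict'', ``sequence'', and ``closed embeddings'' are all used, and without them the localisation of bounded sets can fail, which is also why general inductive limits need not be Mackey-complete. For the negative clauses, citing the known counterexamples (as the paper implicitly does via the reference) is the right move; there is nothing further to add.
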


\paragraph{Spaces of bounded maps.} Let us write $\mathcal{B}(E,F)$ for the space of bounded maps from $E$ to $F$ (not necessarily linear), endowed with the topology of uniform convergence on bounded sets of $E$. As in the linear case (see below), bounded sets of $\mathcal{B}(E,F)$ are the \emph{equibounded} ones, that is the sets $B\subset\mathcal{B}(E,F)$ such that for any $b\subset E$ bounded, $B(b)=\{f(x)\ |\ f\in B,\, x\in b\}$ is bounded in $F$. 

\begin{prop}\label{prop:B(E,F)mco}\cite[I.2.15]{KriMi}
Let $E$ and $F$ be \lct. If $F$ is \mco, then so is $\mathcal{B}(E,F)$.
\end{prop}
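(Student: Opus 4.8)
The plan is to show directly that every Mackey-Cauchy net in $\mathcal{B}(E,F)$ converges in the topology of uniform convergence on bounded sets, feeding the pointwise Mackey-completeness of $F$ into the equiboundedness that (as recalled just before the statement) characterizes the bounded sets of $\mathcal{B}(E,F)$. So let $(f_\gamma)_{\gamma\in\Gamma}$ be a Mackey-Cauchy net, witnessed by a bounded set $\mathcal{B}\subset\mathcal{B}(E,F)$ and scalars $\lambda_{\gamma,\gamma'}$ decreasing to $0$ with $f_\gamma-f_{\gamma'}\in\lambda_{\gamma,\gamma'}\mathcal{B}$; concretely $f_\gamma-f_{\gamma'}=\lambda_{\gamma,\gamma'}\,g_{\gamma,\gamma'}$ for some $g_{\gamma,\gamma'}\in\mathcal{B}$. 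Recall that $\mathcal{B}$ being bounded means exactly that it is equibounded.

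First I would build the candidate limit pointwise. For a fixed $x\in E$ the singleton $\{x\}$ is bounded, so $\mathcal{B}(\{x\})$ is bounded in $F$, and evaluating the relation above at $x$ gives $f_\gamma(x)-f_{\gamma'}(x)\in\lambda_{\gamma,\gamma'}\,\mathcal{B}(\{x\})$. Hence $(f_\gamma(x))_\gamma$ is a Mackey-Cauchy net in $F$, and since $F$ is \mco it converges to a point I call $f(x)$. This defines a function $f:E\to F$, and it remains to check that $f$ is bounded and that $f_\gamma\to f$ uniformly on every bounded set.

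Both remaining points will follow from a single uniform estimate. Fix a bounded set $b\subset E$ and let $B$ be a closed disk containing $\mathcal{B}(b)$; such a $B$ is still bounded, obtained from the absolutely convex hull of $\mathcal{B}(b)$ using $\bar C\subset 3C$. For every $x\in b$ one has $g_{\gamma,\gamma'}(x)\in\mathcal{B}(b)\subseteq B$, so $f_\gamma(x)-f_{\gamma'}(x)\in\lambda_{\gamma,\gamma'}B$ for all $x\in b$ simultaneously. Fixing $\gamma$ and letting $\gamma'\succeq\gamma$ run, monotonicity gives $\lambda_{\gamma,\gamma'}B\subseteq\lambda_{\gamma,\gamma}B$ (as $B$ is balanced), and since $f_{\gamma'}(x)\to f(x)$ in $F$ while $\lambda_{\gamma,\gamma}B$ is closed, I obtain $f_\gamma(x)-f(x)\in\lambda_{\gamma,\gamma}B$ for every $x\in b$. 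This yields at once that $f(b)\subseteq f_\gamma(b)+\lambda_{\gamma,\gamma}B$ is bounded, hence $f\in\mathcal{B}(E,F)$, and that $f_\gamma-f$ lands uniformly on $b$ in $\lambda_{\gamma,\gamma}B$. As $B$ is bounded and the diagonal $\lambda_{\gamma,\gamma}\to 0$, the set $\lambda_{\gamma,\gamma}B$ is eventually absorbed by any $0$-neighbourhood of $F$, which is precisely the statement that $f_\gamma\to f$ in $\mathcal{B}(E,F)$.

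The main obstacle is this passage to the pointwise limit while keeping the estimate uniform in $x\in b$: one must verify that the \emph{same} bounded set $B$ and the \emph{same} diagonal scalars $\lambda_{\gamma,\gamma}$ control all evaluations at once. This is exactly where the equiboundedness of $\mathcal{B}$, the monotone decrease of $\lambda_{\gamma,\gamma'}$, and the closedness of the disk $B$ (so that the limit cannot escape the scaled ball) are all essential. By contrast, the pointwise Mackey-completeness of $F$ is the easy input; the real content is this interchange of limit and uniform bound.
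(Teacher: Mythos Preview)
Your argument is correct and follows the same overall strategy as the paper: build the limit pointwise from the Mackey-completeness of $F$, then exploit equiboundedness of the witnessing set $\mathcal{B}$ to control everything uniformly on each bounded $b\subset E$. The implementations differ slightly. The paper argues directly with $0$-neighbourhoods: it picks $U$, absorbs $B(b)$ into $\lambda U$, freezes an index $\gamma_0$ beyond which the scalars are small, and concludes $f(b)\subset(\lambda+\mu)\bar U\subset 3(\lambda+\mu)U$; it then stops, having shown only that $f$ is bounded. You instead absorb $\mathcal{B}(b)$ once into a closed bounded disk $B$ and pass to the limit inside $\lambda_{\gamma,\gamma}B$, which simultaneously gives boundedness of $f$ and actual convergence $f_\gamma\to f$ in $\mathcal{B}(E,F)$ (indeed Mackey-convergence). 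In that sense your write-up is more complete than the paper's, which leaves the convergence step implicit.

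One small caution: your step ``monotonicity gives $\lambda_{\gamma,\gamma'}B\subseteq\lambda_{\gamma,\gamma}B$'' relies on the net $(\lambda_{\gamma,\gamma'})$ being genuinely decreasing on $\Gamma\times\Gamma$, not merely converging to $0$. The paper's Definition~\ref{def:mackey} does say ``decreasing towards $0$'', so you are entitled to this; but if one works with the weaker (and more common) hypothesis that the scalars only tend to $0$, the clean fix is to replace $\lambda_{\gamma,\gamma}$ by $\mu_\gamma=\sup_{\gamma'\succeq\gamma}|\lambda_{\gamma,\gamma'}|$, or simply to run the standard $\varepsilon$-argument: for $\gamma,\gamma'\succeq\gamma_0$ one has $f_\gamma(x)-f_{\gamma'}(x)\in\varepsilon B$, hence $f_\gamma(x)-f(x)\in\varepsilon B$ by closedness.
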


\begin{proof}
Consider $(f_{\gamma})_{(\gamma \in \Gamma)}$ a \mca net in $\mathcal{B}(E,F)$, \ie there is $(\lambda_{\gamma, \gamma'})\subset \RR$ decreasing towards $0$ and an equibounded ${B}$ in $\mathcal{B}(E,F)$ such that

$$ f_{\gamma} - f_{\gamma'} \in \lambda_{\gamma, \gamma'}{B}.$$ 

For all $ x \in E$, because $B(\{x\})$ is bounded in $F$, $(f_{\gamma} (x))_{\gamma \in \Gamma}$ is also a \mca net. Besides, $F$ is \mco, so each of these \mca nets converges towards $f(x) \in F$. Let us show that $f$ is bounded. Indeed, consider $b$ a closed bounded set in $E$, and $U$ a $0$-neighbourhood in $F$. As $ B$ is equibounded, there is $\lambda \in \CC$ such that $B(b) \subset \lambda U$. Consider $\gamma_0 \in \Gamma$ such that, if $ \gamma , \gamma'\geq \gamma_0$ then $| \lambda_{\gamma, \gamma'} | < \lambda$. Consider $\mu \in \CC$ such that $f_{\gamma_0} (b) \subset \mu U$. Then for all $ \gamma \geq \gamma_0$, $f_\gamma (b) \subseteq \mu U + \lambda U$. Thus $f(b)$ is in $(\lambda + \mu) \bar{U}$, thus in $3(\lambda + \mu) \bar{U}$. We proved that $f(b)$ is a bounded set, and so $f$ is \bornof.
\end{proof}

\section{A symmetric monoidal closed and cartesian category}
\label{sec:lin}
Let us write $\Lin$ for the category whose objects are \mco spaces, and whose morphisms are linear bounded maps. In this setting, the additives are interpreted by the product and the coproduct, while the multiplicative connectives are interpreted by the tensor product and its dual. The only tricky point is to find a good tensor product in our category : this is possible thanks to the Mackey-completion procedure.

\subsection{The (co)cartesian structure}

\paragraph*{Topological products and coproducts.}
The cartesian product of a countable family of \mco spaces is \mco when endowed with the product topology \cite[I.2.15]{KriMi}. Then, a subset is bounded if and only if it is bounded in each direction. The terminal object $\top$ is the $\{0\}$ \mco space.

\begin{defi}
The coproduct of a countable family of \mco spaces is \mco when endowed with the coproduct topology. The coproduct topology is the finest topology on $\bigoplus_i E_i$ for which the injections $ E_i \rightarrow \bigoplus_i E_i$ are continuous.  
\end{defi}

Then, $ B \subset \bigoplus_i E_i$ is bounded if and only if $\{ i\ |\ \exists x \in B \cap E_i \}$ is finite and if for every $i$, $B \cup E_i $ is bounded in $E_i$. The $\{0\}$ vector space is also the unit $0$ of the coproduct.

Notice that in the finite case, the product and the coproduct coincide algebraically and topologically. In the infinite case, the distinction between product and coproduct corresponds to the disctinction between  the space of complex sequences $\CC^{\NN} =  \prod_{n \in \NN} \CC$ and the space of complex finite sequences $\CC^{(\NN)}=\bigoplus_{n\in\NN}\CC$. In $\CC^{\NN}$ bounded sets are the one included in a product of disks, whereas in $\CC^{(\NN)}$ bounded sets are included in a finite product of disks. 

\paragraph*{Duality.}
The \bornof isomorphism $(\oplus_{i\in I}E_i)^\times =\prod_{i\in I}\Ec_i$ always holds. Indeed, the restriction to each $E_i$ of a morphism $f \in (\oplus_{i\in I}E_i)^\times$ gives a family $(f_i) \in  \prod_{i\in I}\Ec_i$. Conversely, any family $(f_i) \in  \prod_{i\in I}\Ec_i$ transforms into a sum $\sum_i f_i \in (\oplus_{i\in I}E_i)^\times$ which is pointwise convergent as it is applied to finite sequences of terms.  The dual isomorphism $(\prod_{i\in I}E_i)^\times =\oplus_{i\in I}\Ec_i$ holds only in certain cases, and in particular when $I$ is countable.

\begin{prop}
\label{prop:infiniteprod}
We have $(\prod_{i \in \NN} E_i)^{\times} = \oplus_{i \in \NN} \Ec_i$.
\end{prop}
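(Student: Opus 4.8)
The plan is to establish the equality as two inclusions and then note that the resulting linear bijection is a bornological isomorphism; the heart of the argument is showing that a bounded linear form on a countable product is supported on finitely many coordinates. For the easy inclusion $\oplus_{i\in\NN}\Ec_i \subseteq (\prod_{i\in\NN}E_i)^\times$ I would argue exactly as in the dual computation $(\oplus_i E_i)^\times = \prod_i \Ec_i$ recalled just above: to a finitely supported family $(f_i)_{i\in F}$, with $F\subset\NN$ finite, associate the form $\mathbf{x}\mapsto \sum_{i\in F} f_i(\pi_i(\mathbf{x}))$, where $\pi_i:\prod_k E_k\to E_i$ is the canonical projection. Each $\pi_i$ is continuous, hence bounded, so each $f_i\circ\pi_i$ is bounded and the finite sum is again bounded.

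For the substantial inclusion $(\prod_{i\in\NN}E_i)^\times \subseteq \oplus_{i\in\NN}\Ec_i$, let $j_i:E_i\to\prod_k E_k$ denote the canonical injection (placing an element in slot $i$ and zero elsewhere), which is continuous hence bounded. Given $f\in(\prod_i E_i)^\times$ I would set $f_i := f\circ j_i$, so that $f_i\in\Ec_i$ for every $i$, and first prove that the family $(f_i)_i$ has finite support. Suppose not: choose indices $i_1<i_2<\cdots$ with $f_{i_k}\neq 0$, and for each $k$ an element $z_k\in E_{i_k}$ with $f(j_{i_k}(z_k))=f_{i_k}(z_k)=k$ (possible after rescaling, since $f_{i_k}\neq0$). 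The set $B=\{\,j_{i_k}(z_k)\ |\ k\in\NN\,\}$ is bounded in $\prod_i E_i$: boundedness in the product is coordinatewise, and in each coordinate $B$ takes at most the two values $0$ and $z_k$, hence is bounded there. Yet $f(B)\supseteq\{k\ |\ k\in\NN\}$ is unbounded in $\CC$, contradicting that $f$ is bounded. So only finitely many $f_i$, say those with $i\in F$, are nonzero.

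It then remains to check that $f$ is recovered from $(f_i)_{i\in F}$, i.e. that $h:=f-\sum_{i\in F}f_i\circ\pi_i$ is the zero form. By construction $h\circ j_i=0$ for every $i$, so $h$ vanishes on the algebraic direct sum $\oplus_i E_i$, in particular on every truncation $\mathbf{x}^{(n)}=(x_0,\dots,x_n,0,0,\dots)$ of a given $\mathbf{x}=(x_i)\in\prod_i E_i$. The key point is that $\mathbf{x}^{(n)}$ Mackey-converges to $\mathbf{x}$: writing $d_n=\mathbf{x}-\mathbf{x}^{(n)}$ and fixing scalars $\lambda_n\to0$, the set $b=\{\,\lambda_n^{-1}d_n\ |\ n\in\NN\,\}$ is bounded in $\prod_i E_i$, because in each coordinate $k$ only the finitely many terms with $n<k$ are nonzero, and $d_n\in\lambda_n b$ by construction. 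Since $h$ is bounded it preserves Mackey-convergence by Proposition~\ref{bounded_Mackey}, whence $h(\mathbf{x})=\lim_n h(\mathbf{x}^{(n)})=0$. Therefore $f=\sum_{i\in F}f_i\circ\pi_i$ lies in $\oplus_i\Ec_i$.

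Finally I would observe that the correspondence $(f_i)\leftrightarrow f$ is linear and, by the two inclusions, bijective, and check that it is bounded in both directions for the relevant structures (uniform convergence on bounded sets on the dual side, the coproduct bornology on the other), which is routine. The main obstacle is the recovery step: because $f$ is only bounded and not continuous, one cannot directly invoke density of $\oplus_i E_i$ in $\prod_i E_i$ and must instead exhibit the explicit Mackey-convergent truncations. This is also precisely where countability of the index set is used — it is the coordinatewise finiteness ($n<k$) that makes $b$ bounded, and the argument has no analogue for a net of finite truncations over an uncountable index set.
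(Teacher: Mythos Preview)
Your argument is correct and follows the same overall architecture as the paper: the easy inclusion via finite sums of projections, then finite support of the $f_i$ by exhibiting a bounded set $B\subset\prod_i E_i$ on which $f$ is unbounded. The one substantive divergence is in the recovery step $f=\sum_{i\in F} f_i\circ\pi_i$. The paper argues by a ``maximal index'' contradiction: assuming $g=f-\sum_i f_i$ is nonzero, it tries to pick a largest $i$ for which $g$ is nonzero on the tail $\{0\}^i\times\prod_{k\ge i}E_k$ and then shifts one step further. Your route --- showing that finite truncations $\mathbf{x}^{(n)}$ Mackey-converge to $\mathbf{x}$ and invoking Proposition~\ref{bounded_Mackey} --- is a genuinely different mechanism: it replaces the index-shifting trick by the convergence of a concrete Mackey-Cauchy sequence in the product.

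What your approach buys is transparency: it uses precisely the tool (preservation of Mackey-convergence by bounded linear maps) that the bornological framework is designed around, and it makes visible exactly where countability enters --- namely in the coordinatewise finiteness that forces $b=\{\lambda_n^{-1}d_n\}$ to be bounded. This dovetails with the Mackey--Ulam remark immediately following the proposition. The paper's argument is terser and avoids constructing an auxiliary bounded set, but your version is more self-contained and arguably more robust.
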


\begin{proof}
Let us first consider $ h \in  \oplus_{i \in \NN} \Ec_i$, we can define $h_i\in\Ec_i$ the $i$th components of $h$, so that $h=\sum_{i \in \NN}h_i$. As a finite sum,  $h \in (\prod_{i \in \NN} E_i)^{\times}$.

Now, consider $f \in (\prod_{i \in \NN} E_i)^{\times}$ and let us write $f_i: E_i \rightarrow \mathbb{C}$ for $f_{\mid\{0\} \times \dots \{0\} \times E_i\times \{0\} \times\dots}$, that is the restriction of $f$ to $E_i$. $f_i$ is \bornof. Let us show that there is only a finite number of $i$ such that $f_i$ is not null. Indeed, if this is not the case, there is a non decreasing sequence $(i_k) \in \NN^{\mathbb{N}}$ and for any $k \in \mathbb{N}$, $x_k \in E_{i_k}$ such that $f(0,\dots,0,x_k,0,\dots)=f_{i_k} (x_k) > k$. Remark that the set  $\{(0,\dots,0, x_k,0,\dots)\ |\ k \in \mathbb{N} \}$ is bounded in $\prod_{i \in \NN} E_i$, since $f$ is \bornof, we get a contradiction.

Let $h= \sum_{i \in \NN}  f_i$. We have just proved that $h \in \oplus_{i \in \NN} \Ec_i$, so that $h$ is \bornof as a finite sum of \bornof functions. Notice that $h=\sum_{i \in \NN} f_i \in (\prod_{i \in \NN} E_i)^{\times}$. Let us now show that $g = f-h$ is null. Remark that for any $ i\in \NN$,  the restriction of $g$ to $E_i$ is null. Suppose that $g \neq 0$. There is $x \in \prod_{i \in \NN} E_i$ such that $g(x)=0$. Consider $i$ maximal such that there is $x \in \{0\} \times \dots \{0\} \times \prod_{k \geq i} E_k$ such that $g(x) = 0$. Then $g(x) = g_i (x_i) + g_{|   \prod_{k > i} E_k} ((x_k)_{k > i})$. As $g_i (x_i)= 0$ we have $g_{|   \prod_{k > i} E_k} ((x_k)_{k > i}) =0$, and thus $g(0, \dots, 0, x_{i+1}, x_{i+2}, \dots ) = 0$. This contradicts the maximality of $i$. 
Then $g=0$, and $f = h \in  \oplus_{i \in \NN} \Ec_i$. 
\end{proof}

There is a generalization of this proposition. Thanks to the Mackey-Ulam theorem~\cite{Ula}, when the cardinal $I$ indexing the family is not strongly inaccessible, then the \bornof dual of the product is the coproduct of the \bornof duals.

\subsection{The monoidal structure}
\label{lin_mono}

\begin{defi}
The \emph{\bornof tensor product}~\cite[I.5.7]{KriMi} $E\otimes_\beta F$ is the algebraic tensor product with the finest locally convex topology such that $E\times F\rightarrow E\otimes F$ is bounded. The \emph{complete} \bornof tensor product $E\mctens F$ is the Mackey-completion of $E\otimes_\beta F$. The tensor product is associative.
\end{defi}

The bounded sets associated with this topology are generated by $b_E\otimes b_F$ for $b_E$ and $b_F$ respectively bounded in $E$ and $F$.  The unit $\unit$ is the base field $\CC$ endowed with its usual topology.

\begin{defi}
The space of linear bounded functions  $\linb(E,F)$ is  endowed with the \emph{bounded open} topology, generated by $\W(b,V)=\{f\in\linb(E,F)\,|\, f(b)\subset V\}$ where $b$ is bounded in $E$ and $V$ is open in $F$.
\end{defi}

The associated bornology is generated by the \emph{equibounded} sets, that is the $B\subset\linb(E,F)$ such that for any bounded $b$ in $E$, $B(b)$ is bounded in $F$. Indeed, consider $B\subset\linb(E,F)$ bounded for the topology of uniform convergence on bounded set. Consider $b \subset E$ a bounded set and $V \subset F$ a $0$-neighbourhood in $F$. As $B$ is bounded, there is $\lambda \in \CC$ such that $B \subset \lambda \W(b,V)$, that is $B(b) \subset \lambda V$. Thus $B(b)$ is bounded in $F$. Conversely, consider $B\subset\linb(E,F)$ an equibounded set, $b$ a bounded in $E$ and $V \subset F$ a $0$-neighbourhood in $F$. Then there is $\lambda \in \CC$ such that $B(b) \subset \lambda V$, that is  $B \subset \lambda \W(b,V)$. Thus $B$ is bounded in $\linb(E,F)$.

\begin{prop}
\label{lin_mco}
Let $E$ and $F$ be \lct. If $F$ is \mco, then so is $\linb(E,F)$.
\end{prop}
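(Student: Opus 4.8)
The plan is to realise $\linb(E,F)$ as a linear subspace of the space $\mathcal{B}(E,F)$ of (not necessarily linear) bounded maps, and then to transport Mackey-completeness from the latter, which was already established in Proposition~\ref{prop:B(E,F)mco}.

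First I would check that $\linb(E,F)$ sits inside $\mathcal{B}(E,F)$ not merely set-theoretically but topologically and bornologically. The bounded open topology generated by the sets $\W(b,V)$ is exactly the trace on $\linb(E,F)$ of the topology of uniform convergence on bounded sets carried by $\mathcal{B}(E,F)$, and, as recorded just above the statement, the bounded (equibounded) sets of $\linb(E,F)$ are likewise the traces of the bounded sets of $\mathcal{B}(E,F)$. Consequently, a net in $\linb(E,F)$ is Mackey-Cauchy there if and only if it is Mackey-Cauchy in $\mathcal{B}(E,F)$, and convergence in $\linb(E,F)$ agrees with convergence in $\mathcal{B}(E,F)$ for nets lying in the subspace.

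Next, take a Mackey-Cauchy net $(f_{\gamma})_{\gamma\in\Gamma}$ in $\linb(E,F)$. By the previous point it is Mackey-Cauchy in $\mathcal{B}(E,F)$, so Proposition~\ref{prop:B(E,F)mco} yields a bounded map $f\in\mathcal{B}(E,F)$ to which it converges; moreover, since uniform convergence on bounded sets implies convergence at each singleton $\{x\}$, we have $f(x)=\lim_{\gamma} f_{\gamma}(x)$ for every $x\in E$. It then remains to see that $f$ is linear, which is the only genuinely new point. Linearity passes to the pointwise limit because the vector space operations of $F$ are continuous: for $x,y\in E$ and $\lambda,\mu\in\CC$,
$$f(\lambda x+\mu y)=\lim_{\gamma} f_{\gamma}(\lambda x+\mu y)=\lim_{\gamma}\bigl(\lambda f_{\gamma}(x)+\mu f_{\gamma}(y)\bigr)=\lambda f(x)+\mu f(y),$$
using that each $f_{\gamma}$ is linear and that addition and scalar multiplication commute with limits. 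Hence $f\in\linb(E,F)$.

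Finally, since the topology on $\linb(E,F)$ is the trace of that on $\mathcal{B}(E,F)$, the net $(f_{\gamma})$ converges to $f$ inside $\linb(E,F)$, which shows that every Mackey-Cauchy net of $\linb(E,F)$ converges, i.e. that $\linb(E,F)$ is \mco. The only place requiring care is the identification of the subspace topology and bornology of $\linb(E,F)$ with those inherited from $\mathcal{B}(E,F)$ — equivalently, the observation that a Mackey-Cauchy net for the bounded open topology is the same thing as a Mackey-Cauchy net for uniform convergence on bounded sets. Once this identification is granted, the whole argument reduces to Proposition~\ref{prop:B(E,F)mco} together with the trivial stability of linearity under pointwise limits.
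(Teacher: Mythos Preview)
Your argument is correct and follows essentially the same approach as the paper: reduce to Proposition~\ref{prop:B(E,F)mco} and then observe that the pointwise limit of linear maps is linear. The paper's proof is a two-sentence sketch of exactly this; your version spells out more carefully the identification of the topology and bornology of $\linb(E,F)$ with those inherited from $\mathcal{B}(E,F)$, which is a fair point to make explicit.
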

\begin{proof}
  Thanks to Proposition~\ref{prop:B(E,F)mco}, a Mackey-cauchy net in $\linb(E,F)$ converges into a bounded mapping from $E$ to $F$. Moreover, the limit of a net of linear functions is also linear.
\end{proof}

Let $E,F,G$ be locally convex spaces. Endowed with the bounded open topology, the space of \emph{bounded bilinear mappings}, denoted as $\linb(E,F;G)$, is locally convex. 

\begin{theorem}
The bornological tensor product  is the solution of the universal problem of linearizing bounded bilinear mappings. More precisely, for any $h\in\linb(E,F;G)$, there is a unique $h_\beta\in\linb(E\otimes_\beta F,G)$ such that
\begin{equation*}
  \xymatrix{
    E\times F \ar[d]^h \ar[r]& E\otimes_\beta F \ar@{-->}[dl]^{h_\beta}\\
    G&
  }
\end{equation*}
\end{theorem}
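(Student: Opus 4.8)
The plan is to prove this as a standard universal property for a tensor product, following the classical pattern but carefully tracking the \emph{bounded} (rather than continuous) structure. The statement asserts two things: existence of a bounded linear factorization $h_\beta$ and its uniqueness. I would first dispatch uniqueness, since it is purely algebraic: the image of the canonical map $\otimes : E\times F\rightarrow E\otimes_\beta F$ spans $E\otimes_\beta F$ algebraically (elementary tensors generate the algebraic tensor product), and any linear map is determined by its values on a spanning set. Hence if $h_\beta$ and $h_\beta'$ both satisfy $h_\beta\circ\otimes = h = h_\beta'\circ\otimes$, they agree on every elementary tensor $x\otimes y$, hence on their span, hence everywhere. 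So uniqueness is immediate and needs no topology.

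For existence, I would split the construction into an algebraic step and a boundedness step. Algebraically, since $h\in\linb(E,F;G)$ is bilinear, the universal property of the \emph{algebraic} tensor product yields a unique linear map $h_\beta : E\otimes F\rightarrow G$ with $h_\beta(x\otimes y)=h(x,y)$ for all $x\in E,y\in F$; this is the usual factorization of bilinear maps through $\otimes$ and requires nothing about the topology. The content of the theorem is therefore entirely that this algebraic $h_\beta$ is \emph{bounded} for the topology defining $E\otimes_\beta F$, namely the finest locally convex topology making $\otimes : E\times F\rightarrow E\otimes_\beta F$ bounded.

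The boundedness step is where I would focus, and I expect it to be the main (though still routine) obstacle. Here I would invoke the very definition of $E\otimes_\beta F$ recalled in the excerpt, together with the characterization of its bounded sets: by the definition of the \bornof tensor product, the bounded sets of $E\otimes_\beta F$ are generated by the sets $b_E\otimes b_F$ with $b_E$ bounded in $E$ and $b_F$ bounded in $F$. To show $h_\beta$ is bounded it suffices to show it sends each generating bounded set $b_E\otimes b_F$ to a bounded set of $G$. But $h_\beta(b_E\otimes b_F)\subseteq h(b_E\times b_F)$ after taking absolutely convex hulls, and $h$ is a bounded bilinear map, so $h(b_E\times b_F)$ is bounded in $G$; since a bounded linear image of a generating disk is absorbed appropriately, $h_\beta$ maps bounded sets to bounded sets. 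Concretely, the cleanest route is to use the universal property that defines the topology of $E\otimes_\beta F$: the map $\otimes$ is bounded by construction, the composite $h_\beta\circ\otimes=h$ is bounded (being bilinear bounded), and $E\otimes_\beta F$ carries the finest locally convex topology rendering $\otimes$ bounded, so $h_\beta$ is automatically bounded as a linear map out of $E\otimes_\beta F$.

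The subtlety to watch, and the only place requiring genuine care, is the distinction between boundedness and continuity: one cannot simply quote the classical continuous universal property of the projective tensor product. Instead I would argue directly at the level of bornologies, using Lemma~\ref{scalbound} if needed to reduce boundedness of $h_\beta(b_E\otimes b_F)$ to scalar boundedness, or more simply using that the bounded bilinear map $h$ sends products of bounded sets to bounded sets and that these products generate the bornology of $E\otimes_\beta F$. Once boundedness is established, the triangle $h=h_\beta\circ\otimes$ holds by the algebraic construction, and together with the already-proved uniqueness this completes the proof.
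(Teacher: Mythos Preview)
Your proposal is correct and follows essentially the same approach as the paper: define $h_\beta$ via the algebraic universal property of the tensor product, observe that it is linear, and then verify boundedness using the fact that the bornology of $E\otimes_\beta F$ is generated by the sets $b_E\otimes b_F$; uniqueness is the algebraic uniqueness. The paper's proof is much terser (it simply asserts ``we see that $h_\beta$ is linear and \bornof'' and cites the algebraic universal property for uniqueness), so your write-up is more explicit but not a different argument.
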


\begin{proof}
Consider $E,F,G$ and $h$ as in the proposition. Let us define $h_{\beta}: x \otimes y \mapsto h(x,y)$. we see that $h_{\beta}$ is linear and \bornof. The uniqueness of $h_\beta$ follows from the universal property of $E \otimes F$ in the category of vector spaces and linear map.  
\end{proof}

If moreover  $G$ is \mco, then so is $\linb(E,F;G)$ (for the same reason as in the proof of Proposition~\ref{lin_mco}).  Then the universal property diagram can be extended through the Mackey-completion universal property, for any $h\in\linb(E,F;G)$, there is a unique $\hat h\in\linb(E\mctens F,G)$ such that
\begin{equation*}
  \xymatrix{
    E\times F \ar[d]^h \ar[r]& E\otimes_\beta F \ar@{-->}[dl]_{h_\beta}\ar[r]& E\mctens F \ar@{-->}[dll]^{\hat h}\\
    G&
  }
\end{equation*}

\begin{prop}
\label{prop:iso_tens_bilin}
 $E\otimes_\beta -$ is left adjoint to $\linb(E,-)$, \ie for any locally convex spaces $E,F$ and $G$, there are natural isomorphism
\begin{equation*}
  \Lin(E,\linb(F,G)) \simeq \linb(E,F;G) \simeq \Lin(E\otimes_\beta F,G) .
\end{equation*}
 This property extends to the complete case by the universal property of the Mackey-completion. If $E$, $F$ and $G$ are \mco, then 
\begin{equation*}
  \Lin(E,\linb(F,G)) \simeq \linb(E,F;G) \simeq \Lin(E\mctens F,G) .
\end{equation*}
\end{prop}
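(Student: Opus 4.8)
The plan is to construct both isomorphisms explicitly as set-bijections and then, in each case, to upgrade them to \bornof linear isomorphisms by checking that they match the equibounded sets on either side; naturality will be read off the formulas. The second isomorphism $\linb(E,F;G)\simeq\Lin(E\otimes_\beta F,G)$ is already essentially the universal property theorem established just above: the assignment $h\mapsto h_\beta$ is a linear bijection, so it only remains to see that it is bounded both ways. Since the bounded sets of $E\otimes_\beta F$ are generated by the $b_E\otimes b_F$ with $b_E,b_F$ bounded, a set $B\subseteq\linb(E,F;G)$ is equibounded exactly when $\{h_\beta\mid h\in B\}$ is equibounded in $\linb(E\otimes_\beta F,G)$, because both conditions amount to the boundedness of $\{h(b_E,b_F)\mid h\in B\}$ in $G$ for all bounded $b_E,b_F$.

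For the first isomorphism I would use currying: send $\Phi\in\linb(E,\linb(F,G))$ to the map $h_\Phi\colon(x,y)\mapsto\Phi(x)(y)$, with inverse sending $h\in\linb(E,F;G)$ to $\Phi_h\colon x\mapsto h(x,-)$. First I would check that these land in the correct spaces. The map $h_\Phi$ is bilinear because each $\Phi(x)$ is linear, and it is \bornof since $\Phi$ bounded makes $\Phi(b_E)$ equibounded, whence $h_\Phi(b_E\times b_F)=\Phi(b_E)(b_F)$ is bounded in $G$. Conversely $h(x,-)$ is \bornof because singletons are bounded, and $\Phi_h$ is \bornof because $\Phi_h(b_E)(b_F)=h(b_E\times b_F)$ is bounded. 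These two assignments are visibly linear and mutually inverse. The bornological-isomorphism claim then reduces to unwinding the definition of equibounded set: $B\subseteq\linb(E,\linb(F,G))$ is equibounded iff $B(b_E)$ is equibounded in $\linb(F,G)$ for every bounded $b_E$, iff $B(b_E)(b_F)$ is bounded for all bounded $b_E,b_F$, which is precisely equiboundedness of the corresponding family of bilinear maps. Naturality in $E,F,G$ of both isomorphisms is immediate from the explicit formulas, so I would only indicate it.

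For the \mco version, note first that $\linb(F,G)$ is \mco whenever $G$ is (Proposition~\ref{lin_mco}), so all the spaces involved genuinely lie in $\Lin$. The only additional ingredient is to pass from $E\otimes_\beta F$ to its Mackey-completion: by the universal property of the Mackey-completion (Proposition~\ref{mcompletion}), every \bornof linear map $E\otimes_\beta F\to G$ with $G$ \mco extends uniquely to a \bornof linear map $E\mctens F\to G$, and this extension is itself a bornological isomorphism $\Lin(E\otimes_\beta F,G)\simeq\Lin(E\mctens F,G)$, exactly as recorded in the extended universal-property diagram above. Composing this with the two isomorphisms already obtained gives the complete statement.

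The step requiring the most care, though not genuinely hard, is the systematic identification of equibounded sets across each bijection: it is here that the bounded-open topologies and the fact that the bounded sets of $E\otimes_\beta F$ are generated by the $b_E\otimes b_F$ are used essentially, and it is what promotes the bare set-bijections to the claimed bornological equivalences $\simeq$ rather than mere linear isomorphisms of underlying vector spaces.
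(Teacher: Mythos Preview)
Your proof is correct and follows essentially the same route as the paper: the paper also obtains $\linb(E,F;G)\simeq\Lin(E\otimes_\beta F,G)$ from the universal property of the bornological tensor product, defines the currying bijection $f\mapsto(x,y\mapsto f(x)(y))$ with inverse $g\mapsto(x\mapsto(y\mapsto g(x,y)))$, and checks boundedness both ways via the characterization of bounded sets as equibounded sets under the bounded-open topology. Your write-up simply spells out the equibounded-set verification in more detail than the paper does.
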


\begin{proof}
(See Kriegl and Michor~\cite[I.5.7]{KriMi}) The bijection $\Lin(E\otimes_\beta F,G) \simeq \linb(E,F;G)$ follows from the preceding Theorem. As $\linb(E,F;G)$ bears the topology of uniform convergence on products of bounded sets, the bijection and its inverse are bounded isomorphisms. The morphism $ f \mapsto ( x,y \mapsto f(x)(y))$ is a bijection from  $\Lin(E,\linb(F,G))$ to $\linb(E,F;G)$ which inverse if $g \mapsto (x \mapsto (y \mapsto g(x,y)))$. Both are bounded. These bijections are natural in every elements $E$, $F$ and $G$.
\end{proof}

The next theorem follows from the symmetry and the associativity of the tensor product, and from Propositions \ref{lin_mco} and \ref{prop:iso_tens_bilin}:
\begin{theorem}
The category $\Lin$ of \mco spaces endowed with the Mackey-completed tensor product $\mctens$ is symmetric monoidal closed.  
\end{theorem}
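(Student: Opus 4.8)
The plan is to assemble the statement from the structural pieces already established, since the real content lies in the preceding propositions and the theorem need only verify that the symmetric monoidal closed axioms hold coherently.

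First I would fix the data of the monoidal structure: the bifunctor is $\mctens$ with unit $\unit = \CC$, the associativity constraint comes from the stated associativity of the tensor product (the definition already asserts $E \mctens (F \mctens G) \simeq (E \mctens F) \mctens G$), the left and right unitors come from the bounded isomorphisms $\CC \mctens E \simeq E$ and $E \mctens \CC \simeq E$ (the unit is the base field), and the symmetry $E \mctens F \simeq F \mctens E$ descends from the evident swap on the algebraic tensor product, which is bounded and hence extends through the Mackey-completion by the universal property of Proposition~\ref{mcompletion}. Each of these isomorphisms exists on the bornological tensor product $\otimes_\beta$ first and then lifts to $\mctens$ because a bounded linear map out of $E \otimes_\beta F$ factors uniquely through $E \mctens F$ into any \mco target, so I would construct all constraints at the $\otimes_\beta$ level and transport them.

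Next I would record closure. Proposition~\ref{prop:iso_tens_bilin} gives the adjunction $E \mctens - \dashv \linb(E,-)$ on the full subcategory of \mco spaces, and Proposition~\ref{lin_mco} guarantees that $\linb(E,F)$ is again \mco whenever $F$ is, so the internal hom $\linb(E,-)$ genuinely lands in $\Lin$ and the adjunction is internal to the category. This is precisely the closed structure, and the naturality of the isomorphisms is exactly what Proposition~\ref{prop:iso_tens_bilin} asserts.

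The only genuine verification left is coherence: the pentagon identity for associativity and the triangle identity relating the associator to the unitors, plus the hexagon identities for the symmetry. My expectation is that these do \emph{not} require fresh analysis, since all the constraints are the canonical ones induced from the underlying algebraic tensor product of vector spaces, where the coherence diagrams are classical; the Mackey-completion functor is monoidal and faithful enough that the diagrams, being commutative on a spanning set of elementary tensors, commute after extension by uniqueness of the bounded extension. The main obstacle, therefore, is not any hard estimate but the bookkeeping of checking that the constraints built on $\otimes_\beta$ remain coherent after Mackey-completion — in practice one invokes that two bounded linear maps out of $E \mctens F \mctens G$ agreeing on pure tensors agree everywhere, reducing each coherence diagram to its already-known algebraic counterpart. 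I would phrase the proof as: the constraints are inherited from the vector-space tensor product, coherence holds on elementary tensors where it is standard, and extends by the density/uniqueness property of the Mackey-completion.
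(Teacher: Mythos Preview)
Your proposal is correct and follows essentially the same approach as the paper. The paper's proof is in fact a single sentence citing the symmetry and associativity of the tensor product together with Propositions~\ref{lin_mco} and~\ref{prop:iso_tens_bilin}; your write-up simply unpacks what that citation entails, including the coherence bookkeeping that the paper leaves implicit.
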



\section{$\Smo$ maps in topological vector spaces}
\label{sec:smooth}

\mco spaces are already at the heart of a model of the differential extension of the Intuitionist Linear Logic~\cite{BET10}, inspired by the work of Frölicher, Kriegl and Michor~\cite{FroKri,KriMi}. In this model, spaces are interpreted as \mco \emph{bornological} spaces, \ie spaces such that topologies and bornologies are mutually induced. Non-linear proofs are interpreted by smooth maps. 

Actually, the bornological of \cite{FroKri, BET10} condition can be released as in~\cite{KriMi}. In particular, the characterization of open sets as bornivorous sets is never used. So that, we can get rid of this condition as in the overview below. Nevertheless, constructions such as tensor product or exponential use Mackey-completion and hence give rise to bornological spaces.  

In this paragraph,  we use $\CC$ as base field instead of $\RR$ which was used in~\cite{BET10} and in the first chapter of~\cite{KriMi}.
However, as underlined in~\cite[II.7.1]{KriMi}, any complex locally convex space can be seen as a real convex space endowed with a linear complex structure $J:E\rightarrow E$ defined by $J(x)=i\,x$ and the complex scalar multiplication is then given by $(\lambda+i\,\mu)\,x =\lambda\,x+\mu\,J(x)$. The only adaptation consists in replacing absolutely convex sets by $\CC$-absolutely convex ones. Moreover, a $\CC$-linear functional $l$ is characterized by its real part $\mathrm{Re}\circ l$, since $l(x)=(\mathrm{Re}\circ l)(x)+i (\mathrm{Re}\circ l)(J(x))$. Thus, considerations on smooth curves as well as concepts used in~\cite{KriMi,BET10} still hold in the complex setting.

\subsection{Smooth curves and smooth maps}
\label{subsec:smooth_maps}

Let $E$ be a \mco space. As in any a topological space and for any curve   $c:\RR\rightarrow E$, the derivative can be defined as usual:
\begin{equation*}
  c'(t)=\lim_{s\to 0} \frac{c(t+s)-c(t)}{s}.
\end{equation*}   
Then such a  curve is \emph{smooth} whenever it is infinitely derivable. Let us write $\cC_E$ for the set of smooth curves into $E$.
It is endowed with the topology of uniform convergence on bounded sets of each derivative separately. 

\begin{prop}
$E$ is \mco if and only if so is $\cC_E$.
\end{prop}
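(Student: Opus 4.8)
The plan is to prove the two implications separately: the direction ``$\cC_E$ \mco $\Rightarrow$ $E$ \mco'' by a short retract argument, and the converse by a termwise passage to the limit under differentiation, which is where all the work lies.

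For the easy direction, observe that the constant--curve map $\iota : E \rightarrow \cC_E$, $\iota(x)(t)=x$, and the evaluation $\mathrm{ev}_0 : \cC_E \rightarrow E$, $c\mapsto c(0)$, are both linear and bounded and satisfy $\mathrm{ev}_0\circ\iota=\mathrm{id}_E$, so $E$ is a bounded retract of $\cC_E$. Assuming $\cC_E$ \mco, I would take a \mca net $(x_\gamma)$ in $E$; by Proposition~\ref{bounded_Mackey} the net $(\iota(x_\gamma))$ is \mca in $\cC_E$, hence converges to some $c\in\cC_E$, and applying $\mathrm{ev}_0$ (which preserves Mackey-convergence, again by Proposition~\ref{bounded_Mackey}) gives $x_\gamma\to c(0)$ in $E$. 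Thus $E$ is \mco.

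For the converse, assume $E$ is \mco and let $(c_\gamma)$ be a \mca net in $\cC_E$, witnessed by scalars $\lambda_{\gamma,\gamma'}\to 0$ and a bounded disk $\cB\subset\cC_E$. First I would unwind the bornology of $\cC_E$: a set is bounded precisely when, for every $k\in\NN$ and every bounded $B\subset\RR$, the set $b_{k,B}=\{c^{(k)}(t)\mid c\in\cB,\ t\in B\}$ is bounded in $E$ (exactly as the equibounded characterisation used earlier for $\linb(E,F)$). Hence, for fixed $k,t$, the differences $c_\gamma^{(k)}(t)-c_{\gamma'}^{(k)}(t)\in\lambda_{\gamma,\gamma'}b_{k,B}$ show that $(c_\gamma^{(k)}(t))_\gamma$ is \mca in $E$, so it converges to a limit $c^{[k]}(t)$ since $E$ is \mco. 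Because one net of scalars controls all derivatives and all points at once, the convergence $c_\gamma^{(k)}\to c^{[k]}$ is Mackey-uniform on each bounded $B$: there is $\mu_\gamma\to 0$ with $c^{[k]}(t)-c_\gamma^{(k)}(t)\in\mu_\gamma\,\overline{b_{k,B}}\subset 3\mu_\gamma\,b_{k,B}$ for all $t\in B$. Each $c^{[k]}$ is then continuous from $\RR$ to $E$ by the three-term splitting $c^{[k]}(t)-c^{[k]}(t_0)=(c^{[k]}(t)-c_\gamma^{(k)}(t))+(c_\gamma^{(k)}(t)-c_\gamma^{(k)}(t_0))+(c_\gamma^{(k)}(t_0)-c^{[k]}(t_0))$, where the outer terms are uniformly small by Mackey-uniform convergence and the middle one is small near $t_0$ by continuity of the fixed smooth curve $c_\gamma^{(k)}$. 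To recognise the limit as smooth I would pass to the limit in the integrated fundamental theorem of calculus
\begin{equation*}
c_\gamma^{(k)}(t)-c_\gamma^{(k)}(a)=\int_a^t c_\gamma^{(k+1)}(s)\,ds,
\end{equation*}
obtaining $c^{[k]}(t)-c^{[k]}(a)=\int_a^t c^{[k+1]}(s)\,ds$ with $c^{[k+1]}$ continuous. Induction on $k$ then yields that $c:=c^{[0]}$ is smooth with $c^{(k)}=c^{[k]}$, so $c\in\cC_E$, and the Mackey-uniform convergence of every derivative on every bounded $B$ is exactly the statement $c_\gamma\to c$ in $\cC_E$; hence $\cC_E$ is \mco.

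The main obstacle is the interchange of limit and integral in the last step, together with the interplay between the topology of $E$ and the finer norm topologies of the Banach spaces $E_{b}$ in which the integrals actually live. The Riemann integral of a continuous curve exists because its values sit in some bounded disk $b$ and $E_b$ is complete by Proposition~\ref{E_b_mco}; one must check that $c^{[k+1]}$ is regular enough to integrate there and that the estimate survives integration, which it does since $b$ is absolutely convex, so the integral over $[a,t]$ of a curve in $\mu_\gamma\,\overline b$ stays in $(t-a)\mu_\gamma\,\overline b$. This is the single point where Mackey-completeness of $E$ is genuinely used, and it is the delicate technical heart of the argument.
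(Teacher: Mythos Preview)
Your argument is correct, but it takes a markedly different route from the paper for the direction ``$E$ \mco $\Rightarrow$ $\cC_E$ \mco''. The paper proceeds structurally: it embeds $\cC_E$ into $\prod_n \mathcal{B}(\RR,E)$ via $c\mapsto(c^{(n)})_n$, observes that the image is closed (this is \cite[I.3.5]{KriMi}), and concludes by the stability of Mackey-completeness under products and closed subspaces together with Proposition~\ref{prop:B(E,F)mco}. No limits are constructed by hand and no integration is needed.

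Your approach instead builds the limit curve explicitly, derivative by derivative, and then recognises smoothness via the fundamental theorem of calculus and an interchange of limit and integral. This works, and you correctly identify the only delicate point: making sense of $\int_a^t c^{[k+1]}(s)\,ds$ in $E$ and passing to the limit under it. The fix you sketch (values land in a fixed bounded disk, $E_b$ is Banach by Proposition~\ref{E_b_mco}, absolute convexity controls the integral of the difference) is the right one, though one must also check that $c^{[k+1]}$ is continuous into the normed space $E_b$, not merely into $E$; this follows from the same Mackey-uniform three-term splitting you already use, since all differences live in multiples of the single disk $b_{k+1,B}$. What you gain is a self-contained argument that does not invoke the closedness lemma from \cite{KriMi}; what the paper's proof gains is brevity and a clean reduction to the already-established stability results. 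For the easy direction, your retract argument and the paper's closed-subspace argument are essentially the same observation phrased two ways.
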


\begin{proof}
(See~\cite[I.3.7]{KriMi}) Suppose that $E$ is complete. By considering the set of all its derivative, one can map a smooth curve on $E$ to an element of $\prod_n \mathcal{B} ( \RR, E)$. This mapping is bounded. The image of $\cC_E$ in $\prod_n \mathcal{B} ( \RR, E)$ is closed when the last one is endowed by the product topology (see Lemma \cite[I.3.5]{KriMi}), and $\prod_n \mathcal{B} ( \RR, E)$ is complete as $E$ is (see Propositions \ref{prop:stability_mco} and \ref{prop:B(E,F)mco}). Thus $\cC_E$ is \mco. 

Suppose now that $\cC_E$ is \mco. E can be identified with the closed subspace of $\cC_E$ given by the constant curves, and thus is \mco.
\end{proof}

A set of curves $C\subset\cC_E$ is bounded whenever each derivative is uniformly bounded on bounded subsets of $\RR$ (see~\cite[I.3.9]{KriMi}):
$$\forall i, \forall  b\subset\RR \text{ bounded},\ \exists b_E\text{ bounded in }E, \text{ such that } \{c^{(i)}(x)\ |\ c\in C,\ x\in b\}\subset b_E.$$

Let $\Cin(E,F)$ denote the space of smooth maps between $E$ and $F$, \ie functions $f:E\to F$ which preserve smooth curves: $\forall c\in\cC_E,\ f\circ c\in\cC_F$. This definition of smoothness is a generalization (see~\cite{Bom}) of the usual definition for finite dimension topological vector spaces. 

\begin{prop}
When $F$ is \mco, then $\Cin(E,F)$ is also \mco.
\end{prop}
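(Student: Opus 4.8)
The plan is to realize $\Cin(E,F)$ as a topologically closed subspace of a product of copies of $\cC_F$, in the same spirit as the preceding proposition embedded $\cC_E$ as a closed subspace of $\prod_n \mathcal{B}(\RR,E)$, and then to invoke the stability of Mackey-completeness under limits (Proposition~\ref{prop:stability_mco}). Recall that $\Cin(E,F)$ carries the initial topology with respect to the maps $c^{\ast}\colon f\mapsto f\circ c$ into $\cC_F$, for $c$ ranging over $\cC_E$; equivalently,
\[
  \Phi\colon \Cin(E,F)\longrightarrow \prod_{c\in\cC_E}\cC_F,\qquad \Phi(f)=(f\circ c)_{c\in\cC_E},
\]
is a topological embedding (it is injective since constant curves recover the pointwise values of $f$, and the initial topology is by definition the subspace topology inherited from the product).

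First I would check that the codomain is \mco. Since $F$ is \mco, the preceding proposition gives that $\cC_F$ is \mco as well; and an arbitrary product of \mco spaces is \mco because products are limits and Mackey-completeness is preserved by limits (Proposition~\ref{prop:stability_mco}). Hence $\prod_{c\in\cC_E}\cC_F$ is \mco.

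The heart of the argument is to identify the image $\Phi(\Cin(E,F))$ and to show it is closed. I claim that a family $(g_c)_{c\in\cC_E}$ lies in the image precisely when it is \emph{consistent on points}, i.e. $c(t)=d(s)$ implies $g_c(t)=g_d(s)$ for all $c,d\in\cC_E$ and $s,t\in\RR$. Indeed, given a consistent family, define $f\colon E\to F$ by $f(x)=g_{\kappa_x}(0)$ using the constant curve $\kappa_x\equiv x$; applying consistency to $\kappa_{c(t)}$ and $c$ gives $f(c(t))=g_c(t)$, so that $f\circ c=g_c\in\cC_F$ for every smooth curve $c$, whence $f$ preserves smooth curves and $\Phi(f)=(g_c)_c$. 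Conversely every $\Phi(f)$ is obviously consistent. Now each single constraint $g_c(t)=g_d(s)$ is the vanishing of the linear map $(h_e)_e\mapsto \mathrm{ev}_t(\pi_c(h))-\mathrm{ev}_s(\pi_d(h))$, which is continuous because the projections $\pi_c$ are continuous and evaluation $\mathrm{ev}_t\colon\cC_F\to F$ is continuous for the topology of uniform convergence on bounded sets (taking $i=0$ and $b=\{t\}$). Since $F$ is separated, each such constraint cuts out a closed subspace, and $\Phi(\Cin(E,F))$ is the intersection over all admissible $(c,t,d,s)$, hence closed.

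Finally, a topologically closed subspace of a \mco space is itself \mco: a Mackey-Cauchy net in the subspace is Mackey-Cauchy in the ambient space, so it converges there, and because a Mackey-converging net converges topologically, its limit remains in the closed subspace. Applying this to $\Phi(\Cin(E,F))\subset\prod_c\cC_F$ and transporting along the embedding $\Phi$ shows $\Cin(E,F)$ is \mco. The step I expect to be the main obstacle is the image characterisation: recognising that mere point-consistency—rather than an elaborate compatibility under smooth reparametrisations—already suffices both to reconstruct a genuine smooth map $f$ from the family and to present the image as an intersection of kernels of continuous linear maps. Verifying that $\Phi$ is an embedding for the initial topology and that the evaluations into $F$ are continuous is routine.
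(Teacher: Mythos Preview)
Your proof is correct and follows the same strategy as the paper (which attributes it to Kriegl and Michor~\cite[I.3.11]{KriMi}): embed $\Cin(E,F)$ into $\prod_{c\in\cC_E}\cC_F$ via $\Phi$, identify the image as a closed linear subspace, and invoke stability of Mackey-completeness under limits and closed subspaces. The one difference is precisely the point you flagged: the paper characterises the image by the reparametrisation condition $f_{c\circ g}=f_c\circ g$ for all $g\in\Cin(\RR,\RR)$, whereas you use pointwise consistency $c(t)=d(s)\Rightarrow g_c(t)=g_d(s)$. Both descriptions cut out the same set (your condition trivially implies theirs; conversely, applying theirs with the constant map $g\equiv t$ gives $f_{\kappa_{c(t)}}=f_c(t)$, which recovers point-consistency), and both present it as an intersection of kernels of continuous linear maps into a separated space. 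Your version has the advantage that the continuity of the defining linear functionals is immediate from continuity of point evaluations $\mathrm{ev}_t\colon\cC_F\to F$, while the paper's version requires knowing that precomposition by a fixed $g\in\Cin(\RR,\RR)$ is continuous on $\cC_F$.
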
 

\begin{proof}
We explain a proof of Kriegl and Michor~\cite[I.3.11]{KriMi}. By definition, $\Cin(E,F)$ bears the topology induced by the product topology.$ \prod_{c\in\cC_E} \cC_F$.  Moreover, smooth maps corresponds exactly to elements $(f_c)_c$ of $\prod_{c\in\cC_E} \cC_F$ such that for every $g \in \mathcal{C}^{\infty}(\RR, \RR)$, $f_{c \circ g} = f_c \circ g$. This set is closed for the product topology, so $\Cin(E,F)$ is \mco.
\end{proof}

A subset $B$ of $\Cin(E,F)$ is bounded whenever,   for any curve $c\in\cC_E$, its image $c^\ast(B)=\{f\circ c\ |\ c\in B\}$ is bounded in $\cC_F$.

There is a strong link between boundedness and smoothness. 
First, smoothness only depend on the bounded subsets (see~\cite[I.1.8]{KriMi}). So that, if two different topologies on $E$ induce the same bounded subsets, then the set of smooth curves into $E$ are identical.
Moreover, the space of bounded linear maps can be embedded in the space of smooth ones:
\begin{prop}{\cite[I.2.11]{KriMi}}
\label{bounded_lin_smooth}
The linear \bornof maps between $E$ and $F$ are exactly the smooth linear ones. 
\end{prop}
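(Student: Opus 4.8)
The plan is to prove the two inclusions separately, establishing the equivalence between ``\bornof and linear'' and ``smooth and linear''. The easy direction is that every \bornof linear map is smooth. Here I would fix $f : E \to F$ linear and \bornof together with a smooth curve $c \in \cC_E$, and show $f \circ c \in \cC_F$ by checking that it is infinitely differentiable with $(f\circ c)^{(k)} = f \circ c^{(k)}$. The key observation is that by linearity the difference quotient satisfies $\frac{(f\circ c)(t+s)-(f\circ c)(t)}{s} = f\!\left(\frac{c(t+s)-c(t)}{s}\right)$. Since $c$ is smooth its difference quotients Mackey-converge to $c'(t)$ as $s\to 0$, and because $f$ is \bornof it preserves Mackey-convergence by Proposition~\ref{bounded_Mackey}, so the left-hand side Mackey-converges to $f(c'(t))$. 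Hence $f\circ c$ is differentiable with derivative $f\circ c'$; as $c'$ is again a smooth curve, an induction on the order of derivation gives smoothness of $f \circ c$.

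For the converse, that a smooth linear $f$ is \bornof, I would argue by contraposition. If $f$ is not \bornof there is a bounded set $b \subset E$, which I take closed and absolutely convex, with $f(b)$ unbounded; by the scalar characterization of boundedness (Lemma~\ref{scalbound}) applied in $F$, there is $\ell \in \Fc$ and points $x_n \in b$ with $|\ell(f(x_n))|\to\infty$, and after passing to a subsequence I may assume $|\ell(f(x_n))| \geq 4^n$. Setting $z_n = 2^{-n} x_n \in 2^{-n} b$, the sequence $(z_n)$ converges fast to $0$ in the Mackey sense, since for every polynomial weight $p$ the points $p(n) z_n \in \frac{p(n)}{2^n} b$ stay bounded. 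The special curve lemma of Kriegl and Michor then produces a smooth curve $c \in \cC_E$ with $c(1/n) = z_n$ and $c(0)=0$; this is where Mackey-completeness of $E$ enters. Since $f$ is smooth, $f\circ c$ is a smooth, hence $C^1$, curve in $F$, so $f(z_n) = (f\circ c)(1/n)$ Mackey-converges to $(f\circ c)(0) = 0$. As $\ell \in \Fc$ preserves Mackey-convergence by Proposition~\ref{bounded_Mackey}, we get $\ell(f(z_n)) \to 0$; but $|\ell(f(z_n))| = 2^{-n}|\ell(f(x_n))| \geq 2^n \to \infty$, a contradiction. Thus $f$ is \bornof.

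The main obstacle is the special curve lemma: the construction of a single smooth curve passing through a prescribed fast Mackey-null sequence. This is the genuinely non-trivial input; it relies on Mackey-completeness together with a careful gluing and reparametrization of the affine pieces by smooth transition functions, and I would import it from Kriegl and Michor rather than reprove it. The remaining ingredients are standard facts about smooth curves in \mco spaces, namely that difference quotients of a smooth curve Mackey-converge to the derivative and that a $C^1$ curve approaches its value at $0$ in the Mackey sense (via the boundedness of the derivative on a compact interval).
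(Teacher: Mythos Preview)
The paper does not give its own proof of this proposition; it simply records the citation to \cite[I.2.11]{KriMi}. Your proposal reproduces the standard argument from that reference: the forward direction via preservation of Mackey-convergent difference quotients, and the converse via a fast-null sequence combined with the special curve lemma. The reasoning is correct.

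One minor inaccuracy worth flagging: you write that ``this is where Mackey-completeness of $E$ enters'', but the special curve lemma \cite[I.2.8]{KriMi} holds in an arbitrary \lct, with no completeness hypothesis; the infinite polygon through a fast-null sequence is smooth purely because of the fast decay, not because limits of any auxiliary nets need to exist. Accordingly, the equivalence ``linear \bornof $\Leftrightarrow$ linear smooth'' is valid for all \lct, not only for \mco ones, and your argument goes through unchanged if you simply delete that parenthetical remark.
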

This is not the case for continuity and boundedness. Indeed, a \bornof linear map has not to be continuous. Consider for example an infinite dimensional Banach space $B$, and the same space but endowed with its weak topology $B_w$. By Lemma \ref{scalbound}, the identity function $\mathrm{id} : B_w \rightarrow B$ is bounded. But as the weak topology is strictly coarser than the norm topology, $\mathrm{id}$ is not continuous. Though, any continuous linear map is bounded and so smooth. 

Notice that the bounded open topology of $\linb(E,F)$ coincide with the topology induced by $\Cin(E,F)$ (see~\cite[I.5.3]{KriMi}), so that $\linb(E,F)$ can be seen as a closed linear  subspace of $\Cin(E,F)$ (see~\cite[I.3.17]{KriMi}). 

\subsection{A model of Differential Linear Logic}
\label{subsec:DiLL}
 
One of the great interest of smooth maps as defined above is that they lead to a cartesian closed category~\cite[I.3.12]{KriMi}. Let $\Smo$ denote the cartesian closed category  of \emph{\mco spaces} and \emph{smooth maps}.  The authors of~\cite{BET10} show that between $\Lin$ and $\Smo$ can be defined a linear non-linear adjunction, thus defining a model of Intuitionistic Linear Logic. The exponential of this model carries a structure rich enough to interpret Intuitionistic Differential Linear Logic (Intuitionistic DiLL), thus giving a smooth interpretation to the syntactic differentiation of DiLL.

\paragraph{An adjunction between $\Lin$ and $\Smo$.}

Models of linear logic stems from a linear non-linear adjunction (see~\cite{Mell08} for an overview). This adjunction relates a category of spaces and linear maps, and a category of spaces and non-linear maps. One way of constructing such a  model of Linear Logic, is first to consider a monoidal closed category of linear proofs, while the other is the cartesian closed category of non-linear proofs. So as to get closer to the intuitions of DiLL,~\cite{BET10} construct a non-linear category of smooth maps, using the specific terminology of Fr\"olicher, Kriegl and Michor~\cite{FroKri, KriMi}. The adjunction stems from 
an exponential modality constructed thanks to basic tools of Distribution theory. Let us describe these constructions as they were introduced in~\cite{BET10}.


Let us introduce the \emph{Dirac delta distribution} $\delta$. For any \mco space $E$ and $x\in E$, we define 
\begin{equation*}
\delta :
\left\lbrace
\begin{split}
E &  \rightarrow \Cin(E, \CC)^\times \\
x & \mapsto \delta_x : f \mapsto f(x) 
\end{split}
\right.
\end{equation*}
$\delta$ is bounded and linear, and well defined~\cite[Lem.5.1]{BET10}.

The use of the Dirac delta function to construct the exponential can be explained. The goal in~\cite{BET10} is to construct a model of Intuitionist Linear Logic with a smooth interpretation of the non-linear proofs.  Smoothness must then be captured in the exponential $!E$. $\delta$ is typically the chameleon function in analysis, as it is smooth when applied to smooth function, bounded when applied to bounded functions, or analytic when applied to analytic functions. The construction of~\cite{BET10} is based on the fact that $\delta$ is smooth. In Section~\ref{sec:holo} we construct a function $\delta$ from $E$ to the dual of a space of power series, $\delta$ will be a power series too. 


In \cite{BET10}, the Dirac delta distributions are linearly independent (see~\cite[Lem.5.3]{BET10}). Hence, they form a basis of the linear span of the set $\delta(E)=\{\delta_x\ |\ x\in E\}$. The authors then consider the Mackey-closure of this linear subspace of $\Cin(E,\CC)^\times$ and get a \mco space that they denote $!E$. We will apply the same methods here.

Let $f\in\Lin(E,F)$ be a smooth map. Its exponential $!f\in\Lin(!E,!F)$ is defined on the set $\delta(E)$ by $$!f(\delta_x)=\delta_{f(x)}.$$ It is then extended to the linear span of $\delta(E)$ by linearity and to $!E$ by the universal property of the Mackey-completion.

The exponential functor $!$ enjoys a structure of comonad, which is defined on the Dirac delta distributions and then extended:
the counit $\epsilon$ is the natural transformation given by the linear map
$\epsilon_E\in\Lin(!E,E)$, defined by $\epsilon(\delta_x)=x$, 
the comultiplication $\rho$ has components $\rho_E\in\Lin(!E,!!E)$ given by
$\rho_E(\delta_x)=\delta_{\delta_x}$.

\begin{theorem}{\cite{BET10}}
  The cokleisli category of the comonad $!$ over  $\Lin$ is the category $\Smo$. In particular, for any \mco spaces $E$ and $F$,
  $$\Lin(!E,F)\simeq \Smo(E,F).$$
\end{theorem}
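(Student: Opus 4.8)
The plan is to prove the natural bijection $\Lin(!E,F)\simeq\Smo(E,F)$ and then upgrade it to an isomorphism between the cokleisli category of $!$ and $\Smo$; since both categories have the \mco spaces as objects, only the hom-sets together with composition and identities need to be matched. The heart of the argument is a universal property of the pair $(!E,\delta)$: for every \mco space $F$ and every $f\in\Smo(E,F)$ there is a \emph{unique} bounded linear map $\hat f\in\Lin(!E,F)$ with $\hat f\circ\delta=f$. Granting this, the bijection sends $g\in\Lin(!E,F)$ to $g\circ\delta$ and $f\in\Smo(E,F)$ to $\hat f$. The forward map lands in $\Smo(E,F)$ because $\delta:E\to{}!E$ is smooth (\cite[Lem.5.1]{BET10}) and $g$ is bounded linear, hence smooth by Proposition~\ref{bounded_lin_smooth}, so $g\circ\delta$ is smooth. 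For the backward map, I would define $\hat f$ on the linear span of $\delta(E)$ by $\hat f\bigl(\sum_i\lambda_i\delta_{x_i}\bigr)=\sum_i\lambda_i f(x_i)$; this is well defined because the Dirac distributions are linearly independent (\cite[Lem.5.3]{BET10}), so every element of the span has a unique such representation.

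The key estimate is the boundedness of $\hat f$ on the span, which I would extract from scalar boundedness (Lemma~\ref{scalbound}) rather than by a direct computation. For any $\ell\in\Fc$ the map $\ell\circ f$ is smooth, being the composite of a smooth map with a bounded linear one, so $\ell\circ f\in\Cin(E,\CC)$, and by construction $\ell\bigl(\hat f(\phi)\bigr)=\phi(\ell\circ f)$ for every $\phi$ in the span. Now if $B$ is a bounded subset of the span, it is bounded in $\Cin(E,\CC)^{\times}$ for the subspace bornology, hence equibounded as a set of bounded linear forms on $\Cin(E,\CC)$; evaluating at the single fixed point $\ell\circ f$ then yields a bounded subset $\{\phi(\ell\circ f)\mid\phi\in B\}$ of $\CC$. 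Thus $\ell(\hat f(B))$ is bounded for every $\ell\in\Fc$, and Lemma~\ref{scalbound} gives that $\hat f(B)$ is bounded in $F$, so $\hat f$ is \bornof.

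To pass from the span to all of $!E$, I would invoke the universal property of the Mackey-completion (Proposition~\ref{mcompletion}). Since $\Cin(E,\CC)^{\times}=\linb(\Cin(E,\CC),\CC)$ is \mco by Proposition~\ref{lin_mco}, its Mackey-closed subspace $!E$ is \mco; identifying $!E$ with the Mackey-completion of $\langle\delta(E)\rangle$ lets the bounded linear map $\hat f$ extend uniquely to $!E$. Uniqueness in the universal property follows because two bounded linear maps into the \mco space $F$ that agree on the span agree on its Mackey-closure $!E$: the locus of agreement is Mackey-closed and is preserved under Mackey-limits by Proposition~\ref{bounded_Mackey}.

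It remains to check that the two assignments are mutually inverse and that the correspondence is functorial. The composite $g\mapsto g\circ\delta\mapsto\widehat{g\circ\delta}$ returns $g$ by the uniqueness clause, and $f\mapsto\hat f\mapsto\hat f\circ\delta$ returns $f$ by construction; the comparison of composition and identities is then a computation on Dirac distributions, using $\epsilon(\delta_x)=x$, $\rho(\delta_x)=\delta_{\delta_x}$ and $!g(\delta_\phi)=\delta_{g(\phi)}$, which make the cokleisli composite $h\circ{}!g\circ\rho_E$ restrict along $\delta$ to the ordinary composite of smooth maps and $\epsilon_E$ restrict to $\mathrm{id}_E$. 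The main obstacle I anticipate is the identification of $!E$, defined as a Mackey-closure inside $\Cin(E,\CC)^{\times}$, with the abstract Mackey-completion of $\langle\delta(E)\rangle$ needed to apply Proposition~\ref{mcompletion}: because the Mackey-closure of a set is strictly larger than its Mackey-adherence in general, one cannot extend $\hat f$ simply by taking Mackey-limits once, and the identification must rely on the ambient space $\Cin(E,\CC)^{\times}$ being \mco so that all iterated Mackey-limits remain available inside $!E$.
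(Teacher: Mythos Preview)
Your proposal is correct and follows the standard approach. Note, however, that the paper does not give its own proof of this statement: it is cited verbatim from \cite{BET10}, and the surrounding section is explicitly an overview of that work. So strictly speaking there is no ``paper's own proof'' to compare against here.

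That said, your argument mirrors exactly the proof the paper \emph{does} give later for the analogous statement in the power-series setting (Theorem~\ref{adjunctioncomp}, $\series(E,F)\simeq\linb(!E,F)$): one direction is $g\mapsto g\circ\delta$, the other is $f\mapsto\hat f$ defined on Dirac distributions by $\hat f(\delta_x)=f(x)$ and extended by linearity and Mackey-completion. Your boundedness argument via scalar testing against $\ell\circ f\in\Cin(E,\CC)$ is slightly more explicit than what the paper writes there, but it is the same idea; and your treatment of the cokleisli structure (checking composition and identities on Dirac distributions via $\epsilon$, $\rho$, $!g$) is the expected one. The subtlety you flag about Mackey-closure versus Mackey-completion is real, but it is handled in the paper and in \cite{BET10} by working inside the ambient \mco space $\Cin(E,\CC)^{\times}$, exactly as you suggest.
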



\paragraph{A differential category.}
Working with smooth functions allows the author of~\cite{BET10} to introduce a notion of differentiation, which coincides with the usual notion. This makes $\Lin$, endowed with $!$, a differential category~\cite{BCS06}, and a model of the Intuitionistic part of Differential Linear Logic~\cite{Ehr11} (DiLL). 
Indeed, Differential Linear Logic differs from linear logic by a more symmetric exponential group, where the usual promotion rule is replaced by three new rules: co-weakening, co-dereliction, and co-contraction (see Figure~\ref{fig:exp_group}). Differential categories, and their co-kleisli counterpart,  the cartesian differential categories~\cite{BCS09}, are thought of as axiomatizing the structure necessary to perform differential calculus. Models of Differential Linear Logic are basically differential categories which are also models of differential calculus, and whose exponential is endowed with a bialgebraic structure.

\begin{figure}[!h]

\begin{itemize}

\item

The Exponential Group of Linear Logic :

\begin{minipage}[c]{0.45\linewidth}
\AxiomC{$\vdash\Gamma,?A,?A$}
\RightLabel{\small (contraction)}
\UnaryInfC{$\vdash\Gamma,?A$}
\DisplayProof
\end{minipage} \hfill
\begin{minipage}[c]{0.45\linewidth}
\AxiomC{$\vdash\Gamma$}
\RightLabel{\small (weakening)}
\UnaryInfC{$\vdash\Gamma,?A$}
\DisplayProof
\end{minipage}

\vspace{0.5cm}

\begin{minipage}[c]{0.45\linewidth}
\AxiomC{$\vdash\Gamma,A$}
\RightLabel{\small (dereliction)}
\UnaryInfC{$\vdash\Gamma,?A$}
\DisplayProof
\end{minipage} \hfill
\begin{minipage}[c]{0.45\linewidth}
\AxiomC{$\vdash ? \Gamma, A$}
\RightLabel{\small (promotion)}
\UnaryInfC{$\vdash ? \Gamma, !A$}
\DisplayProof
\end{minipage}

\vspace{1cm}

\item The Exponential Group of Differential Linear Logic:

\begin{minipage}[c]{0.45\linewidth}
\AxiomC{$\vdash\Gamma,?A,?A$}
\RightLabel{\small (contraction)}
\UnaryInfC{$\vdash\Gamma,?A$}
\DisplayProof
\end{minipage} \hfill
\begin{minipage}[c]{0.45\linewidth}
\AxiomC{$\vdash\Gamma,\oc A,\oc A$}
\RightLabel{\small (co-contraction)}
\UnaryInfC{$\vdash\Gamma, \oc A$}
\DisplayProof
\end{minipage}

\vspace{0.5cm}

\begin{minipage}[c]{0.45\linewidth}
\AxiomC{$\vdash\Gamma$}
\RightLabel{\small (weakening)}
\UnaryInfC{$\vdash\Gamma,?A$}
\DisplayProof
\end{minipage} \hfill
\begin{minipage}[c]{0.45\linewidth}
\AxiomC{$\vdash\Gamma$}
\RightLabel{\small (co-weakening)}
\UnaryInfC{$\vdash\Gamma,\oc A$}
\DisplayProof
\end{minipage}

\vspace{0.5cm}

\begin{minipage}[c]{0.45\linewidth}
\AxiomC{$\vdash\Gamma,A$}
\RightLabel{\small (dereliction)}
\UnaryInfC{$\vdash\Gamma,?A$}
\DisplayProof
\end{minipage} \hfill
\begin{minipage}[c]{0.45\linewidth}
\AxiomC{$\vdash\Gamma,A$}
\RightLabel{\small (co-dereliction)}
\UnaryInfC{$\vdash\Gamma,\oc A$}
\DisplayProof
\end{minipage}

\end{itemize}

\caption{Exponential groups of LL and DiLL}
\label{fig:exp_group}
\end{figure}

Let us present the structure of bialgebra of $!$ in $\Smo$, and how the differentiation is interpreted in this category. In $\Smo$, finite products coincide with finite coproducts. This biproduct structure is transported by the strong monoidal functor $!$ to a bialgebra structure:  $\Delta:    !   E\rightarrow   !    E\mctens!   E$   is defined on Dirac distributions by   $\Delta(\delta_x)=\delta_x \otimes \delta_x$,  $e: ! E\rightarrow \CC$ is defined as $e(\delta_x)=1$,   $\nabla:    !   E\mctens!   E\rightarrow    !   E$   is given by $\nabla(\delta_x \otimes \delta_y)=\delta_{x+y}$  and $m^0: \CC\rightarrow! E$ is defined as $m^0(1)=\delta_0$. 
  
Differentiation can be constructed from the bialgebra structure and from a more primitive differentiation operator, denoted as $\mathrm{coder} \in\Lin(E, !E)$. This operator is  the interpretation of the codereliction rule of DiLL. It corresponds to the differentiation at $0$ of a smooth map :

\begin{equation*}
\mathrm{coder}(v)=\lim_{t\rightarrow 0}\frac{\delta_{tv}-\delta_0}{t}
\end{equation*}

The differential operator is then interpreted as the usual one in analysis:

  \[d: \Cin(E,F)\rightarrow \Cin(E,\Lin(E,F))\]
  \[df(x)(v)=\lim_{t\rightarrow 0} \frac{f(x+tv)-f(x)}{t}\]

\section{A quantitative model of Linear Logic}
\label{sec:pws}

The purpose of this paper is to define a new quantitative model of DiLL, with a strong analytical flavour.  Indeed, one of the characteristic of the quantitative models~\cite{Gir88,2aaf,danosehrhard,Ehr02,Ehr05} is that the morphisms in the cokleisli enjoy a Taylor expansion. The authors of~\cite{BET10} constructed a smooth interpretation of DiLL, that we would like to refine into a quantitative model.  We could have used a study of holomorphic and real analytic maps by Kriegl and Michor~\cite[Chapter II]{KriMi}: the construction of a model of holomorphic or real analytic maps should be easily done by following the constructions of~\cite{BET10}. However, these maps corresponds only locally to their Taylor development. As the interpretation of locality in denotational semantics remains unclear, we want to interpret the non-linear proofs of DiLL as functions corresponding in every point with their Taylor development in $0$. 

We take advantage of the fact that our spaces are \mco so as to define a very general notion of power series which   are in particular smooth (see Proposition~\ref{bounded_incl_pws_smo}). A power series is a converging sum of monomials. 
Indeed a power series in $\CC$ is represented by a sum 
$ \sum_n a_n x^n $
converging pointwise on some disk. We are going to use power series between topological vector spaces, thus the description has to be a little bit more involved and a power series will be a sum
$ \sum_n f_n $
where $f_n$ is $n$-homogeneous and $\sum_n f_n (x)$ converges for every $x \in E$. Moreover, we need a stronger notion than pointwise convergence, so as to compose power series and to get a cartesian closed category. This is the uniform convergence on bounded sets of the partial sums $\sum_{n=0}^N f_n$, which will allow us to deeply relate weak, strong and pointwise convergence of power series (see Proposition~\ref{wcv_pointwise_pws}).
As the space of power series between \mco spaces is \mco (see Proposition \ref{prop:compl_series}), we obtain a cartesian closed category of \mco spaces and power series between them. 

To get to this point, we use a description of power series as functions sending holomorphic maps on holomorphic maps, and for this, proofs of~\cite{KriMi} are adapted. This study gives us a Cauchy inequality on power series, and equivalences between weak convergence and strong convergence of power series, inspired from~\cite{BS71}. Finally, using weak convergence, we obtain the cartesian closedeness of the category.

The part on holomorphic maps between \lct is not needed at first reading, as it only results into Proposition~\ref{Cauchy_ineq_pws}. The reader may then skip Section~\ref{sec:holo}.

Due to the connection between power series and holomorphic maps, we consider vector spaces over $\CC$. In the following, $\DD$ denotes the closed unit ball in $\CC$ and $E$, $F$, and $G$ range over \mco spaces.

\subsection{Monomials and power series}

\begin{defi}
A function $f_n: E \rightarrow F$ is an \emph{$n$-monomial} when there is $\tilde{f_n}$ an $n$-linear \bornof function from $E$ to $F$ such that for each $ x \in E$ $$f_n(x) = \tilde{f_n} \underbrace{(x,...,x)}_{n \text{ times}}.$$ 

We write $\nlin(E,F)$ for the space of $k$-monomials from $E$ to $F$, and $\linb(E^{\otimes^n}, F)$ for the space of \bornof $n$-linear maps from $E$ to $F$. We endow $\linb(E^{\otimes^n}, F)$ with the locally convex topology of uniform convergence on bounded sets of $E$. As in the linear case (see Section~\ref{lin_mono}), bounded sets of $\nlin(E,F)$ are the equibounded ones.
\end{defi}

The following   formula relates the values of a monomial with the values of the unique multilinear map it comes from.
\begin{lemma}
\label{polarization_formula}
Consider $f_n \in \linb^n(E,F)$, and consider $\tilde{f_n}$ an $n$-linear map such that $\tilde{f_n}(x, \dots, x) = f_n (x)$. Then for every $x_1,\dots, x_n\in  E$:
$$\tilde f_n(x_1,\dots,x_n)= \tfrac 1{n!} \sum_{\epsilon_1,\dots,\epsilon_n =0}^1 (-1)^{n-\sum_{j=1}^n \epsilon_j} f_n\left(\sum_{j=1}^n\epsilon_jx_j\right).$$
\end{lemma}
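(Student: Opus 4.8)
The plan is to substitute the defining relation $f_n(x)=\tilde f_n(x,\dots,x)$ into the right-hand side and expand everything by $n$-linearity, so as to isolate a purely combinatorial coefficient. For a fixed $\epsilon=(\epsilon_1,\dots,\epsilon_n)\in\{0,1\}^n$ I would first write
$$f_n\Big(\sum_{j=1}^n \epsilon_j x_j\Big)=\tilde f_n\Big(\sum_{j=1}^n\epsilon_j x_j,\dots,\sum_{j=1}^n\epsilon_j x_j\Big)=\sum_{k_1,\dots,k_n=1}^n \epsilon_{k_1}\cdots\epsilon_{k_n}\,\tilde f_n(x_{k_1},\dots,x_{k_n}),$$
using multilinearity of $\tilde f_n$. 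Plugging this back in and exchanging the two finite summations, the right-hand side of the formula becomes
$$\tfrac1{n!}\sum_{k_1,\dots,k_n=1}^n \tilde f_n(x_{k_1},\dots,x_{k_n})\,\Big(\sum_{\epsilon\in\{0,1\}^n}(-1)^{\,n-\sum_j\epsilon_j}\,\epsilon_{k_1}\cdots\epsilon_{k_n}\Big).$$

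The heart of the argument, which I expect to be the only real obstacle, is evaluating the inner coefficient attached to each tuple $(k_1,\dots,k_n)$. Writing $S=\{k_1,\dots,k_n\}$ for the set of indices actually occurring and using $\epsilon_j^2=\epsilon_j$ for $\epsilon_j\in\{0,1\}$, the product collapses to $\epsilon_{k_1}\cdots\epsilon_{k_n}=\prod_{j\in S}\epsilon_j$, which equals $1$ exactly when $\epsilon_j=1$ for every $j\in S$ and vanishes otherwise. The coordinates outside $S$ then range freely, so the inner sum factorises as $(-1)^{\,n-|S|}\sum_{T\subseteq S^{c}}(-1)^{|T|}$, and by the binomial identity $\sum_{T\subseteq S^{c}}(-1)^{|T|}=(1-1)^{\,n-|S|}$ this is $0$ unless $|S|=n$, in which case it equals $1$. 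Hence only the tuples $(k_1,\dots,k_n)$ whose entries are pairwise distinct survive, i.e. exactly the permutations of $(1,\dots,n)$, each with coefficient $1$.

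Collecting the surviving terms, the right-hand side reduces to the symmetrisation $\tfrac1{n!}\sum_{\sigma\in\mathfrak S_n}\tilde f_n(x_{\sigma(1)},\dots,x_{\sigma(n)})$. In particular, since this expression depends only on $f_n$, it recovers the symmetric representative of the monomial and shows that representative is unique; to obtain the stated equality I would invoke that $\tilde f_n$ may be taken symmetric, as replacing $\tilde f_n$ by its symmetrisation (still $n$-linear and \bornof) leaves $f_n$ unchanged, whence every term of the average equals $\tilde f_n(x_1,\dots,x_n)$ and the formula follows. Boundedness plays no role in the identity itself; it only ensures that the recovered symmetric map again lies in $\linb(E^{\otimes^n},F)$.
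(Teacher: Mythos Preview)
Your argument is correct and is exactly the approach the paper has in mind: its proof is a single sentence saying that one expands the right-hand side by multilinearity and symmetry of $\tilde f_n$ (with a reference to \cite[II.7.13]{KriMi}), and you have simply written out that expansion in full. Your observation that the statement as phrased does not assume $\tilde f_n$ symmetric, and that the expansion in fact yields the symmetrisation $\tfrac1{n!}\sum_{\sigma}\tilde f_n(x_{\sigma(1)},\dots,x_{\sigma(n)})$, is a genuine clarification; the paper silently assumes symmetry in its one-line proof, while you make explicit that one may replace $\tilde f_n$ by its symmetrisation without changing $f_n$.
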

\begin{proof}
 The proof relies on the expansion of the right hand side by multilinearity and symmetry of $\tilde f_n$ (see~\cite[II.7.13]{KriMi}).
\end{proof}

As in the case of bounded linear functions (see Proposition~\ref{bounded_Mackey}) monomials behave particularly well with respect to Mackey-convergence.
\begin{lemma}
\label{lem:lbdd-MC} Consider $(x_{\gamma})_{\gamma \in \Gamma}$ a Mackey-converging net in $E$ and $f_k : E \rightarrow F$ a $k$-monomial. Then $f_k(x_{\gamma})$ is a Mackey-converging net, thus a converging net. 
\end{lemma}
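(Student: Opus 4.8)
The plan is to reduce the $k$-monomial $f_k$ to its underlying bounded $k$-linear representative $\tilde f_k$ and to exploit multilinearity in order to factor out the scalars witnessing the Mackey-convergence of $(x_\gamma)$; this is the exact analogue, one degree up, of the linear statement Proposition~\ref{bounded_Mackey}. Write $x=\lim_\gamma x_\gamma$ for the Mackey-limit, so that by the remark following Definition~\ref{def:mackey} there is a net of scalars $\lambda_\gamma$ decreasing to $0$ and a bounded disk $b$ with $x_\gamma-x\in\lambda_\gamma b$. First I would record that the net $(x_\gamma)_\gamma$ is itself bounded: the scalars $\lambda_\gamma$ are bounded (they tend to $0$) and $b$ is a disk, so $\{x_\gamma-x\mid\gamma\in\Gamma\}\subseteq Mb$ for $M=\sup_\gamma|\lambda_\gamma|$, hence the translate $\{x_\gamma\}$ is bounded as well. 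Enlarging if necessary, I fix a single bounded set $B$ containing $b$, all the $x_\gamma$, and $x$.

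The core computation is the multilinear telescoping identity
\[
 f_k(x_\gamma)-f_k(x)=\tilde f_k(x_\gamma,\dots,x_\gamma)-\tilde f_k(x,\dots,x)=\sum_{j=1}^{k}\tilde f_k\big(\underbrace{x_\gamma,\dots,x_\gamma}_{j-1},\,x_\gamma-x,\,\underbrace{x,\dots,x}_{k-j}\big),
\]
obtained by changing one argument at a time and using linearity of $\tilde f_k$ in each slot. Writing $x_\gamma-x=\lambda_\gamma y_\gamma$ with $y_\gamma\in b$ and pulling the scalar out of the $j$-th argument gives
\[
 f_k(x_\gamma)-f_k(x)=\lambda_\gamma\sum_{j=1}^{k}\tilde f_k\big(\underbrace{x_\gamma,\dots,x_\gamma}_{j-1},\,y_\gamma,\,\underbrace{x,\dots,x}_{k-j}\big).
\]

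It then remains to bound the sum uniformly in $\gamma$. Since $\tilde f_k$ is a bounded $k$-linear map and each of its arguments above lies in the fixed bounded set $B$, every term lies in the bounded set $C=\tilde f_k(B,\dots,B)$; their sum therefore lies in the bounded set $D=\underbrace{C+\dots+C}_{k}$, a finite Minkowski sum of bounded sets being bounded. Hence $f_k(x_\gamma)-f_k(x)\in\lambda_\gamma D$ with $\lambda_\gamma\to 0$ and $D$ bounded, which is precisely the assertion that $f_k(x_\gamma)$ Mackey-converges to $f_k(x)$; since a Mackey-converging net is in particular a converging net, the final claim follows. I expect no serious obstacle: the only points requiring care are the uniform boundedness of $\{x_\gamma\}$ and the observation that the telescoping identity and the factoring of $\lambda_\gamma$ depend only on multilinearity (not on the symmetry or on the particular choice of representative $\tilde f_k$), both of which are routine.
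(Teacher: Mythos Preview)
Your proof is correct and follows essentially the same route as the paper: both use the telescoping identity $f_k(x_\gamma)-f_k(x)=\sum_j \tilde f_k(\dots,x_\gamma-x,\dots)$ to isolate the factor $\lambda_\gamma$, and then invoke boundedness of $\tilde f_k$ on the uniformly bounded arguments. Your write-up is in fact slightly more careful than the paper's in fixing a single bounded set $B$ containing $b$, $x$, and all $x_\gamma$ before applying boundedness of $\tilde f_k$.
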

\begin{proof}
Let us write $\tilde{f}_k$ for the symmetric bounded $k$-linear map corresponding to $f_k$. Let $b \subset E$ be a bounded set, $x \in E$ and  $(\lambda_{\gamma \in \Gamma}) \in \CC ^{\mathbb{N}}$ be a sequence decreasing towards $0$ such that for every $\gamma$ :

$$x_{\gamma} -x \in \lambda_{\gamma} b. $$

Let us write $b'= \tilde{f_{k}} (b \times ... \times b)$. Then for every $\gamma\in \Gamma$, we have can factorize $f_k(x_{\gamma}) -f_k(x)$ following the classical equality $x^k - y^k = (x-y)(x^{k-1} + x^{k-2}y + ... + y^{k-1})$. Indeed:
$$ f_k(x_{\gamma}) -f_k(x) = \tilde{f_k}(x_{\gamma} -x ,x_{\gamma},..., x_{\gamma})   + \tilde{f_k}(x_{\gamma} -x ,..,x_{\gamma},..., x_{\gamma},x) + \tilde{f_k}(x_{\gamma}-x, x, ..., x) .$$

As $\tilde{f_k}$ is bounded, and as for every $\gamma \in \Gamma$ $x_\gamma$ belongs to the bounded set $M b+\{x\}$ for some M, there is  a bounded $b'$ in $F$ such that, for every $\gamma$:
$$ f_k(x_{\gamma}) -f_k(x) \in \lambda_{\gamma} b'.$$

\end{proof}

An \emph{$n$-homogeneous} function is a map $f$ such that $f ( \lambda x) = \lambda^n f(x)$ for any scalar $\lambda$. 
\begin{lemma}{\cite[I.5.16.1]{KriMi}}
\label{lem:monom_smooth}
A function $f$ from $E$ to $F$ is an $n$-monomial if and only if it is a smooth $n$-homogeneous map. 
\end{lemma}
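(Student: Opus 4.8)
The plan is to prove the two implications separately, pushing the whole difficulty of the converse onto a finite-dimensional polynomial computation.

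The direct implication is short. If $f$ is an $n$-monomial, write $f(x)=\tilde f(x,\dots,x)$ with $\tilde f$ a bounded symmetric $n$-linear map. Homogeneity is immediate by $n$-linearity: $f(\lambda x)=\tilde f(\lambda x,\dots,\lambda x)=\lambda^n f(x)$. For smoothness I would factor $f=\tilde f\circ\Delta$, where $\Delta\colon E\to E^n$, $x\mapsto(x,\dots,x)$, is linear and bounded (it sends a bounded $b$ into $b\times\dots\times b$), hence smooth by Proposition~\ref{bounded_lin_smooth}. It then remains to know that a bounded $n$-linear map is smooth, a standard fact of convenient calculus~\cite{KriMi}: feeding smooth curves into $\tilde f$ and telescoping the difference quotient by multilinearity shows, by induction on $n$ starting from the case $n=1$ of Proposition~\ref{bounded_lin_smooth}, that $t\mapsto\tilde f(c_1(t),\dots,c_n(t))$ is smooth with the expected Leibniz derivative. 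Composing smooth maps, $f$ is smooth.

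For the converse, assume $f$ smooth and $n$-homogeneous, and take as candidate multilinear map the right-hand side of Lemma~\ref{polarization_formula},
\begin{equation*}
\tilde f(x_1,\dots,x_n)=\tfrac1{n!}\sum_{\epsilon_1,\dots,\epsilon_n=0}^1(-1)^{n-\sum_j\epsilon_j}\,f\Big(\textstyle\sum_{j=1}^n\epsilon_j x_j\Big),
\end{equation*}
which is manifestly symmetric in its arguments, since permuting the $x_j$ merely permutes the roles of the $\epsilon_j$. The key step is the finite-dimensional reduction: for fixed $x_1,\dots,x_n$ the map $\phi\colon(t_1,\dots,t_n)\mapsto f(\sum_j t_j x_j)$ is smooth, being $f$ precomposed with a bounded linear map, and $n$-homogeneous. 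I would deduce from its Taylor expansion at $0$ that homogeneity forces every Taylor coefficient of degree $\neq n$ to vanish, so that $\phi(t)=\sum_{|\alpha|=n}c_\alpha t^\alpha$ is a homogeneous $F$-valued polynomial of degree $n$.

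Once $\phi$ is a polynomial, a direct combinatorial computation gives $\sum_{\epsilon\in\{0,1\}^n}(-1)^{n-\sum_j\epsilon_j}\phi(\epsilon)=c_{(1,\dots,1)}$, the coefficient of $t_1\cdots t_n$: for any $\alpha$ with $|\alpha|=n$ having some $\alpha_j=0$, the inner sign sum factors through $\sum_{\epsilon_j=0}^1(-1)^{\epsilon_j}=0$, leaving only $\alpha=(1,\dots,1)$. Hence $\tilde f(x_1,\dots,x_n)=\tfrac1{n!}c_{(1,\dots,1)}$ extracts a single polynomial coefficient, which is $n$-linear in $(x_1,\dots,x_n)$ (rescaling $x_j$ rescales $\phi$ in $t_j$, and splitting $x_j=y+z$ is handled by introducing an auxiliary variable and re-reading off the coefficient). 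Evaluating on the diagonal, $\phi(t)=(\sum_j t_j)^n f(x)$ has $c_{(1,\dots,1)}=n!\,f(x)$, so $\tilde f(x,\dots,x)=f(x)$. For boundedness, since smooth maps send bounded sets to bounded sets~\cite{KriMi}, the formula above exhibits $\tilde f(b_1\times\dots\times b_n)$ as a finite combination of images under $f$ of the bounded set $\sum_j\epsilon_j b_j$, hence bounded; thus $\tilde f$ is a bounded symmetric $n$-linear map and $f$ is an $n$-monomial.

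I expect the main obstacle to be precisely the converse, and within it the claim that the polarization formula yields a genuinely multilinear map rather than an arbitrary function. Everything hinges on the reduction to finite dimensions together with the elementary but essential fact that a smooth $n$-homogeneous function on $\RR^n$ is a homogeneous polynomial; once this is secured, multilinearity, the diagonal identity, and boundedness are routine.
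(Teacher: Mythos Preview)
Your forward implication and the multilinearity part of the converse are correct, and your route is genuinely different from the paper's. The paper argues the converse in one line via the derivative calculus of~\cite{KriMi}: for a smooth $f$ the iterated derivative $d^nf(0)$ is automatically a bounded symmetric $n$-linear map, and differentiating the identity $f(tx)=t^nf(x)$ $n$ times at $t=0$ yields $d^nf(0)(x,\dots,x)=n!\,f(x)$, so $\tilde f:=\tfrac1{n!}d^nf(0)$ does the job. Your approach trades this black-box use of the $n$-th derivative for a hands-on polarization computation, reducing multilinearity to a polynomial identity on $\RR^n$; this is more elementary and makes the combinatorics explicit, at the price of a longer argument.

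There is, however, a real gap in your boundedness step. The assertion that ``smooth maps send bounded sets to bounded sets'' is \emph{false} for nonlinear smooth maps on general \mco spaces: on $c_0$, for instance, $f(x)=\sum_n n\,\phi(x_n)$ with $\phi$ a smooth bump supported in $[\tfrac12,\tfrac32]$ is smooth (locally a finite sum) but satisfies $f(e_n)=n$, hence is unbounded on the unit ball. So you cannot bound $f\bigl(\sum_j\epsilon_jb_j\bigr)$ directly. The fix is immediate and stays inside your own framework: the polarization formula writes $\tilde f$ as a finite linear combination of $f$ precomposed with bounded linear maps $E^n\to E$, so $\tilde f$ is smooth; you have shown it is $n$-linear; and a multilinear map is smooth if and only if it is bounded~\cite[I.5.5]{KriMi}, which is exactly the $n$-linear extension of Proposition~\ref{bounded_lin_smooth} you already invoked in the forward direction. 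With this correction your proof is complete.
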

\begin{proof}
As bounded $n$-linear functions  are smooth by Proposition~\ref{bounded_lin_smooth},  $n$-monomials are smooth $n$-homogeneous functions. The converse is also true. Indeed, by deriving at $0$ an $n$-homogeneous smooth function along the curve $t\mapsto tx$, we can show that it is equal to its $n^{th}$-derivative which is $n$-linear.
\end{proof}

\begin{prop}
\label{nlinmco}
If $F$ is \mco, then so is $\nlin(E,F)$. 
\end{prop}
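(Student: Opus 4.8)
The plan is to transport Mackey-completeness from a space of multilinear maps to $\nlin(E,F)$ through the bijection between $n$-monomials and symmetric bounded $n$-linear maps. I would introduce two linear maps: the \emph{diagonal restriction} $\Psi:\linb(E^{\otimes^n},F)\to\nlin(E,F)$, $\tilde h\mapsto(x\mapsto\tilde h(x,\dots,x))$, and the \emph{polarization} $\Phi:\nlin(E,F)\to\linb(E^{\otimes^n},F)$, $f_n\mapsto\tilde f_n$, where $\tilde f_n$ is the associated symmetric map. Both are bounded for the topologies of uniform convergence on bounded sets: for $\Psi$ because the image $\{\tilde h(x,\dots,x)\mid x\in b\}$ of a bounded $b$ sits inside $\tilde h(b\times\cdots\times b)$; for $\Phi$ because Lemma~\ref{polarization_formula} writes $\tilde f_n(x_1,\dots,x_n)$ as a fixed finite signed sum of values $f_n(\sum_j\epsilon_j x_j)$ at points lying in a bounded set whenever $x_1,\dots,x_n$ do. Note $\Psi\circ\Phi=\mathrm{id}$, so $\nlin(E,F)$ is a bounded retract of $\linb(E^{\otimes^n},F)$.

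Next I would establish that $\linb(E^{\otimes^n},F)$ is \mco whenever $F$ is. This follows by induction on $n$: the case $n=1$ is Proposition~\ref{lin_mco}, and the currying isomorphism $\linb(E^{\otimes^n},F)\simeq\linb(E,\linb(E^{\otimes^{n-1}},F))$ obtained by iterating Proposition~\ref{prop:iso_tens_bilin} reduces the step to Proposition~\ref{lin_mco} applied to the \mco space $\linb(E^{\otimes^{n-1}},F)$.

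The completeness argument then runs as in Proposition~\ref{lin_mco}. Given a Mackey-Cauchy net $(f_\gamma)$ in $\nlin(E,F)$, with $f_\gamma-f_{\gamma'}\in\lambda_{\gamma,\gamma'}B$ for an equibounded $B$ and $\lambda_{\gamma,\gamma'}\to 0$, the bounded linear map $\Phi$ preserves Mackey-Cauchy nets (Proposition~\ref{bounded_Mackey}), so $(\tilde f_\gamma)=(\Phi(f_\gamma))$ is Mackey-Cauchy in $\linb(E^{\otimes^n},F)$ and hence converges to a bounded $n$-linear map $\tilde g$. Symmetry is a closed condition and passes to the limit, so $g:=\Psi(\tilde g)$ is a genuine $n$-monomial; applying the bounded map $\Psi$, which also preserves Mackey-convergence, yields $f_\gamma=\Psi(\tilde f_\gamma)\to\Psi(\tilde g)=g$ in $\nlin(E,F)$. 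Thus every Mackey-Cauchy net converges and $\nlin(E,F)$ is \mco.

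The step I expect to be delicate is the boundedness of $\Phi$, \ie checking that $\Phi(B)$ is equibounded: one must verify that as $x_1,\dots,x_n$ range over a bounded $b\subset E$, the arguments $\sum_j\epsilon_j x_j$ appearing in the polarization formula remain inside a single bounded subset of $E$, so that the equiboundedness of $B$ propagates through the finite signed combination. The other verifications---homogeneity and symmetry of the limit, and the boundedness of $\Psi$---are routine.
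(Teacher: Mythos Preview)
Your proposal is correct and follows essentially the same strategy as the paper: use the polarization formula (Lemma~\ref{polarization_formula}) to pass from $n$-monomials to bounded $n$-linear maps, and then invoke Proposition~\ref{lin_mco}. The only cosmetic differences are that the paper phrases the passage as a bounded \emph{isomorphism} onto the space of \emph{symmetric} $n$-linear maps (identified with $\linb(E^{\otimes^n_s},F)$), whereas you set it up as a bounded \emph{retraction} into the space of \emph{all} $n$-linear maps and then obtain Mackey-completeness of $\linb(E^{\otimes^n},F)$ by induction via the currying isomorphism of Proposition~\ref{prop:iso_tens_bilin}; your remark that symmetry passes to the limit is therefore not even needed for your argument to go through.
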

\begin{proof}
There is a bounded isomorphism between the space $\nlin(E,F)$ and the space of all $n$-linear symmetrical morphisms from $E$ to $F$, when the last one is endowed with the topology of uniform convergence on bounded sets of $E \times ... \times E$. Indeed, one associate an $n$-monomial to an $n$-linear symmectric morphism by applying the last one $n$-times to the same argument. Thanks to the Polarization Formula \ref{polarization_formula}, we can  obtain an $n$-linear symmetric morphism $\tilde{f_n}$ from an $n$-monomial $f_n$:
$$\tilde f_n(x_1,\dots,x_n)= \tfrac 1{n!} \sum_{\epsilon_1,\dots,\epsilon_n =0}^1 (-1)^{n-\sum_{j=1}^n \epsilon_j} f_n\left(\sum_{j=1}^n\epsilon_jx_j\right).$$

The mappings $(f_n \mapsto \tilde{f_n})$ and $(\tilde{f_n} \mapsto f_n)$ preserves uniformly bounded sets, thus $\nlin(E,F)$ and the space of all $n$-linear symmetrical morphisms from $E$ to $F$ are isomorphic. 
By definition of the symmetrized $n$-th product $E^{\otimes^n_s}$, the space $\nlin(E,F)$ is also isomorphic to $\linb (E^{\otimes^n_s},F)$. This space is \mco as $F$ is (see Proposition~\ref{lin_mco}), and thus  $\nlin(E,F)$ is also \mco.

\end{proof}

\begin{defi}
A polynomial function is a finite sum of monomials : $$\forall x,\ P(x) = \sum_{n=0}^N f_n(x).$$

We write $\pol (E,F)$ for the space of all polynomial functions between $E$ and $F$, and endow it with the topology of uniform convergence on bounded subsets of $E$.
\end{defi}


\begin{defi}\label{def:powerseries}
A function $f$ from $E$ to $F$ is a \emph{power series} when $f$ is \emph{pointwise} equal to a converging sum of $k$-monomials:
$$\forall x, \ f(x)= \sum\limits_{k=0}^{\infty} f_k(x), $$
and when this sum converges \emph{uniformly on bounded} sets of $E$.

We write $\series(E,F)$ for the space of power series between $E$ and $F$ and endow it with the topology of uniform convergence on bounded subsets of $E$.
\end{defi}

\begin{prop}
\label{pws_borno}
A power series is \bornof.
\end{prop}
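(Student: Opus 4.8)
The plan is to fix a bounded set $b \subseteq E$ and show directly that its image $f(b)$ is bounded in $F$, reducing the whole series to a suitable partial sum. Since the topology of $F$ is generated by absolutely convex $0$-neighbourhoods, and since a set absorbed by a basic neighbourhood is absorbed by every larger one, it suffices to prove that $f(b)$ is absorbed by an arbitrary absolutely convex $0$-neighbourhood $V$ of $F$.

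First I would invoke the defining property of a power series: by Definition~\ref{def:powerseries}, the partial sums $S_N = \sum_{k=0}^N f_k$ converge to $f$ \emph{uniformly on bounded sets}. Applied to the bounded set $b$ and to the neighbourhood $V$, this produces an index $N_0$ such that $f(x) - S_{N_0}(x) \in V$ for every $x \in b$, that is, $f(b) \subseteq S_{N_0}(b) + V$.

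Next I would note that $S_{N_0}$ is a polynomial, hence \bornof: each monomial $f_k$ arises from a bounded $k$-linear map $\tilde f_k$, so that $f_k(b) \subseteq \tilde f_k(b \times \dots \times b)$ is bounded, and a finite sum of bounded sets is bounded. Therefore $S_{N_0}(b)$ is bounded, and in particular absorbed by $V$: there is $\lambda > 0$ with $S_{N_0}(b) \subseteq \lambda V$. Combining this with the inclusion above and the property $\lambda V + V \subseteq (\lambda+1)V$ valid for absolutely convex $V$, I obtain $f(b) \subseteq (\lambda + 1) V$. As $V$ was arbitrary, $f(b)$ is bounded, and hence $f$ is \bornof.

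The argument is a uniform-tail estimate, and the one point that genuinely needs care is the interaction between \emph{uniform} convergence on $b$ (not merely pointwise convergence) and boundedness: pointwise convergence alone would not let a single neighbourhood control the image of the whole set $b$, since the required index $N_0$ could vary with the point. The step that carries the proof is therefore the splitting of $f$ on $b$ into a bounded polynomial part $S_{N_0}$ plus a tail lying entirely inside $V$; the remaining ingredients (boundedness of each monomial, absorption of bounded sets, and the absolutely-convex arithmetic $\lambda V + V \subseteq (\lambda+1)V$) are routine.
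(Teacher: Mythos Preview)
Your proof is correct and follows essentially the same approach as the paper: both arguments fix a bounded set $b$ and an absolutely convex $0$-neighbourhood $V$, use uniform convergence on $b$ to trap the tail $f - S_{N_0}$ inside $V$, use boundedness of the polynomial partial sum to get $S_{N_0}(b) \subseteq \lambda V$, and conclude $f(b) \subseteq (\lambda+1)V$. Your additional remark on why pointwise convergence would not suffice is a helpful clarification, but the core argument is identical.
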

\begin{proof}
Consider $ f = \sum_k f_k \in \series(E,F)$, $b$ a bounded set in $E$, and $U$ an absolutely convex $0$-neighbourhood in $F$.
We know that $\sum_k f_k$ converges uniformly on b. Hence, there is an integer $N$ such that $( f - \sum_{k=1}^N f_k ) (b) \subset U$. 
Besides, each $f_k$ sends $b$ on a bounded sets, thus $(\sum_{k=1}^N f_k ) (b)$ is bounded as a finite sum of bounded sets. So there is $\lambda\in\CC$ such that $(\sum_{k=1}^N f_k ) (b)\subset \lambda U$.
Finally,  $ f (b) \subset (\lambda+1) U$.
\end{proof}

\subsection{Power series and holomorphy}
\label{sec:holo}

We need to study the power series we defined more deeply. We are going to show that if $f = \sum_n f_n : E \rightarrow F$ is a power series converging uniformly on bounded sets, it is holomorphic, according to the specific definition of Kriegl and Michor~\cite[II.7.19]{KriMi}. This definition is a generalisation of the well known definition of holomorphy for complex functions of a complex variable, and leads to a Cauchy inequality for $f$ (see Proposition~\ref{Cauchy_ineq_pws}). This Cauchy inequality will turn to be essential in showing cartesian closedeness and the composition results in Section~\ref{subsec:cv_pws}.

This formula will in particular result in the Mackey-convergence of power series (see Proposition~\ref{MC_pws}), and will allow us to compose \bornof linear forms with power series (see Proposition~\ref{comp_lin_pws}). For now on, we are going to work with linear continuous forms $l\in E'$ in order to be able write $ l \circ ( \sum_n f_n) = \sum_n l \circ f_n$. 

\paragraph*{Holomorphic curves in $\mathbb{C}$.}
Remember that an holomorphic curve $c: \mathbb{C} \rightarrow \mathbb{C} $ is a complex everywhere derivable function. It is then infinitely many times differentiable, and verifies the Cauchy formula and the Cauchy inequality. For any $z'\in\CC$ and any sufficiently small $r$:
\begin{gather*}
\frac{c^{(n)}(z')}{n!} = \frac{1}{2 \pi i } \int\limits_{|z-z'|=r} \frac{c(z)}{(z-z')^{n+1}} d z\label{eq:CF} \\
\intertext{and hence :}
\left| \frac{c^{(n)}(z')}{n !} \right| \leq \left| \frac{sup \{  c(z)\ |\ |z-z'| = r \} }{ r ^n } \right|. \label{eq:CI}
\end{gather*}

Moreover, it can be uniquely decomposed as a power series:
\begin{equation*}
  \forall a \in \CC, \forall z\in \CC,\ c(z+a)=\sum_n \frac{c^{(n)}(a)}{n!}z^n.
\end{equation*}

\paragraph*{Holomorphic curve in an \lct.}

This part on holomorphic curve is inspired by  Part 7 of the book of Kriegl and Michor~\cite{KriMi} on \mco spaces and holomorphic functions  and by the first theorem of~\cite{Gro53}. 
We give two different approaches to holomorphic curves, that we then show equivalent.

\begin{defi}
 A \emph{strong holomorphic} curve $c: \mathbb{C} \rightarrow E$ is an everywhere complex derivable function.
A \emph{weak holomorphic} curve $c: \mathbb{C} \rightarrow E$ is a function such that for every $ l \in E'$, $l \circ c $ is holomorphic.
\end{defi}

\begin{lemma}\label{lem:properties-holomorphic}
Let $c:\CC\rightarrow E$ be a curve.
\begin{enumerate}
\item\label{property:bounded} If $c$ is strong holomorphic, then 
\begin{center}
$\forall l\in E'$, $l\circ c$ is complex derivable and $\forall z\in\CC,\ (l\circ c)'(z)= l (c'(z))$.
\end{center}
\item\label{property:whbdd} If $c$ is weak holomorphic, then $c$ is bounded.
\item\label{property:whMC} If $c$ is weak holomorphic, then for all $z\in\CC$, the difference quotient $(\frac{c(z+h)-c(z)}{h})_{h\in \mathbb D}$ is a Mackey-Cauchy net.
\end{enumerate}
\end{lemma}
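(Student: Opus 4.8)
The plan is to reduce each of the three assertions to the classical one-variable theory of entire functions on $\CC$, lifting statements back to $E$ through the continuous dual $E'$.

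For item~\ref{property:bounded} I would argue directly from the definitions. Since $c$ is strong holomorphic, the limit $c'(z)=\lim_{h\to 0}\frac{c(z+h)-c(z)}{h}$ exists in $E$ for every $z$. Any $l\in E'$ is linear and continuous, hence commutes with this limit, so
$$\lim_{h\to 0}\frac{(l\circ c)(z+h)-(l\circ c)(z)}{h}=l\Bigl(\lim_{h\to 0}\tfrac{c(z+h)-c(z)}{h}\Bigr)=l(c'(z)).$$
In particular the left-hand limit exists, which is exactly the statement that $l\circ c$ is complex derivable with $(l\circ c)'(z)=l(c'(z))$.

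For item~\ref{property:whbdd}, recall that a bounded subset of $\CC$ is contained in some $R\DD$, and that $R\DD$ is compact since $\DD$ is the closed unit ball. Fix $l\in E'\subseteq\Ec$. By hypothesis $l\circ c$ is holomorphic on $\CC$, hence continuous, so it sends the compact set $R\DD$ to a bounded subset of $\CC$; that is, $l(c(R\DD))$ is bounded. As this holds for every continuous form, the Mackey--Arens argument of Lemma~\ref{scalbound} (which only requires scalar boundedness against $E'$) shows that $c(R\DD)$ is bounded, so $c$ is bounded.

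Item~\ref{property:whMC} is the heart of the lemma, and the plan is to exhibit explicitly a bounded set and a net of scalars realizing the Mackey--Cauchy condition of Definition~\ref{def:mackey}. Fix $z$, write $D_h=\frac{c(z+h)-c(z)}{h}$ for $h\in\DD\setminus\{0\}$, the net being directed so that $h\to 0$ (decreasing $|h|$). The key object is the set of second divided differences
$$B=\left\{\frac{1}{h-h'}\bigl(D_h-D_{h'}\bigr)\ :\ h,h'\in\DD\setminus\{0\},\ h\neq h'\right\},$$
together with its balanced hull $\tilde B=\{\mu v : |\mu|\le 1,\ v\in B\}$. I would first prove $B$ is bounded by scalar boundedness: for $l\in E'$, set $g=l\circ c$ (entire) and $G(w)=\frac{g(z+w)-g(z)}{w}$, which is entire with a removable singularity at $0$. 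Then $l(B)$ consists of divided differences $\frac{G(h)-G(h')}{h-h'}=\int_0^1 G'\bigl(h'+t(h-h')\bigr)\,dt$, bounded by $\sup_{\DD}|G'|<\infty$ since $\DD$ is convex and compact. Hence $l(B)$ is bounded for each $l$, so $B$ and $\tilde B$ are bounded. Finally, for $h\neq h'$ the factorization $D_h-D_{h'}=(h-h')v$ with $v\in B$ gives $D_h-D_{h'}=|h-h'|\,w$ with $w=\frac{h-h'}{|h-h'|}v\in\tilde B$ (the case $h=h'$ being trivial). Taking $b=\tilde B$ and $\lambda_{h,h'}=|h-h'|$, which tends to $0$ as $h,h'\to 0$, exhibits $(D_h)$ as a Mackey--Cauchy net.

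The main obstacle is item~\ref{property:whMC}: choosing the right bounded set. The naive first-order difference quotients $D_h$ are bounded but give no Mackey--Cauchy structure on their own; the trick is to pass to the second divided difference, whose boundedness lets one peel off a scalar factor $|h-h'|\to 0$. The two technical points are verifying that these second divided differences stay bounded (handled by the integral representation of $G'$ on the convex compact disk together with weak boundedness) and absorbing the modulus-one scalar into the balanced hull so that the factorization matches Definition~\ref{def:mackey} literally.
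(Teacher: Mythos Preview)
Your proof is correct and follows essentially the same approach as the paper: items~\ref{property:bounded} and~\ref{property:whbdd} are argued identically, and for item~\ref{property:whMC} both you and the paper show that the set of second divided differences is scalarly bounded (hence bounded by Lemma~\ref{scalbound}) via an integral representation involving the derivative of the entire function $l\circ c$. Your use of the auxiliary function $G(w)=\frac{g(z+w)-g(z)}{w}$ and its mean-value integral is a slightly cleaner packaging than the paper's two-step integral (the paper writes $\frac{g(z_1)-g(0)}{z_1}=\int_0^1 g'(rz_1)\,dr$ and then takes a further divided difference of $g'$), and your explicit passage to the balanced hull to absorb the unimodular factor is a nice touch the paper leaves implicit.
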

\begin{proof}
  Let $c$ be a strong holomorphic curve.

  \ref{property:bounded}.
  Let $l\in E'$. Since $l$ is linear and  continuous, we have:
    \begin{equation*}
      \lim\limits_{h \rightarrow 0} \frac{l \circ c (z + h) - l \circ c (z)}{h} = l(c'(z)).
    \end{equation*}
    Then, $l\circ c$ is complex derivable and $\forall z\in\CC,\ (l\circ c)'(z)= l (c'(z))$.
  
    \medskip
  Now, let $c$ be a weak holomorphic curve.

  \smallskip
  \ref{property:whbdd}.
  Let $b$ be a bounded set in $\mathbb{C}$ and $\bar b$ its
    closed absolutely convex closure. For every $ l \in E'$, $(l\circ
    c)( \bar{b})$ is compact as the image in $ \mathbb{C}$ of a
    compact set by a continuous function ($l\circ c$ is complex
    holomorphic and thus continuous). Then, $c(b)$ is weakly bounded and
    so bounded by Proposition~\ref{scalbound}.

  \medskip
  \ref{property:whMC}
 This proof is adapted from~\cite[I.2.1]{KriMi}. By translating
    $c$, we may assume that $z=0$. For any $l\in E'$, $l\circ c$ is
    holomorphic in $\CC$, hence infinitely complex-derivable and $l\circ c$ is Lipschitz continuous. Then, we have

  \begin{eqnarray*}
    \frac{1}{z_1-z_2}\left(\frac{l \circ c(z_1)- l \circ c(0)}{z_1}- \frac{l \circ c(z_2)- l \circ c(0)}{z_2} \right)
    & =& \int_0^1 \frac{(l \circ c)'(rz_1)-(l \circ c)'(rz_2)}{z_1-z_2} dr \\
    & =& \int_0^1 \frac{(l \circ c)'(rz_1)-(l \circ c)'(rz_2)}{rz_1-rz_2} rdr 
  \end{eqnarray*}

 Moreover the curve $r\mapsto \frac{(l \circ c)'(rz_1)-(l \circ c)'(rz_2)}{rz_1-rz_2}$ is locally bounded as $(l \circ c)$ is holomorphic. The set $\left\{ \frac{1}{z_1-z_2}\left(\frac{ c(z_1)-  c(0)}{z_1}- \frac{ c(z_2)- c(0)}{z_2}) \right)\mid z_1,z_2\in \mathbb D \right\} $ is then scalarly bounded and thus bounded by Proposition~\ref{scalbound}. This is equivalent to show that the difference quotient is Mackey-Cauchy (see Definition~\ref{def:mackey}).

\end{proof}

\begin{prop}
\label{weak_cont_strong_holocurve}  
\label{strong_weak_holo}
The strong holomorphic curves into a \mco space are exactly the weak holomorphic curves.
\end{prop}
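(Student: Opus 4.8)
The statement asserts that strong and weak holomorphic curves coincide. One direction is immediate: if $c$ is strong holomorphic, then $c$ is everywhere complex derivable, and by item \ref{property:bounded} of Lemma~\ref{lem:properties-holomorphic}, for every $l \in E'$ the composite $l \circ c$ is complex derivable, hence holomorphic in $\CC$. Thus every strong holomorphic curve is weak holomorphic, and the content of the proposition lies entirely in the converse. The plan is therefore to take a weak holomorphic curve $c$ and show that it is in fact strong holomorphic, \ie that the limit $\lim_{h \to 0} \frac{c(z+h) - c(z)}{h}$ exists in $E$ for every $z \in \CC$.

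\emph{Extracting the derivative from Mackey-completeness.}
First I would fix $z \in \CC$ and consider the difference quotient net $\left(\frac{c(z+h)-c(z)}{h}\right)_{h \in \DD \setminus \{0\}}$. By item \ref{property:whMC} of Lemma~\ref{lem:properties-holomorphic}, this net is Mackey-Cauchy. Since $E$ is \mco, every Mackey-Cauchy net converges in $E$; call its limit $c'(z) \in E$. This is the candidate derivative. The remaining task is to verify that this limit $c'(z)$ genuinely realizes the complex derivative of $c$ at $z$ in the topology of $E$, not merely as a formal Mackey-limit.

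\emph{Identifying the limit as the strong derivative.}
For this I would use the interplay between Mackey-convergence and the continuous dual. A Mackey-converging net is in particular a converging net (as noted after Definition~\ref{def:mackey}), so $\frac{c(z+h)-c(z)}{h} \to c'(z)$ in the topology of $E$. To confirm that $c'(z)$ is the correct value, I would test against $E'$: for every $l \in E'$, continuity and linearity of $l$ give
\begin{equation*}
  l(c'(z)) = \lim_{h \to 0} l\!\left(\frac{c(z+h)-c(z)}{h}\right) = \lim_{h \to 0} \frac{(l\circ c)(z+h) - (l\circ c)(z)}{h} = (l \circ c)'(z),
\end{equation*}
where the last equality holds because $l \circ c$ is holomorphic by hypothesis. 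This shows the Mackey-limit $c'(z)$ is consistent with the derivatives of all scalarizations $l \circ c$. Since $E$ is separated (being \lct), the family $E'$ separates points by the Hahn-Banach theorem (Proposition~\ref{HBsep}), so $c'(z)$ is uniquely determined; combined with the genuine topological convergence of the difference quotient to $c'(z)$, this establishes that $c$ is complex derivable at $z$ with derivative $c'(z)$, hence strong holomorphic.

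\emph{The main obstacle.}
The delicate point is the second step: the net is only known to be Mackey-Cauchy, and Mackey-completeness guarantees \emph{a} limit, but one must be careful that this Mackey-limit is the same as the limit of the difference quotient \emph{in the topology of $E$}, and that it deserves to be called $c'(z)$. The resolution is exactly the remark following Definition~\ref{def:mackey} that Mackey-convergence implies ordinary convergence, together with the point-separation furnished by continuous linear forms. Once these are in place the scalar computation against $E'$ closes the argument, so the heart of the proof is really the Mackey-Cauchy estimate already carried out in Lemma~\ref{lem:properties-holomorphic}.\ref{property:whMC}.
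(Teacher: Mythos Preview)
Your proof is correct and follows essentially the same route as the paper: both directions rely on Lemma~\ref{lem:properties-holomorphic}, with the nontrivial implication obtained by observing that the difference quotient is Mackey-Cauchy (item~\ref{property:whMC}) and hence converges in the \mco space $E$. Your additional verification via $E'$ and point-separation is harmless but unnecessary---once the difference quotient Mackey-converges (hence converges in the topology of $E$), complex derivability at $z$ is immediate by definition, which is exactly where the paper stops.
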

\begin{proof}
  A strong holomorphic curve is weak holomorphic by the first property of the preceding lemma.

  Now, let $c:\CC\rightarrow E$ be a weak holomorphic curve into a \mco space $E$. Then by the third property of the preceding lemma, for all $z\in\CC$, the difference quotient $(\frac{c(z+h)-c(z)}{h})_{h\in\mathbb D}$ is Mackey-Cauchy and thus converges in $E$,  since it is \mco. Hence, $c$ is complex derivable and its derivative $c'(z)$ is the limit of the difference quotient.
\end{proof}
From now on, an \emph{holomorphic} curve is either a weak or strong holomorphic curve.

\begin{lemma} 
\label{lem:int-abs-conv}
  Let $b$ be an absolutely convex and closed subset of $E$, $\gamma$ be a path in $\CC$ and $f: \CC\to E$ be continuous. If for any $z\in\gamma([0;1])$, $f(z)\in b$, then the integral of $f$ on the path $\gamma$ is in $b$.
\end{lemma}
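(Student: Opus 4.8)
The plan is to reduce the vector-valued claim to a family of scalar estimates by duality, reading the integral of $f$ along $\gamma:[0,1]\to\CC$ as the mean value $I=\int_0^1 f(\gamma(t))\,dt$ of the continuous curve $f\circ\gamma$ over the parameter interval; it is exactly the total mass $1$ of $[0,1]$ that will deliver membership in $b$ rather than in a dilate of $b$. First I would record that $I$ exists: the range $f(\gamma([0,1]))$ is the continuous image of a compact set, hence compact and bounded, so Mackey-completeness of $E$ ensures that the Riemann sums of $f\circ\gamma$ converge to $I$. Next, since $b$ is closed and absolutely convex, I would invoke the bipolar theorem (a form of Hahn--Banach separation) to describe it as
$$b=\{x\in E\ \mid\ \forall \ell\in b^\circ,\ \abs{\ell(x)}\le 1\},\qquad b^\circ=\{\ell\in E'\ \mid\ \forall y\in b,\ \abs{\ell(y)}\le 1\}.$$
It then suffices to check $\abs{\ell(I)}\le 1$ for every $\ell$ in the polar $b^\circ$.

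Fixing such an $\ell$, I would use that a continuous linear form commutes with the integral, so that $\ell(I)=\int_0^1 \ell(f(\gamma(t)))\,dt$. For each $t$ the hypothesis gives $f(\gamma(t))\in b$, whence $\abs{\ell(f(\gamma(t)))}\le 1$ because $\ell\in b^\circ$; the integrand $t\mapsto \ell(f(\gamma(t)))$ is continuous since $\ell$, $f$ and $\gamma$ are. Bounding the resulting scalar integral,
$$\abs{\ell(I)}=\left\lvert\int_0^1 \ell(f(\gamma(t)))\,dt\right\rvert\le \int_0^1 \abs{\ell(f(\gamma(t)))}\,dt\le \int_0^1 1\,dt=1,$$
and since $\ell\in b^\circ$ was arbitrary the bipolar description gives $I\in b$.

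I would also mention the dual-free variant, which makes the geometry transparent: each Riemann sum $S_P=\sum_i (t_{i+1}-t_i)\,f(\gamma(\xi_i))$ attached to a partition $P$ of $[0,1]$ has nonnegative weights $t_{i+1}-t_i$ summing to $1$, so it is a convex combination of points $f(\gamma(\xi_i))\in b$ and therefore lies in $b$ by convexity; as $S_P\to I$ and $b$ is closed, $I\in b$ follows directly.

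The hard part is conceptual rather than technical: it is the normalisation of the integral. Testing against $\ell$ produces the factor $\int_0^1 1\,dt=1$ precisely because the integration is performed over the parameter interval of unit mass, whereas the same computation for $\int_\gamma f\,dz=\int_0^1 f(\gamma(t))\gamma'(t)\,dt$ would replace it by the length $\int_0^1\abs{\gamma'(t)}\,dt$ and only yield membership in that dilate of $b$. Keeping the weights summing to $1$ throughout is therefore the whole point, and it is exactly this that lets closedness together with (absolute) convexity of $b$ carry the argument; the remaining ingredients---existence of the vector integral and the interchange of $\ell$ with it---are the routine properties of the weak integral of a continuous curve into a \mco space.
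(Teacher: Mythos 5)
Your proposal is correct, and it actually contains the paper's entire proof as your closing ``dual-free variant'': the paper argues in three sentences that $\int_\gamma f=\int_0^1 f(\gamma(t))\,dt$ is the limit of the Riemann sums of $f\circ\gamma$, that each such sum lies in $b$ by absolute convexity (nonnegative weights summing to $1$), and that closedness of $b$ passes this to the limit. Your primary route via the bipolar theorem is a genuinely different reduction: writing $b=b^{\circ\circ}$ and checking $\abs{\ell(I)}\le 1$ for every $\ell\in b^\circ$ makes the argument independent of how the vector integral is constructed (anything commuting with continuous linear forms will do), at the price of invoking duality machinery that the direct convexity argument never needs --- and since the paper explicitly works around the failure of Hahn--Banach extension in the bounded setting, the elementary argument is more in its spirit, though the separation theorem for \emph{continuous} forms that your bipolar step uses is available (Proposition~\ref{HBsep}) and is indeed used elsewhere (e.g.\ Proposition~\ref{Cauchy_ineq_curves}). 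Your point about normalisation is exactly right and matches the paper's convention $\int_\gamma f=\int_0^1 f(\gamma(t))\,dt$ with no $\gamma'$ weight; it is the unit total mass, together with \emph{absolute} convexity absorbing the complex phases that appear in the Cauchy-formula applications, that yields membership in $b$ itself rather than in a dilate. One small caution: Mackey-completeness guarantees convergence of Mackey-Cauchy nets, and the Riemann sums of a merely continuous curve are Cauchy but not obviously Mackey-Cauchy, so your claim that Mackey-completeness alone ensures existence of $I$ is slightly optimistic; the paper sidesteps this by presupposing the integral exists, and in its applications the integrands come from holomorphic curves, which locally factor through Banach spaces $E_b$ (Lemma~\ref{fact_Banach_holocrv}), where existence is classical. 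This is harmless here, but worth flagging.
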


\begin{proof}
As $\int_{\gamma} f = \int_0^1 f(\gamma(t)) dt$, this integral can be computed as the limit of the Riemann sums over $[0;1]$ of $f \circ \gamma$. As $b$ is absolutely convex, each of these sums is in $b$. As it is closed, we have also $\int_{\gamma} f  \in b $.
\end{proof}

\begin{prop}
\label{Cauchy_ineq_curves}
Let $c:\CC\rightarrow E$ be a holomorphic curve. There is $b$ absolutely convex, closed bounded subset of $E$ such that:
\begin{center}
 $c ( \mathbb{D} ) \subset b $ and $\forall n\in\mathbb N$, $ c^{(n)}(\mathbb{D}) \subset n ! b $.
\end{center}

\end{prop}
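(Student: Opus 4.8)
The statement asserts a Cauchy-type inequality for holomorphic curves valued in an \mco space: there is a single disk $b$ controlling both $c(\mathbb D)$ and all the derivatives $c^{(n)}(\mathbb D)$ up to the factor $n!$. My plan is to use the Cauchy integral formula scalarly and then transfer the bound back to $E$ via the Hahn-Banach separation (Proposition~\ref{HBsep}) together with the scalar boundedness characterization (Lemma~\ref{scalbound}).

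First I would build the candidate disk. Since $c$ is holomorphic, it is continuous, so its restriction to a slightly larger compact disk is bounded; more to the point, by Lemma~\ref{lem:properties-holomorphic}\ref{property:whbdd} a (weak = strong) holomorphic curve is bounded, so $c$ of any bounded set is bounded. I would fix a radius $r$ slightly larger than $1$ (say the circle $|z|=r$ with $r>1$, chosen so $\mathbb D$ lies strictly inside) and let $K = c(\{|z|=r\})$, which is bounded in $E$. Then take $b$ to be the closed absolutely convex hull of $c(\mathbb D) \cup K$. This $b$ is absolutely convex and closed by construction, and bounded because it is the disked hull of finitely many bounded sets; it visibly contains $c(\mathbb D)$, giving the first assertion for free.

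The heart of the argument is the derivative bound, and here the main obstacle is that we cannot differentiate $c$ componentwise or use a vector-valued residue calculus directly — we only have the scalar Cauchy theory. The plan is to test against every $\ell\in E'$. For such $\ell$, the curve $\ell\circ c$ is an ordinary holomorphic function on $\CC$, and by Lemma~\ref{lem:properties-holomorphic}\ref{property:bounded} we have $(\ell\circ c)^{(n)}(z) = \ell(c^{(n)}(z))$ (this identity, extended inductively to all orders, is exactly what lets us move $\ell$ past the derivative). Applying the scalar Cauchy formula on the circle $|z|=r$ and the resulting Cauchy integral representation
\begin{equation*}
\frac{\ell\bigl(c^{(n)}(z')\bigr)}{n!} = \frac{1}{2\pi i}\int_{|z|=r}\frac{\ell(c(z))}{(z-z')^{n+1}}\,dz,
\end{equation*}
valid for $z'\in\mathbb D$, I would bound the integrand: on the contour $c(z)\in K\subseteq b$, so $\ell(c(z))$ lies in the scalar bound for $b$ under $\ell$, and the geometric factor $|z-z'|^{-(n+1)}$ is controlled by $(r-1)^{-(n+1)}$. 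The key point is that Lemma~\ref{lem:int-abs-conv} lets me keep the integral itself inside a scaled copy of $b$ rather than only bounding its norm: since $z\mapsto c(z)/(z-z')^{n+1}$ takes values in a fixed scalar multiple of the absolutely convex closed set $b$ along the contour, the vector-valued integral $c^{(n)}(z')/n!$ lands in a scalar multiple of $b$.

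To finish, I would verify that the scalar estimates assemble into the claimed vector inclusion. By choosing $r$ appropriately (concretely, one can arrange $r$ so that the geometric and length factors absorb into the defining hull, or simply enlarge $b$ by a fixed scalar at the end — the factor $n!$ in the statement already provides the needed slack), the inclusion $c^{(n)}(\mathbb D)\subset n!\,b$ follows, either directly from the integral-in-$b$ argument via Lemma~\ref{lem:int-abs-conv}, or scalarly: for every $\ell\in E'$ one gets $|\ell(c^{(n)}(z')/n!)|$ bounded by the defining gauge of $b$, so $c^{(n)}(z')/n!$ is scalarly in (a fixed multiple of) $b$, and since $b$ is closed and absolutely convex, Lemma~\ref{scalbound} together with Proposition~\ref{HBsep} forces $c^{(n)}(z')\in n!\,b$. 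The only subtlety to watch is keeping a \emph{single} bounded $b$ uniform in both $n$ and $z'\in\mathbb D$; this is precisely what the choice $r>1$ secures, since $(r-1)^{-1}$ is a fixed constant that the $n!$ slack (and a one-time rescaling of $b$) can absorb.
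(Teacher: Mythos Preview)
Your plan is the paper's plan: test scalarly via $\ell\in E'$, apply the complex Cauchy integral formula to $\ell\circ c$, invoke Lemma~\ref{lem:int-abs-conv} to keep the contour integral inside a multiple of $b$, and finish with Hahn--Banach separation (Proposition~\ref{HBsep}). The only cosmetic difference is the contour: the paper integrates over $|h|=1$ via a substitution so that the denominator $h^{n+1}$ has modulus $1$ and no $n$-dependent geometric factor appears, whereas you integrate over a fixed circle $|z|=r$ enclosing $\mathbb D$.

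There is, however, a genuine quantitative slip in your version. The claim that ``the factor $n!$ in the statement already provides the needed slack'' is wrong: the $n!$ is exactly what the Cauchy formula produces, so once you write $c^{(n)}(z')/n!$ as the contour integral there is no factorial left to spend. On the circle $|z|=r$ with $z'\in\mathbb D$ the integrand lies only in $(r-1)^{-(n+1)}\, b$, and this constant is \emph{not} bounded uniformly in $n$ for your stated choice of $r$ ``slightly larger than $1$''. The fix is not a rescaling of $b$ alone (that would have to depend on $n$) but the choice of contour: take $r\geq 2$, so that $(r-1)^{-(n+1)}\leq 1$ for every $n$, and then a single rescaling of $b$ by the arc-length constant suffices. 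Equivalently, and perhaps more cleanly: define $b$ as the closed absolutely convex hull of $c(2\mathbb D)$ and apply the Cauchy formula on the circle $|z-z'|=1$ around each $z'\in\mathbb D$; the denominator then has modulus $1$ and Lemma~\ref{lem:int-abs-conv} gives $c^{(n)}(z')/n!\in b$ directly.
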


\begin{proof}
Thanks to Property~\ref{property:bounded} of Lemma~\ref{lem:properties-holomorphic}, $c$ is bounded. This justifies the existence of $b$ such that $c ( \mathbb{D} ) \subset b $.
Moreover for every $l \in E' $,  the curve $l \circ c $ is holomorphic in $\CC$ according to Proposition~\ref{strong_weak_holo}. Thus for every $z \in \CC$, 
$$ \frac{(l \circ c)^{(n)}(z)}{n!} =  \frac{ l (c^{(n)}(z))}{n!} = \frac{1}{2 \pi i } \int\limits_{|h|=1} \frac{l(c(h z))}{h^{n+1}} d h .$$
Thus $l \circ c^{(n)} ( \mathbb{D}) \subset l (n! b)$ (see Lemma~\ref{lem:int-abs-conv}).
 By the Hahn-Banach separation theorem applied to $b$ and to every $\{z\}$ for $z \in \mathbb{D}$, we get that $c^{(n)} ( \mathbb{D}) \subset n! b$.
\end{proof}

\begin{prop}\label{prop:holo_curve_dec}
Let $c:\mathbb C\to E$ be a holomorphic curve.  For any $z  \in \mathbb{C}$, $c^{(n)} (z) \in E$ and $c$ can be uniquely decomposed as a series uniformly converging on bounded disks of $\CC$:
$$c: z\mapsto  \sum\limits_n \frac{1}{n!}c^{(n)}(0) z^n .$$
Moreover, this series is Mackey-converging at each point of $\CC$.
\end{prop}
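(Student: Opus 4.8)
The plan is to proceed in three stages: first show that every iterated derivative $c^{(n)}$ exists as a curve valued in $E$ and is compatible with continuous forms; then prove that the candidate series converges, both in the Mackey sense and uniformly on bounded disks, using the Cauchy inequality of Proposition~\ref{Cauchy_ineq_curves}; and finally identify the sum with $c$ and deduce uniqueness by testing against $E'$ and appealing to Hahn--Banach.

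First I would establish by induction on $n$ that $c^{(n)}:\CC\to E$ exists and that $l\circ c^{(n)}=(l\circ c)^{(n)}$ for every $l\in E'$. By Proposition~\ref{strong_weak_holo}, the holomorphic curve $c$ is in particular strong holomorphic, so $c'$ exists everywhere, and Property~\ref{property:bounded} of Lemma~\ref{lem:properties-holomorphic} gives $l\circ c'=(l\circ c)'$. As $c$ is defined and complex-derivable on all of $\CC$, each $l\circ c$ is entire, so $(l\circ c)'$ is again holomorphic; hence $c'$ is weak holomorphic, thus strong holomorphic by Proposition~\ref{strong_weak_holo}, and $c''$ exists. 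Iterating yields $c^{(n)}(z)\in E$ together with the identity $l(c^{(n)}(z))=(l\circ c)^{(n)}(z)$.

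Next, to obtain convergence I would apply the Cauchy inequality not only to $c$ but to the rescaled curves $c_R:w\mapsto c(Rw)$ for each $R>0$, whose $n$-th derivative is $w\mapsto R^n c^{(n)}(Rw)$. Proposition~\ref{Cauchy_ineq_curves} then provides an absolutely convex, closed, bounded $b_R$ with $R^n c^{(n)}(0)\in n!\,b_R$, that is $\tfrac{1}{n!}c^{(n)}(0)\in R^{-n}b_R$. For $|z|\le r<R$ this gives $\tfrac{1}{n!}c^{(n)}(0)\,z^n\in (r/R)^n b_R$ by absolute convexity, so the partial sums satisfy $\sum_{n=N}^{M}\tfrac{1}{n!}c^{(n)}(0)\,z^n\in\bigl(\sum_{n\ge N}(r/R)^n\bigr)b_R$. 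Since $r<R$, the scalar tail tends to $0$, which exhibits the sequence of partial sums as a Mackey-Cauchy net (Definition~\ref{def:mackey}); by Mackey-completeness of $E$ it converges. As the scalar bound is independent of $z\in r\mathbb{D}$, the convergence is uniform on the disk $r\mathbb{D}$, hence on every bounded disk, and Mackey-converging at each point. Write $g(z)=\sum_n\tfrac{1}{n!}c^{(n)}(0)z^n$ for the limit.

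Finally I would identify $g$ with $c$. Fix $z$; for any $l\in E'$, continuity lets $l$ commute with the convergent sum, so $l(g(z))=\sum_n\tfrac{1}{n!}l(c^{(n)}(0))z^n=\sum_n\tfrac{1}{n!}(l\circ c)^{(n)}(0)z^n$, which equals $(l\circ c)(z)=l(c(z))$ because $l\circ c$ is entire and coincides with its Taylor series. Since $E'$ separates points by the Hahn--Banach theorem (Proposition~\ref{HBsep}), $g(z)=c(z)$. Uniqueness follows the same way: if $c(z)=\sum_n a_n z^n$ with $a_n\in E$, applying any $l\in E'$ and using uniqueness of the scalar power series expansion of $l\circ c$ forces $l(a_n)=\tfrac{1}{n!}(l\circ c)^{(n)}(0)=l\bigl(\tfrac{1}{n!}c^{(n)}(0)\bigr)$, whence $a_n=\tfrac{1}{n!}c^{(n)}(0)$ by Hahn--Banach. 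The main obstacle is not any single identity but upgrading the unit-disk Cauchy estimate to convergence on all of $\CC$ in the strong Mackey and uniform-on-bounded-disks sense demanded by the statement; the rescaling $c_R(w)=c(Rw)$ is exactly what delivers a bounded disk $b_R$ absorbing the coefficients with the geometric decay needed for a Mackey-Cauchy net.
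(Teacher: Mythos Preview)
Your proof is correct and follows the same overall architecture as the paper's: reduce to the scalar Taylor expansion of $l\circ c$, obtain a coefficient bound giving Mackey-convergence, and identify the sum with $c$ via the point-separation of $E'$. The one genuine difference is how you obtain the coefficient estimate. The paper argues that for each $r>0$ the set $\{\tfrac{1}{n!}c^{(n)}(0)\,r^n\}$ is scalarly bounded (since the terms of the convergent scalar series $\sum_n \tfrac{l(c^{(n)}(0))}{n!}r^n$ tend to~$0$) and then invokes Lemma~\ref{scalbound} to get a bounded $b_r$; you instead rescale the curve and apply Proposition~\ref{Cauchy_ineq_curves} to $c_R(w)=c(Rw)$ to produce $b_R$ directly. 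Your route is slightly more self-contained (it reuses a result already packaged in the paper) and you are more careful than the paper on two points: you explicitly set up the induction showing $c^{(n)}$ exists with $l\circ c^{(n)}=(l\circ c)^{(n)}$, and you actually prove uniqueness, which the paper asserts in the statement but does not argue in its proof.
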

\begin{proof}
For every $l \in E'$, $l \circ c$ is a holomorphic function from $\CC$ to $\CC$. It does thus correspond in every point to its Taylor series in $0$, and as $c^{(n)} (z) \in E$ for every $z$ we have 

\begin{equation*}
 l \circ c (z)   = \sum\limits_n \frac{(l \circ c )^{(n)}(0)}{n!} z^n 
  = \sum\limits_n \frac{l  (c ^{(n)}(0))}{n!} z^n
  = l (\sum\limits_n \frac{ c ^{(n)}(0)}{n!} z^n)
\end{equation*}
As $E'$ is point separating, we have for every $z \in \CC $ : 
$$c(z) = \sum\limits_n \frac{1}{n!}c^{(n)}(0) z^n .$$

Moreover, for any $r > 0$, the closed and absolutely convex closure $b_r$ of the set $\{ \frac{1}{n!}c^{(n)}(0) r^n \ |\  n \in \mathbb{N} \}$ is bounded. It is indeed weakly bounded as the power series $ \sum\limits_n \frac{l  (c ^{(n)}(0))}{n!} z^n$ converges uniformely on the open disk of center $0$ and radius $r$. 
Thus for every $\abs{z} <r$, we have:
$$ \sum_{n \geq N} \frac{1}{n!}c^{(n)}(0) z^n \in \sum_{n \geq N} \left(\frac{\abs z}{r}\right) b_r \subset \left(\frac{\abs z}{r}\right)^N \frac{1}{1-\frac{\abs z}{r}} b_r$$
and the series   $\sum_n \frac{1}{n!}c^{(n)}(0) z^n$ does Mackey-converge towards $c(z)$.


\end{proof}

\paragraph*{Power series and holomorphy.}
The goal of this paragraph is to prove that power series, as presented in Definition~\ref{def:powerseries}, preserve holomorphic curves (see Theorem~\ref{powerseries_holo}). This will show that they follow the same pattern as smooth functions that preserve smooth curves. As mentioned in~\cite[II.7.19.6]{KriMi}, functions preserving holomorphic curves on $\DD$ are locally power series, but we do not know if the preservation of holomorphic curves characterizes our power series.

The following property is adapted from~\cite[II.7.6]{KriMi}. 

\begin{lemma}
\label{fact_Banach_holocrv}
A holomorphic curve into $E$ locally factors through a Banach space $E_b$ generated by a bounded set $b\subset E$  (see Definition~\ref{E_b}).
\end{lemma}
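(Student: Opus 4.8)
The plan is to reduce the statement about holomorphic curves to the decomposition already established in Proposition~\ref{prop:holo_curve_dec}, and then extract from that decomposition an explicit bounded disk through whose span the curve factors locally. First I would fix $z_0\in\CC$; by translating the curve (replacing $c$ by $z\mapsto c(z+z_0)$, which is again holomorphic) it suffices to produce a bounded absolutely convex disk $b$ and a neighbourhood of $0$ on which $c$ takes values in $E_b$ and is holomorphic as a curve into the Banach space $E_b$ (recall $E_b$ is Banach since $E$ is \mco, by Proposition~\ref{E_b_mco}).

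The key input is the Mackey-convergence statement at the end of Proposition~\ref{prop:holo_curve_dec}: for any fixed $r>0$, the closed absolutely convex hull $b_r$ of the set $\{\tfrac{1}{n!}c^{(n)}(0)r^n\mid n\in\NN\}$ is bounded, and for $\abs z<r$ the tail estimate
\begin{equation*}
\sum_{n\geq N}\tfrac{1}{n!}c^{(n)}(0)z^n\in\left(\tfrac{\abs z}{r}\right)^N\tfrac{1}{1-\frac{\abs z}{r}}\,b_r
\end{equation*}
holds. Taking $b=b_r$ and restricting to the open disk of radius $\rho<r$, every partial sum lies in $E_b$, and the tail estimate shows the series $\sum_n\tfrac{1}{n!}c^{(n)}(0)z^n$ is Cauchy in the Minkowski norm $p_b$ of $E_b$, uniformly for $\abs z\leq\rho$. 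Since $E_b$ is complete, the series converges in $E_b$, so $c(z)\in E_b$ for $\abs z<\rho$ and the inclusion $E_b\hookrightarrow E$ is bounded, hence continuous into $E$; this identifies the $E_b$-limit with $c(z)$. Thus $c$ factors as a map into $E_b$ on a neighbourhood of the chosen point.

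It then remains to check that $c\colon\{\abs z<\rho\}\to E_b$ is holomorphic \emph{as a curve into the Banach space} $E_b$, not merely valued there. For this I would verify that it is the sum of a norm-convergent power series in $E_b$, using the same tail estimate to control $p_b$ of the remainder after differentiating term by term; a locally uniformly convergent power series in a Banach space is holomorphic, and its derivatives are obtained by termwise differentiation, so $c$ is strong holomorphic into $E_b$. The main obstacle I anticipate is precisely this last point: the bounded disk $b_r$ absorbs the coefficients $\tfrac{1}{n!}c^{(n)}(0)r^n$ but one must be careful that the \emph{derivatives} of the series also stay in a common $E_b$, which may require enlarging $b$ (e.g.\ passing from $b_r$ to $b_{r'}$ for some $r<r'$, or absorbing the curve's values $c(\{\abs z\leq\rho\})$ into the disk as well) so that both $c$ and all its derivatives land in a single Banach space on the chosen neighbourhood. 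Keeping track of the dependence of the radius of factorization on $r$ versus the radius $\rho$ of the neighbourhood is the delicate bookkeeping here, but no genuinely new idea beyond Proposition~\ref{prop:holo_curve_dec} should be needed.
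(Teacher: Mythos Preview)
Your argument is correct, but it takes a different route from the paper's. The paper does not invoke Proposition~\ref{prop:holo_curve_dec} at all: it fixes a compact neighbourhood $w$ of the given point, takes $b$ to be the closed absolutely convex hull of the \emph{image} $c(w)$ (bounded by Lemma~\ref{lem:properties-holomorphic}), and then uses the scalar Cauchy inequality for each $l\in E'$ to see that $\tfrac{r^k}{k!}(l\circ c)^{(k)}(z)\in l(b)$; summing the Taylor series weakly and applying Hahn--Banach separation gives $c(z')\in\sum_k(\tfrac{z-z'}{r})^k b\subset E_b$ for $z'$ near $z$. By contrast you take $b=b_r$ to be the closed absolutely convex hull of the \emph{Taylor coefficients} and use the tail estimate already packaged in Proposition~\ref{prop:holo_curve_dec} to get norm convergence in $E_{b_r}$ directly. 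Your route is arguably cleaner once Proposition~\ref{prop:holo_curve_dec} is available, and it has the bonus that it immediately exhibits $c$ as a norm-convergent power series in the Banach space $E_{b_r}$, hence holomorphic into $E_{b_r}$ --- something the paper's proof does not make explicit (it only shows $c$ takes values in $E_b$), even though this stronger form is what is actually used in Theorem~\ref{powerseries_holo}.

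Your final worry about enlarging the disk is unnecessary: once the series $\sum_n\tfrac{1}{n!}c^{(n)}(0)z^n$ converges in the norm $p_{b_r}$ for $\lvert z\rvert<r$, standard Banach-space theory gives that the sum is holomorphic on that disk with derivatives obtained termwise, and these derivatives automatically lie in the \emph{same} $E_{b_r}$ (on any smaller disk $\lvert z\rvert\leq\rho<r$). No passage to a larger $b$ is needed.
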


\begin{proof}
Consider $c$ a holomorphic curve, $z \in \mathbb{C}$ and $w$ a compact neighbourhood of $z$.  Let us denote $b$ the absolutely convex closed closure of $c(w)$. For any $l\in E'$, the Cauchy inequality~\eqref{eq:CI} gives us for $r$ small enough
$$ \frac{r^k}{k!}(l\circ c)^{(k)}(z) \in l(b).$$
Thus for $z'$ close enough to $z$ in $\mathbb{C}$,
$$ (l\circ c)(z')=\sum\limits_{k \geq 0} \left(\frac{z-z'}{r} \right)^k \frac{r^k}{k!}(l\circ c)^{(k)}(z)\ \in\ \sum\limits_{k \geq 0} \left(\frac{z-z'}{r} \right)^k l(b).$$
Then, as E' is point separating, we get that 
$$c(z') \in \sum\limits_{k \geq 0} \left(\frac{z-z'}{r} \right)^k b.$$
And for $z'$ close enough to $z$, $ c(z') \in E_b$.
\end{proof}

Now, we want to show that for every holomorphic curve $c$, if $f : E \rightarrow  F$ is a power series, then $f \circ c$ is again a holomorphic curve (see Theorem~\ref{powerseries_holo}). Mainly, this is shown by working on the $E_b$, so as to use Banach spaces properties. Remember that we are working in \mco spaces, and that a space is \mco if and  by only if each $E_b$ is a Banach space (see Proposition \ref{E_b_mco}). This is a generalization of a result by Kriegl and Michor \cite[II.7.17]{KriMi}.
 
\begin{lemma}\label{lem:bdd-monom}
Let $f = \sum_k f_k$ be a power series from $E$ to $F$. For any bounded set $b$ of $E$, the set $\{ f_k(x)\ |\ x\in b \}$ is bounded in $F$. 
\end{lemma}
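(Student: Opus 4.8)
The plan is to reduce the statement to a problem about a single complex variable and to apply the classical one-dimensional Cauchy estimate, transported scalarly through the continuous dual. Fix a bounded set $b\subset E$. First I would record that the set $b_\DD=\{\lambda x\mid \lambda\in\DD,\ x\in b\}$ is again bounded in $E$: a bounded set is absorbed by every absolutely convex $0$-neighbourhood $U$, and since such a $U$ is balanced, $b\subseteq\mu U$ forces $\lambda b\subseteq\mu U$ for all $\abs\lambda\le 1$, hence $b_\DD\subseteq\mu U$. By Proposition~\ref{pws_borno} the power series $f$ is \bornof, so $f(b_\DD)$ is bounded in $F$; this single bounded set will control everything.

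Next, for a fixed $x\in E$ I would exploit the $k$-homogeneity of the monomials to write, for all $\lambda\in\CC$,
$$f(\lambda x)=\sum_{k=0}^\infty f_k(\lambda x)=\sum_{k=0}^\infty \lambda^k f_k(x),$$
the sum converging pointwise because $\lambda x\in E$. For any continuous linear form $l\in F'$, pulling $l$ through this convergent sum by continuity gives that $g_{l,x}:\lambda\mapsto l(f(\lambda x))$ equals the everywhere-convergent power series $\sum_k \lambda^k\, l(f_k(x))$; thus $g_{l,x}$ is entire and, by uniqueness of the coefficients of a power series, its $k$-th Taylor coefficient at $0$ is exactly $l(f_k(x))$. (Equivalently, $\lambda\mapsto f(\lambda x)$ is a weak, hence by Proposition~\ref{strong_weak_holo} a strong holomorphic curve into the \mco space $F$, whose coefficients are identified with the $f_k(x)$ via Proposition~\ref{prop:holo_curve_dec}.)

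The key step is then the Cauchy inequality~\eqref{eq:CI} for $g_{l,x}$ on the unit circle: for every $k$,
$$\abs{l(f_k(x))}=\left|\frac{g_{l,x}^{(k)}(0)}{k!}\right|\le \sup_{\abs\lambda=1}\abs{g_{l,x}(\lambda)}=\sup_{\abs\lambda=1}\abs{l(f(\lambda x))}\le \sup_{y\in b_\DD}\abs{l(f(y))}.$$
The crucial observation is that the right-hand side is finite, since $f(b_\DD)$ is bounded and therefore $l(f(b_\DD))$ is a bounded subset of $\CC$, and that it is independent of $x\in b$ (and in fact of $k$). Consequently $\{f_k(x)\mid x\in b\}$ is scalarly bounded against every $l\in F'$, and Lemma~\ref{scalbound} — whose reverse implication is the Mackey--Arens argument using only continuous forms — yields that $\{f_k(x)\mid x\in b\}$ is bounded in $F$.

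I expect the only genuinely delicate point to be the justification that $g_{l,x}$ is an entire scalar function with coefficients precisely $l(f_k(x))$; this rests on the everywhere pointwise convergence of $\sum_k f_k(\lambda x)$ together with the continuity of $l$. Once that is in place, the uniformity over $x\in b$ costs nothing, because the whole estimate is dominated by the single bounded image $f(b_\DD)$, and the remainder is just the standard one-variable Cauchy estimate applied through $F'$, in the same spirit as the proof of Proposition~\ref{Cauchy_ineq_curves}.
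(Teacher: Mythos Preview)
Your argument is correct, but it is genuinely different from the paper's. The paper gives an elementary proof that never leaves the topology of $F$: writing $S_N=\sum_{k\le N} f_k$, uniform convergence of $(S_N)$ on $b$ means that for every $0$-neighbourhood $U$ there is $p$ with $(S_n-S_m)(b)\subset U$ for $n,m\ge p$; in particular $f_k(b)\subset U$ for $k>p$, while the finitely many remaining $f_0,\dots,f_p$ are individually \bornof, so $f_k(b)\subset\lambda_k U$. Taking $\lambda=\max\{1,\lambda_0,\dots,\lambda_p\}$ gives $\bigcup_k f_k(b)\subset\lambda U$. No complex analysis, no dual, no Mackey--Arens.

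Your route instead goes scalarly: for each $l\in F'$ you observe that $\lambda\mapsto l(f(\lambda x))=\sum_k\lambda^k l(f_k(x))$ is an entire scalar function, apply the one-variable Cauchy estimate on the unit circle to get $\abs{l(f_k(x))}\le\sup_{y\in b_\DD}\abs{l(f(y))}$ uniformly in $k$ and $x\in b$, and conclude via Lemma~\ref{scalbound}. This is valid and there is no circularity with Theorem~\ref{powerseries_holo}, since you establish entirety of $g_{l,x}$ directly from the pointwise convergence of $\sum_k f_k(\lambda x)$ and the continuity of $l$, not by invoking that theorem. What your approach buys is a sharper conclusion (the $f_k(b)$ sit in the closed absolutely convex hull of $f(b_\DD)$) and it anticipates the Cauchy-inequality machinery the paper develops just afterwards in Proposition~\ref{Cauchy_ineq_pws}; what the paper's approach buys is that it needs nothing beyond the definition of uniform convergence on bounded sets, would work verbatim over $\RR$, and avoids the weak--strong bounded passage.
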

\begin{proof}
Let us write $S_n = \sum_{k \leq n} f_k$, and fix $b$ any bounded set of $E$. Then, by definition of power series, $S_n$ converges uniformly on bounded sets, hence for every $U$ neighbourhood of $0$ in $E$, there is $p$ such that if $n , m \geq p$ we have $(S_n - S_m )(b) \subset U$. In particular, for $k \geq p+1$,  $f_k (b) \subset U$. Because the $f_0,\dots, f_p$ are \bornof, then there are $\lambda_0, \dots, \lambda_k \in \mathbb{C}$ such that $ f_j(b) \in \lambda_j U$. Finally, we get $\{f_k\, |\, k \in \mathbb{N}\} (b) \subset  \max\{1,\lambda_0,\dots, \lambda_k \} U$.

\end{proof}

\begin{theorem}
\label{powerseries_holo}
Power series send holomorphic curves on holomorphic curves.
\end{theorem}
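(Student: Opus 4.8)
The plan is to reduce the statement to a question about scalar-valued holomorphic functions on $\CC$ and then to invoke the classical Weierstrass convergence theorem. Let $f=\sum_k f_k$ be a power series from $E$ to $F$ and let $c:\CC\to E$ be a holomorphic curve; we must show that $f\circ c:\CC\to F$ is holomorphic. Since $F$ is \mco, Proposition~\ref{strong_weak_holo} tells us it is enough to prove that $f\circ c$ is \emph{weak} holomorphic, \ie that $l\circ f\circ c:\CC\to\CC$ is complex holomorphic for every $l\in F'$. Holomorphy being a local property, I would fix a point $z_0\in\CC$ and argue on a neighbourhood of it.

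First I would use Lemma~\ref{fact_Banach_holocrv} to factor $c$ locally through a Banach space: there is a compact neighbourhood $w$ of $z_0$ and a bounded disk $b\subset E$ such that $c(w)\subset E_b$ and $c:w\to E_b$ is holomorphic; after rescaling $b$ we may assume $c(w)\subset b$. The point of working inside $E_b$ is twofold. On the one hand, $b$ is a bounded set of $E$, so the partial sums $S_N=\sum_{k\le N}f_k$ converge to $f$ uniformly on $b$ by Definition~\ref{def:powerseries}. On the other hand, each symmetric multilinear map $\tilde f_k$ restricts to a bounded, hence continuous, $k$-linear map on the Banach space $E_b$.

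The two remaining ingredients are then: (a) each $l\circ f_k\circ c$ is holomorphic, and (b) the series $\sum_k l\circ f_k\circ c$ converges uniformly on $w$. For (a), by Proposition~\ref{prop:holo_curve_dec} the curve $c$ has, near $z_0$ (after translation), a Mackey- and hence norm-convergent expansion $c(z)=\sum_n a_n z^n$ in $E_b$; feeding this into $f_k(c(z))=\tilde f_k(c(z),\dots,c(z))$ and using the continuity and $k$-linearity of $\tilde f_k$ on $E_b$ shows that $f_k\circ c$ is a holomorphic curve into $F$, whence $l\circ f_k\circ c$ is holomorphic by Property~\ref{property:bounded} of Lemma~\ref{lem:properties-holomorphic}. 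For (b), the uniform convergence of $S_N$ to $f$ on $b$ gives, for every $0$-neighbourhood $U$ of $F$, an integer $N$ with $(f-S_N)(c(z))\in U$ for all $z\in w$; applying $l$ shows that $l\circ S_N\circ c=\sum_{k\le N}l\circ f_k\circ c$ converges to $l\circ f\circ c$ uniformly on $w$. Each $l\circ S_N\circ c$ is a finite sum of holomorphic functions by (a), so the Weierstrass convergence theorem yields that the uniform limit $l\circ f\circ c$ is holomorphic on the interior of $w$, in particular at $z_0$.

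Letting $z_0$ vary and $l$ range over $F'$ finishes the argument: $f\circ c$ is weak holomorphic, hence holomorphic by Proposition~\ref{strong_weak_holo}. The main obstacle I anticipate is step (a): one must make precise that the bounded multilinear map $\tilde f_k$ genuinely preserves holomorphic curves, which requires restricting to $E_b$ (so that boundedness upgrades to continuity) and controlling the norm-convergent expansion of $c$. The transfer of the \emph{uniform}-on-bounded-sets convergence of the power series to uniform convergence along $c(w)$, needed in (b), is precisely where Definition~\ref{def:powerseries} rather than mere pointwise convergence is essential.
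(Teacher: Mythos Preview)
Your argument is correct and takes a genuinely different route from the paper. Both proofs reduce to showing that $l\circ f\circ c$ is holomorphic for every $l\in F'$, and both localise via Lemma~\ref{fact_Banach_holocrv} so as to work in a Banach space $E_b$. From there, however, the paper expands the whole series $l\circ f\circ c(z)$ into the multi-indexed sum $\sum_k\sum_{n_1}\cdots\sum_{n_k} l\circ\tilde f_k(a_{n_1},\dots,a_{n_k})\,z^{n_1+\dots+n_k}$, invokes Lemma~\ref{lem:bdd-monom} together with explicit geometric estimates to prove absolute convergence, and finally rearranges by Fubini into a single power series in $z$. You instead split the problem into (a) each $l\circ f_k\circ c$ is holomorphic and (b) the partial sums converge uniformly on the compact $w$, and then appeal to the classical Weierstrass convergence theorem. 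Step (b) is exactly where the uniform-on-bounded-sets convergence in Definition~\ref{def:powerseries} is used, and step (a) is the fact---which the paper itself mentions right after its proof as the key ingredient of an alternative argument---that a bounded $k$-monomial preserves holomorphic curves. Your route is more conceptual and sidesteps Lemma~\ref{lem:bdd-monom} and the combinatorial bookkeeping; the paper's route is more explicit and produces the actual power-series expansion of $l\circ f\circ c$ (which is, in spirit, what is exploited in the subsequent Lemma~\ref{derivatives_pws}). The only place where your write-up is a bit thin is the justification of (a): to make it airtight you should note that the expansion $c(z)=\sum_n a_n z^n$ from Proposition~\ref{prop:holo_curve_dec} converges in the $E_b$-norm (so that $c$ is genuinely holomorphic \emph{into} $E_b$), and then either use the telescoping identity $x^k-y^k=\sum_j\tilde f_k(\dots,x-y,\dots)$ together with the continuity of $\tilde f_k$ on $E_b^k$ to compute the complex derivative of $l\circ f_k\circ c$ directly, or observe that $l\circ\tilde f_k|_{E_b^k}$ is a continuous $k$-linear form on a Banach space and hence entire.
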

\begin{proof}
Let $ f = \sum_k f_k: E \rightarrow F$ be a power series, and $c: \mathbb{C} \rightarrow E$ be a holomorphic curve. Let $\tilde f_k$ be the $k$-linear bounded map associated to the $k$-monomial $f_k$. 

Let us show that the curve $ f \circ c: \mathbb{C} \rightarrow F$ is holomorphic. Thanks to Proposition~\ref{weak_cont_strong_holocurve}, it is enough to show that for every $l \in F'$, $ l \circ f \circ c: \mathbb{C} \rightarrow \mathbb{C}$ is holomorphic. Let us fix $z_0 \in \mathbb{C}$ and show that locally around $z_0$, $l \circ f \circ c $ is complex derivable. By translating $c$, we can assume \wlg that $c(z_0) = 0$, and $z_0 =0$. Besides, by Proposition~\ref{fact_Banach_holocrv} we can assume \wlg that $E$ is a Banach space. 

Thanks to Propostion~\ref{prop:holo_curve_dec}, we can write locally $c$ as a Mackey-converging power series in $E$. For every $z \in \CC$ we have:
$$c(z) = \sum\limits_n a_n z^n .$$ 
Moreover, this series converges uniformely on $\DD$.

Because $l$ is linear and continuous, we have $l \circ f = \sum_k l \circ f_k $. Besides, for any $k\in \mathbb N$, $l\circ \tilde f_k$ is $k$-linear and bounded. Thanks to Lemma~\ref{lem:lbdd-MC}, $\sum\limits_{n_1} \dots \sum\limits_{n_k} l \circ \tilde f_k( a_{n_1}, \dots, a_{n_k}) z^{n_1+\dots+n_k}$ converges to $l\circ f_k(c(z))=l\circ \tilde f_k(c(z),\dots,c(z))$. We thus have:
$$l \circ f(c(z)) = \sum\limits_k \sum\limits_{n_1} \dots \sum\limits_{n_k} l \circ \tilde f_k( a_{n_1}, \dots ,a_{n_k}) z^{n_1+\dots+ n_k} $$

Let us now apply Lemma~\ref{lem:bdd-monom} to the unit disk $U$, which is bounded, in the Banach space $E$. We get that $\{l \circ \tilde f_k(x_1, \dots, x_k)\ |\ k \in \mathbb{N}, x_j \in U \}$ is bounded. Since $\sum_n a_n z^n$ converges, for any $\abs z<1$ and $n$ big enough, $a_nz^n\in U$. Thus, for  $r < 1 $, we have for all $n \geq N $  $ a_n r^n \in U$, thus the following set is bounded:
$$b=\left\{ l \circ \tilde f_k(a_{n_1}r^{n_1}, \dots , a_{n_k}r^{n_k})  | n_i \geq N \right\}.$$

Following~\cite[II.7.17]{KriMi}, consider $z$ and $r$ such that $|z|< \frac{1}{2}$ and $ 2 |z |< r < 1$, then 
\begin{multline}\label{eq:absc}
\sum\limits_k \sum\limits_{n_1} \dots \sum\limits_{n_k} l \circ \tilde f_k( a_{n_1}, \dots, a_{n_k}) z^{n_1+\dots +n_k}  \\
\begin{split}
& = \sum\limits_k \sum\limits_{n_1} \dots \sum\limits_{n_k}  l \circ \tilde f_k(a_{n_1}r^{n_1}, \dots , a_{n_k}r^{n_k}) \frac{z^{n_1+\dots +n_k}}{r^{n_1+\dots +n_k}}, \\
& = \sum\limits_n \sum\limits_k \sum\limits_{n_1+\dots+n_k = n}  l \circ \tilde f_k(a_{n_1}r^{n_1}, \dots , a_{n_k}r^{n_k}) \frac{z^{n_1+\dots +n_k}}{r^{n_1+\dots +n_k}}.
\end{split}
\end{multline}
Now, we look at the last sum and get:
$$ \sum\limits_n \sum\limits_k \sum\limits_{n_1+\dots+n_k = n}  l \circ \tilde f_k(a_{n_1}r^{n_1}, \dots , a_{n_k}r^{n_k}) \frac{z^{n_1+\dots +n_k}}{r^{n_1+\dots +n_k}} \in \sum_n (2^n - 1) \left( \frac{z}{r} \right)^n b.$$
This is an absolutely converging sum, and the permutation of the sums in~\eqref{eq:absc} is justified by Fubini's thereom. Finally,  $l\circ f\circ c$ is holomorphic in $\CC$, as the sum of an absolutely converging power series.

%

.

\end{proof}

Another proof of this theorem uses Hartog's theorem~\cite[II.7.9]{KriMi}, and the fact that a \bornof $k$-monomial sends a holomorphic curve on a holomorphic curve. 

\begin{lemma}
\label{derivatives_pws}
Let $f = \sum_k f_k$ be a power series between $E$ and $F$. Then, for every $x \in E $ and $n \in \mathbb{N}$, $c:z\mapsto f(z x)$ is a holomorphic curve into $F$ whose $n$-th derivative in $0$  is $ n! f_n (x)$.
\end{lemma}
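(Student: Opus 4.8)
The plan is to obtain the holomorphy of $c$ directly from Theorem~\ref{powerseries_holo}, and then to read off the derivatives at $0$ from the essentially unique power-series expansion supplied by Proposition~\ref{prop:holo_curve_dec}. First I would observe that the scalar-multiplication curve $\gamma : z \mapsto zx$ is a strong holomorphic curve into $E$: for fixed $x$ the difference quotient $\frac{(z+h)x - zx}{h} = x$ is constant in $h$, so $\gamma$ is complex-derivable everywhere with $\gamma'(z) = x$. Since $f$ is a power series, Theorem~\ref{powerseries_holo} guarantees that $c = f \circ \gamma$ is again a holomorphic curve into $F$, which settles the first assertion with no further computation.

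For the derivatives I would exploit the $k$-homogeneity of each monomial $f_k$. Because $f_k(zx) = z^k f_k(x)$, the pointwise expansion $f = \sum_k f_k$ gives $c(z) = \sum_k f_k(zx) = \sum_k z^k f_k(x)$, a $\CC$-indexed power series whose coefficients $f_k(x)$ lie in $F$. The step requiring attention is to check that this series converges uniformly on bounded disks of $\CC$: for any $R > 0$ the set $\{zx \mid |z| \leq R\}$ is bounded in $E$, since any absolutely convex $0$-neighbourhood $U$ with $x \in \lambda U$ absorbs it as $\{zx \mid |z|\leq R\} \subseteq R\abs{\lambda} U$. Hence the uniform convergence of $\sum_k f_k$ on bounded sets of $E$, built into Definition~\ref{def:powerseries}, translates exactly into the uniform convergence of $\sum_k z^k f_k(x)$ for $|z| \leq R$.

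Finally, Proposition~\ref{prop:holo_curve_dec} gives the unique decomposition of the holomorphic curve $c$ as a series uniformly converging on bounded disks, namely $c(z) = \sum_n \frac{1}{n!} c^{(n)}(0) z^n$. As $\sum_k z^k f_k(x)$ is a second such decomposition, the uniqueness part of that proposition forces the coefficients to agree, giving $\frac{1}{n!} c^{(n)}(0) = f_n(x)$, that is $c^{(n)}(0) = n! f_n(x)$, as claimed.

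I expect the only genuinely delicate point to be this appeal to uniqueness: one must ensure that the homogeneity-derived expansion satisfies the \emph{same} uniform-convergence hypothesis as the Taylor expansion, so that the two coincide coefficientwise rather than merely pointwise. The boundedness of $\{zx \mid |z| \leq R\}$ in $E$ is precisely what licenses this identification, and everything else is a direct instantiation of results already established.
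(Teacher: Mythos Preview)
Your proof is correct and follows essentially the same route as the paper: holomorphy of $c$ via Theorem~\ref{powerseries_holo} (applied to the holomorphic curve $z\mapsto zx$), uniform convergence of $\sum_k z^k f_k(x)$ on bounded disks because $\{zx\mid |z|\le R\}$ is bounded in $E$, and then identification of the coefficients through the uniqueness part of Proposition~\ref{prop:holo_curve_dec}. The only difference is that you spell out explicitly why $z\mapsto zx$ is holomorphic and why the scalar orbit is bounded, whereas the paper dispatches both points in a single phrase.
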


\begin{proof}
The curve $c:z\mapsto f(z x)$ is holomorphic thanks to Theorem~\ref{powerseries_holo}. Since the scalar multiplication on $E$ is continuous, the set $\{ z x\ |\ |z|< 1\}$ is bounded. By Definition~\ref{def:powerseries} of power series, $\sum_k f_k  ( z x) = \sum_k f_k(x)\,z^k$ converges uniformly on the unit disk $\mathbb D$ of $\CC$. Thanks to the uniqueness of the decomposition (see Lemma~\ref{prop:holo_curve_dec}), its $n$-th derivative is $n!f_n(x)$.
\end{proof}

\begin{coro}
\label{unique_dvp_pws}
The $k$-monomials in the development of a  power series are unique.
\end{coro}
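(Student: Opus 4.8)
The plan is to read off uniqueness directly from Lemma~\ref{derivatives_pws}. Suppose a single power series admits two monomial decompositions, say $f = \sum_k f_k = \sum_k g_k$, where each $f_k$ and each $g_k$ is a $k$-monomial. To prove the corollary it suffices to show $f_n = g_n$ as functions $E \to F$ for every $n$, and since a function is determined by its values it is enough to fix an arbitrary $x \in E$ and establish $f_n(x) = g_n(x)$.

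First I would fix $x \in E$ and consider the curve $c : z \mapsto f(zx)$. The crucial observation is that this curve depends only on the function $f$ and on the point $x$, and not at all on which monomial decomposition of $f$ we have chosen. Applying Lemma~\ref{derivatives_pws} to the decomposition $f = \sum_k f_k$ tells us that $c$ is a holomorphic curve into $F$ with $c^{(n)}(0) = n!\,f_n(x)$. Applying the same lemma to the decomposition $f = \sum_k g_k$ gives, for the very same curve $c$, that $c^{(n)}(0) = n!\,g_n(x)$.

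Comparing the two expressions for $c^{(n)}(0)$ then yields $n!\,f_n(x) = n!\,g_n(x)$, hence $f_n(x) = g_n(x)$. As $x \in E$ and $n \in \NN$ were arbitrary, we conclude $f_n = g_n$ for all $n$, which is exactly the asserted uniqueness.

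There is no real obstacle here: all the analytic content---that $z \mapsto f(zx)$ is holomorphic and that its $n$-th Taylor coefficient at $0$ recovers $f_n(x)$---has already been established in Lemma~\ref{derivatives_pws}, which itself rests on the uniqueness of the power-series decomposition of a one-variable holomorphic function (Proposition~\ref{prop:holo_curve_dec}). The only point that truly deserves emphasis is the one highlighted above, namely that the auxiliary curve $c$ is intrinsic to $f$, so that two a priori different decompositions are forced to produce identical derivatives at $0$ and hence identical monomials.
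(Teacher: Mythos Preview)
Your proof is correct and is exactly the argument the paper has in mind: the corollary is stated immediately after Lemma~\ref{derivatives_pws} with no separate proof, precisely because the $n$-th derivative of $z\mapsto f(zx)$ at $0$ recovers $n!\,f_n(x)$ independently of the chosen decomposition. Your emphasis that the curve $c$ depends only on $f$ is the key observation that makes this deduction work.
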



\begin{prop}
\label{Cauchy_ineq_pws}
Every  power series $f \in \series(E,F)$ verifies a Cauchy inequality: 
if $b$ is an absolutely convex set in $E$ and if $b'$ is an absolutely convex and closed set in $E$ such that $f(b) \subset b'$, then for all $n \in \mathbb{N}$ we have also : 
$$f_n(b) \subset b'$$
\end{prop}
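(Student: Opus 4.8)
The plan is to freeze a point $x\in b$ and to recover $f_n(x)$ as a Cauchy coefficient of the one‑variable holomorphic curve $c\colon z\mapsto f(zx)$, keeping that coefficient inside $b'$ because along the contour $c$ takes its values in $f(b)\subseteq b'$ (note that, since $f\colon E\to F$ and $f(b)\subseteq b'$, necessarily $b'\subseteq F$). By Lemma~\ref{derivatives_pws} the curve $c$ is holomorphic and $c^{(n)}(0)=n!\,f_n(x)$, so $f_n(x)$ is exactly the $n$-th Taylor coefficient of $c$ at $0$. Mimicking the Cauchy-formula argument already used in Proposition~\ref{Cauchy_ineq_curves}, I would represent this coefficient, for every radius $0<r<1$, as the normalised contour integral
\begin{equation*}
 r^{n}f_n(x)=\frac{1}{2\pi}\int_0^{2\pi} c(re^{i\theta})\,e^{-in\theta}\,d\theta .
\end{equation*}

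The geometric heart of the proof is then that the integrand stays inside $b'$. Indeed, for $|z|=r<1$ the absolute convexity of $b$ gives $zx\in b$, hence $c(z)=f(zx)\in f(b)\subseteq b'$; and since $b'$ is absolutely convex \emph{and closed} it is balanced, so multiplying by the unit-modulus scalar $e^{-in\theta}$ leaves $c(re^{i\theta})\,e^{-in\theta}$ in $b'$. The displayed integral is thus a normalised integral of a continuous curve with values in the closed absolutely convex set $b'$, so Lemma~\ref{lem:int-abs-conv} yields $r^{n}f_n(x)\in b'$. Letting $r\to 1^-$ and using once more that $b'$ is closed gives $f_n(x)\in b'$; as $x\in b$ was arbitrary, $f_n(b)\subseteq b'$. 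The restriction to radii $r<1$ (followed by the limit) is what lets me avoid assuming $b$ closed, since $zx$ is only guaranteed to lie in $b$ for $|z|<1$.

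The step demanding the most care is the contour representation itself, because the integral lives in the vector space $F$. I would justify it by scalarisation: for every $l\in F'$ the scalar curve $l\circ c$ is holomorphic with $n$-th Taylor coefficient $l(f_n(x))$, and the one-dimensional Cauchy formula gives
\begin{equation*}
 l(f_n(x))\,r^{n}=\frac{1}{2\pi}\int_0^{2\pi} l\bigl(c(re^{i\theta})\bigr)\,e^{-in\theta}\,d\theta=l\!\left(\frac{1}{2\pi}\int_0^{2\pi} c(re^{i\theta})\,e^{-in\theta}\,d\theta\right),
\end{equation*}
the last equality holding because $l$ is linear and continuous and hence commutes with the $F$-valued Riemann integral, which exists since $F$ is \mco and the integrand is a continuous curve on a compact interval. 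As $F'$ is point-separating by the Hahn–Banach theorem (Proposition~\ref{HBsep}), the vector identity of the first display follows, and the argument of the previous paragraph concludes the proof.
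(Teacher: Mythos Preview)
Your proof is correct and follows the same strategy as the paper: reduce to the one-variable holomorphic curve $z\mapsto f(zx)$, use Lemma~\ref{derivatives_pws} to identify $f_n(x)$ as its $n$-th Taylor coefficient, express this coefficient via the Cauchy integral formula, and conclude that it lies in $b'$ by combining Lemma~\ref{lem:int-abs-conv} with Hahn--Banach separation. The only cosmetic differences are that the paper works directly at radius $1$ with the scalar formula $l(f_n(x))=\frac{1}{2i\pi}\int_{|h|=1}\frac{l(f(hx))}{h^{n+1}}\,dh$ and invokes Hahn--Banach separation at the end, whereas you first assemble the vector-valued integral, use radii $r<1$, and pass to the limit; your version is arguably a touch more careful given the paper's strict-inequality definition of absolute convexity.
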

\begin{proof}
For every $x \in E$,  $c: z \mapsto f(zx)$ is a holomorphic curve into $F$ whose $n$-th derivative is $n!f_n(x)$ by Lemma~\ref{derivatives_pws}. For every $l \in F'$, $l\circ c$ is holomorphic and satisfies a Cauchy Formula:
$$\tfrac{1}{n!}(l\circ c)^{(n)}(0)= l( \frac{c^{(n)}(0)}{n!})= l (f_n(x)) = \frac{1}{2i \pi  } \int\limits_{|h|=1} \frac{l(f(h x))}{h^{n+1}} d h .$$

As $b$ is absolutely convex, we conclude thanks to the Hahn-Banach separation theorem that for every $z \in b$, for every $n \in \mathbb{N}$, $f_n(z) \in b'$ (see Lemma~\ref{lem:int-abs-conv}). 
\end{proof}

\subsection{Convergence of power series}
\label{subsec:cv_pws}

Thanks to the Cauchy inequality, we will show the Mackey-convergence of the partial sums of a power series. This property is fundamental in the construction of the cartesian closed category of \mco spaces and power series. It will allow for example to ensure well-composition of power series and bounded functions. 





\begin{prop}
\label{MC_pws}
If $f = \sum_n f_n$ is a power series, then its partial sums Mackey-converge towards $f$ in $\cB(E,F)$. 
\end{prop}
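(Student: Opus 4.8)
The statement to prove is that for a power series $f=\sum_n f_n$, the partial sums $S_N=\sum_{n\leq N}f_n$ Mackey-converge towards $f$ in $\cB(E,F)$. Recall (Definition~\ref{def:mackey}) that this means exhibiting a bounded set $\mathcal{B}$ in $\cB(E,F)$ and a net of scalars decreasing to $0$ that controls the tails $f-S_N=\sum_{n>N}f_n$. The natural strategy is to produce, for each bounded set $b\subset E$, a single absolutely convex closed bounded set $b'\subset F$ such that the Cauchy inequality (Proposition~\ref{Cauchy_ineq_pws}) applies uniformly, and then to exploit a geometric-type decay on $\lambda b$ for some fixed dilation factor $\lambda>1$.

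\emph{First}, I would fix a bounded set $b$ in $E$ and use that $f$ is itself bounded (Proposition~\ref{pws_borno}): applying $f$ to the bounded set $\lambda b$ for a suitable $\lambda>1$ (say $\lambda=2$) gives a bounded image, whose absolutely convex closed closure I call $b'$. \emph{Second}, the key is the Cauchy inequality: since $\lambda b$ is absolutely convex and $f(\lambda b)\subset b'$ with $b'$ absolutely convex closed, Proposition~\ref{Cauchy_ineq_pws} yields $f_n(\lambda b)\subset b'$ for every $n$. By $n$-homogeneity, $f_n(\lambda x)=\lambda^n f_n(x)$, so $f_n(b)\subset \lambda^{-n}b'$ for all $x\in b$, giving the crucial geometric decay $f_n(b)\subset 2^{-n}b'$. \emph{Third}, summing the tail:
\begin{equation*}
\Bigl(\sum_{n>N}f_n\Bigr)(b)\subset \sum_{n>N}2^{-n}b'\subset 2^{-N}b',
\end{equation*}
where the last inclusion uses that $b'$ is absolutely convex, so $\sum_{n>N}2^{-n}b'\subset\bigl(\sum_{n>N}2^{-n}\bigr)b'=2^{-N}b'$. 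This controls the tail uniformly on $b$ by the scalar $2^{-N}\to 0$.

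\emph{Finally}, I would package this into the Mackey-Cauchy/Mackey-convergence statement in $\cB(E,F)$. The bornology of $\cB(E,F)$ is the equibounded one: a set is bounded iff it sends each bounded $b\subset E$ to a bounded set of $F$. The family $\{f-S_N\mid N\in\NN\}$ rescaled suitably should form (a subset of) such an equibounded set $\mathcal{B}$, with $f-S_N\in 2^{-N}\mathcal{B}$; since $2^{-N}$ decreases to $0$, this is exactly Mackey-convergence of $S_N$ to $f$. The main subtlety I anticipate is that the controlling bounded set $b'$ depends on $b$, so to assemble a \emph{single} equibounded set $\mathcal{B}$ in $\cB(E,F)$ one must verify that the collection $\{2^N(f-S_N)\}_N$ is genuinely equibounded, i.e. the decay constant $2^{-N}$ can be chosen uniformly over all $N$ while the image sets stay bounded per bounded $b$ — which is precisely what the geometric estimate above delivers, since for each fixed $b$ the whole family $\{2^N(f-S_N)\}_N$ lands in $b'$. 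Matching this clean per-$b$ bound to the net-of-scalars formulation of Definition~\ref{def:mackey} is the only point requiring care; the analytic content is entirely carried by the Cauchy inequality plus homogeneity.
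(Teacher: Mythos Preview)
Your proposal is correct and follows essentially the same approach as the paper: apply the Cauchy inequality (Proposition~\ref{Cauchy_ineq_pws}) to a dilated bounded set, use $n$-homogeneity of the $f_n$ to extract a geometric factor $2^{-n}$, and sum the tail. The only cosmetic difference is that the paper applies the Cauchy inequality on $b$ and reads the decay on $\tfrac{1}{2}b$, whereas you apply it on $2b$ and read the decay on $b$; these are the same computation. Your handling of the ``assembly'' step---taking $\mathcal{B}=\{2^N(f-S_N)\mid N\in\NN\}$ and checking equiboundedness by letting $b'$ vary with $b$---is in fact cleaner than the paper's, which somewhat loosely declares a set of the form $\{g\in\cB(E,F)\mid g(\tfrac{1}{2}b)\subset b'\}$ to be equibounded without further comment.
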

\begin{proof}
Let $b$ be an absolutely convex and bounded subset of $E$ and $b'$ be the absolutely convex and closed closure of $f(b)$. By Proposition~\ref{Cauchy_ineq_pws}, for all $n\in\NN$, $f_n(b)\subset b'$. As $f_n$ is $n$-homogeneous, we also  have $f_n (\tfrac 12 b) \subset \tfrac 1{2^n}b'$.

If $B$ denotes the equibounded set $\{f\in\cB(E,F)\ \mid\ f(\tfrac{1}2 b)\subset b'\}$, then $f \in B$ as $\tfrac 12b \subset b$, $f_0 \in B$ and for every $n$, $f_n \in \tfrac 1{2^n}B$. Thus for every $N\in\NN$,
$$ f- \sum_0^{N} f_n \in \sum_{n > N} \tfrac{1}{2^n} B$$
and the partial sums do Mackey-converge towards $f$.



\end{proof}

\paragraph{Simple and weak convergence}
Our definition of power series allows us to make nice connection between their weak, strong and simple convergence. This will allow us to prove the cartesian closedeness of the category of \mco spaces and power series between them. 

\begin{prop}
\label{wcv_pointwise_pws}

Let $\{f_k\,\mid\, k\in\NN\}$ be a family of $k$-monomials from $E$ to $F$. If for every $l\in \Fc$ (resp. $l\in F'$) and $x\in E$, $\sum_k l\circ f_k(x)$ converges in $\CC$, then for any $x\in E$, $\sum_k f_k(x)$ converges in $F$.
\end{prop}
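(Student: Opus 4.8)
The plan is to fix a point $x\in E$ and prove that the sequence of partial sums $S_N=\sum_{k=0}^N f_k(x)$ is a Mackey-Cauchy sequence in $F$; since $F$ is \mco, it then converges, which is exactly the desired conclusion. I would only treat the hypothesis for $l\in F'$, since $F'\subseteq\Fc$ means the $\Fc$-version is a stronger assumption to which the same argument applies verbatim. The crucial observation is that the hypothesis can be applied not just at $x$ but at every $zx$ with $z\in\CC$: by $k$-homogeneity of the monomials, $f_k(zx)=z^k f_k(x)$, so for each $l\in F'$ the scalar series $h_l(z)=\sum_k l(f_k(x))\,z^k$ converges for \emph{every} $z\in\CC$. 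Hence each $h_l$ is an entire function of one complex variable, whose Taylor coefficient at $0$ of order $k$ is precisely $l(f_k(x))$.

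Next I would extract boundedness from this entireness. For any fixed $r>1$, the Cauchy inequality for the entire function $h_l$ (recalled in Section~\ref{sec:holo}) gives $|l(f_k(x))|\,r^k\le \sup_{|z|=r}|h_l(z)|$ for every $k$, and the right-hand side is finite. Thus the set $B_r=\{\,r^k f_k(x)\mid k\in\NN\,\}$ is scalarly bounded, so by Lemma~\ref{scalbound} (the Mackey-Arens theorem) it is bounded in $F$. Let $b_r$ be its absolutely convex hull: if $U$ is an absolutely convex $0$-neighbourhood with $B_r\subseteq\lambda U$, then $\lambda U$ is absolutely convex and contains $B_r$, hence contains $b_r$; so $b_r$ is again bounded and absolutely convex, and it contains every $r^k f_k(x)$.

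Finally I would assemble the Mackey-Cauchy estimate. For $M<N$, writing $f_k(x)=r^{-k}\,(r^k f_k(x))$ with $r^k f_k(x)\in b_r$ and using the property $\lambda b_r+\mu b_r\subseteq(\lambda+\mu)b_r$ valid for the absolutely convex set $b_r$, we obtain
\[
S_N - S_M=\sum_{k=M+1}^N r^{-k}\,(r^k f_k(x))\in\Big(\sum_{k=M+1}^N r^{-k}\Big)\,b_r\subseteq \frac{r^{-(M+1)}}{1-r^{-1}}\,b_r.
\]
Setting $\lambda_{M,N}=r^{-(\min(M,N)+1)}/(1-r^{-1})$, a net of scalars decreasing to $0$, this shows that $(S_N)_N$ is Mackey-Cauchy in the sense of Definition~\ref{def:mackey}, and therefore converges in the \mco space $F$. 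The only genuinely delicate point is the first one: recognizing that the pointwise convergence hypothesis, combined with $k$-homogeneity, forces the scalar series $h_l$ to be entire rather than merely convergent at a single radius. Everything afterwards is a routine combination of Cauchy's inequality, the Mackey-Arens theorem, and summation of a geometric series.
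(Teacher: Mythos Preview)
Your proof is correct and follows essentially the same strategy as the paper: use the hypothesis to show that $\{r^k f_k(x)\mid k\in\NN\}$ is scalarly bounded for some $r>1$, invoke Lemma~\ref{scalbound} to get genuine boundedness, and then sum a geometric series to obtain a Mackey-Cauchy estimate. The only difference is that you take a detour through the Cauchy inequality for the entire function $h_l$, whereas the paper simply applies the hypothesis at the single point $2x$ and notes that convergence of $\sum_k l(f_k(2x))=\sum_k 2^k\,l(f_k(x))$ already forces $\{2^k l(f_k(x))\}$ to be bounded; the entireness of $h_l$ and the Cauchy inequality are not needed.
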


\begin{proof}
Let us fix $x\in E$. By assumption, for any $l\in \Fc$, $\sum_k l \circ f_k (2 x)$ converges in $\CC$, so $\{l\circ f_k(2x) \, \mid\, k\in \NN\}$ is bounded in $\CC$. By Proposition~\ref{scalbound} (resp. by the Mackey-Ahrens Theorem), $\{f_k(2x) \,\mid\, k\in \NN\}$ is bounded in $F$, its closure denoted $b'$ is also bounded. We get that, for all $N\in \NN$,
$$\sum_{k\ge N} f_k(x)\subset \sum_{n\ge N}\tfrac 1{2^n}b'.$$
Hence, $\sum_k f_k(x)$ is \mca and so converges in $F$.


\end{proof}

\begin{prop}
\label{wcv_unif_pws}
Let $f:  E \rightarrow F$ be a \bornof function and let $f_k$ be $k$-monomials such that for every $ l \in F' $, $\sum_k l \circ f_k$  converges towards $ l \circ f$ uniformly on bounded sets of $E$. Then,  $f=\sum_k f_k$ is also a power series.
\end{prop}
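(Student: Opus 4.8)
The plan is to verify directly the two requirements of Definition~\ref{def:powerseries}: pointwise convergence $f(x)=\sum_k f_k(x)$, and uniform convergence of the partial sums on bounded sets. First I would settle pointwise convergence. By hypothesis, for every $l\in F'$ and every $x\in E$ the scalar series $\sum_k l\circ f_k(x)$ converges (to $l(f(x))$, since uniform convergence entails pointwise convergence), so Proposition~\ref{wcv_pointwise_pws} in its $F'$ variant guarantees that $\sum_k f_k(x)$ converges in $F$ for each $x$, towards some $g(x)$. Applying a continuous form $l$ and using its continuity gives $l(g(x))=\sum_k l(f_k(x))=l(f(x))$; since continuous linear forms separate the points of the separated space $F$ (Hahn--Banach, Proposition~\ref{HBsep} applied with $C=\{0\}$), we conclude $g=f$, that is $f(x)=\sum_k f_k(x)$ pointwise.

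The heart of the matter is to upgrade this to uniform convergence on bounded sets, \ie to pass from the scalarwise hypothesis to a genuine statement in $F$; I expect this to be the main obstacle. The key observation is that for each $l\in F'$ the function $l\circ f=\sum_k l\circ f_k$ is a sum of $k$-monomials converging uniformly on bounded sets, hence $l\circ f\in\series(E,\CC)$ is an honest scalar power series, whose $k$-th monomial is $l\circ f_k$ by uniqueness of the decomposition (Corollary~\ref{unique_dvp_pws}). This lets me import the Cauchy inequality already available for scalar power series. Concretely, I would establish a Cauchy inequality for $f$ itself: for $b$ an absolutely convex bounded set, let $b'$ be the closed absolutely convex hull of $f(b)$, which is a bounded set because $f$ is \bornof; then $f_k(b)\subset b'$ for every $k$. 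To see this, apply Proposition~\ref{Cauchy_ineq_pws} to each scalar power series $l\circ f$ with the closed absolutely convex set $\overline{l(b')}\supseteq (l\circ f)(b)$, obtaining $l(f_k(b))=(l\circ f)_k(b)\subset\overline{l(b')}$; were some $f_k(x)$ with $x\in b$ outside $b'$, Hahn--Banach separation (Proposition~\ref{HBsep}) would furnish an $l\in F'$ bounded by $1$ on $b'$ yet with $|l(f_k(x))|>1$, contradicting $l(f_k(x))\in\overline{l(b')}\subset\DD$.

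With this Cauchy inequality in hand, the uniform convergence follows exactly as in Proposition~\ref{MC_pws}. Given an absolutely convex bounded $b$, apply the inequality to $2b$: writing $b''$ for the closed absolutely convex hull of $f(2b)$, which is bounded, we get $f_k(2b)\subset b''$, hence by $k$-homogeneity $f_k(b)\subset\tfrac1{2^k}b''$. Therefore, for $x\in b$,
\begin{equation*}
\Big(f-\sum_{k=0}^N f_k\Big)(x)=\sum_{k>N}f_k(x)\in\sum_{k>N}\tfrac1{2^k}b''\subset\tfrac1{2^N}b'',
\end{equation*}
the last inclusion holding because $b''$ is closed and absolutely convex. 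As $b''$ is bounded, $\tfrac1{2^N}b''$ is eventually contained in any $0$-neighbourhood $V$ of $F$, so the partial sums converge to $f$ uniformly on $b$. Since every bounded set sits inside an absolutely convex bounded one, this convergence is uniform on all bounded sets, and $f=\sum_k f_k$ is a power series in the sense of Definition~\ref{def:powerseries}.
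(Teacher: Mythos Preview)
Your proof is correct and follows essentially the same route as the paper's: both arguments hinge on applying the scalar Cauchy inequality (Proposition~\ref{Cauchy_ineq_pws}) to each $l\circ f\in\series(E,\CC)$, lifting the resulting inclusion $l(f_k(2b))\subset l(b')$ back to $f_k(2b)\subset b'$ via Hahn--Banach separation, and then exploiting $k$-homogeneity to obtain the geometric bound $f_k(b)\subset\tfrac{1}{2^k}b'$ that forces uniform (indeed Mackey) convergence. Your version is simply more explicit about the preliminary pointwise identification $\sum_k f_k(x)=f(x)$ and about the closure issue in the separation step, points the paper passes over tersely.
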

\begin{proof}
Let $b$ be a bounded set and $b'$ be the absolutely convex and closed closure of $f(2b)$.
For any $l\in F'$, since $l\circ f$ is  a power series, it satisfies a Cauchy Inequality thanks to Proposition~\ref{Cauchy_ineq_pws} (notice that $l(b')$ is absolutely convex and that $(l\circ f)(2b)\subset l(b')$).
Therefore, for any $k\in\NN$, $(l\circ f_k) (2b) \subset l(b')$. By the Hahn-Banach Separation theorem and since $f_k$ is $k$-linear, we get that $f_k(b)\subset \tfrac{1}{2^k}b'$. Thus, $\sum_k f_k$ Mackey-converges uniformly to $f$ on bounded sets of $E$. Since Mackey-convergence entails convergence, we get that $f=\sum_k f_k$ is  a power series.

\end{proof}

The two last propositions helped us to infer strong convergence from weak convergence, the following will allow us to deduce uniform convergence from pointwise convergence.

\begin{prop} \label{simply_unif_cv} 
Let $\sum_k f_k: E \rightarrow F $ be a pointwise converging series of $k$-monomials. If the sum converges pointwise towards a \bornof function $f:E\rightarrow F$, then $f$ is a power series. 
\end{prop}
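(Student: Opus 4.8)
The plan is to reduce the statement to Proposition~\ref{wcv_unif_pws}, which already manufactures a power series out of a bounded function $f$ as soon as one knows that, for every \emph{continuous} form $l\in F'$, the scalar series $\sum_k l\circ f_k$ converges to $l\circ f$ \emph{uniformly on bounded sets}. We are given only \emph{pointwise} convergence of $\sum_k f_k$ towards the bounded function $f$; composing with any $l\in F'$ trivially yields pointwise convergence of $\sum_k l\circ f_k$ towards $l\circ f$ in $\CC$. Hence the entire difficulty is concentrated in a single scalar statement: upgrading pointwise convergence of a series of scalar $k$-monomials to uniform convergence on bounded sets. Once this is established for each $l\in F'$, Proposition~\ref{wcv_unif_pws} closes the argument.

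To carry out this upgrade, I would fix $l\in F'$, write $g=l\circ f$ and $g_k=l\circ f_k$, and note that each $g_k$ is $k$-homogeneous. The key device is, for a fixed $x\in E$, the one-variable curve $c_x:z\mapsto g(zx)$. Using $k$-homogeneity and pointwise convergence \emph{at the point} $zx$, we have $c_x(z)=\sum_k g_k(zx)=\sum_k g_k(x)\,z^k$. The crucial observation is that this identity holds for \emph{every} $z\in\CC$, since the hypothesis grants pointwise convergence at every point of $E$, in particular at all points $zx$. Thus the power series $\sum_k g_k(x)\,z^k$ has infinite radius of convergence, so $c_x$ is an entire function of $z$ whose $k$-th Taylor coefficient at $0$ is exactly $g_k(x)$.

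Next I would extract a uniform estimate via the classical Cauchy inequality~\eqref{eq:CI}. I may assume $b$ is a bounded \emph{disk}, since every bounded set is contained in its absolutely convex hull, which is again bounded. For $x\in b$ and $r>1$, the set $\{zx\mid |z|\le r\}$ lies in $rb$ because $b$ is balanced; as $f$ is \bornof and $l$ is continuous, $M_r:=\sup_{y\in rb}\abs{g(y)}$ is finite. The Cauchy inequality applied to the entire function $c_x$ then yields $\abs{g_k(x)}\le M_r\,r^{-k}$ for all $x\in b$ and all $k$. Choosing any $r>1$ gives the geometric tail bound $\sum_{k>N}\abs{g_k(x)}\le M_r\,r^{-N}/(1-r^{-1})$, uniform in $x\in b$; hence $\sum_k g_k$ converges uniformly on $b$, necessarily to its pointwise limit $g=l\circ f$. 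This supplies the hypothesis of Proposition~\ref{wcv_unif_pws}, and we conclude that $f=\sum_k f_k$ is a power series.

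The step I expect to be the main obstacle is recognizing and exploiting that pointwise convergence on the \emph{whole} space $E$ forces each one-variable curve $c_x$ to be entire, rather than merely holomorphic on a disk. It is precisely the infinite radius of convergence that allows the Cauchy estimate to be run with a radius $r>1$, turning the bound $\abs{g_k(x)}\le M_r\,r^{-k}$ into geometric decay and thereby converting pointwise information into uniform control on bounded sets. The remaining ingredients---the reduction to bounded disks and the finiteness of $M_r$---are routine once the boundedness of $f$ and the continuity of $l$ are invoked.
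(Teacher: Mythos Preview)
Your proof is correct, and it reaches the same endpoint as the paper---both arguments reduce to Proposition~\ref{wcv_unif_pws} by establishing, for each $l\in F'$, that $\sum_k l\circ f_k$ converges to $l\circ f$ uniformly on bounded sets---but your scalar step is genuinely different from the paper's.

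The paper carries out the scalar upgrade by restricting to the Banach space $E_b$ (Proposition~\ref{E_b_mco}) and invoking Lemma~\ref{lem:cvs-cvu}, a Baire-category argument showing that on a Fr\'echet space any pointwise-convergent series of scalar $k$-monomials automatically converges uniformly on bounded sets; boundedness of $f$ plays no role in that step and enters only at the very end through Proposition~\ref{wcv_unif_pws}. Your route is more elementary: you exploit the boundedness of $f$ immediately by applying the one-variable Cauchy inequality to the entire curve $z\mapsto l(f(zx))$, obtaining the uniform geometric bound $\abs{l\circ f_k(x)}\le M_r\,r^{-k}$ on any bounded disk $b$. This avoids Baire category altogether, at the cost of using the hypothesis that $f$ is \bornof twice rather than once. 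The paper's detour through Lemma~\ref{lem:cvs-cvu} yields a sharper intermediate statement (pointwise convergence of scalar monomial series on a Fr\'echet space already implies uniform convergence on bounded sets, with no assumption on the limit), but for the purposes of this proposition your direct Cauchy argument is both sufficient and cleaner.
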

\begin{lemma}\label{lem:cvs-cvu}
Consider $E$ a Fr\'echet space and for every $ k \in \NN$ $f_k \in \mathcal{L}^k(E, \CC)$. Then $\sum f_k$ converges pointwise on $E$ if and only if it converges uniformly on every bounded set of $E$.
\end{lemma}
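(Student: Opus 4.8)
The easy implication is immediate: singletons are bounded, so uniform convergence on bounded sets entails pointwise convergence. For the converse the plan is to exploit that a Fréchet space is a Baire space, together with the Cauchy inequality for holomorphic functions of one complex variable applied after a complexification trick.

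First I would record a consequence of the pointwise hypothesis that is stronger than it looks. Fix $x\in E$; since $zx\in E$ for every $z\in\CC$ and each $f_k$ is $k$-homogeneous, applying the hypothesis at $zx$ shows that $\sum_k f_k(x)\,z^k=\sum_k f_k(zx)$ converges for all $z\in\CC$. Thus the scalar power series $\sum_k f_k(x)z^k$ has infinite radius of convergence, so for every $r>0$ the set $\{\,|f_k(x)|\,r^k\mid k\in\NN\,\}$ is bounded in $\CC$.

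Next comes the Baire step. Since $E$ is metrizable it is bornological, so (the multilinear analogue of Proposition~\ref{prop:topoborno}) each bounded monomial $f_k$ is continuous; hence for fixed $r>0$ the sets $A_n^r=\{\,x\in E\mid |f_k(x)|\le n\,r^{-k}\ \text{for all }k\,\}$ are closed, and by the previous paragraph $E=\bigcup_n A_n^r$. Baire's theorem yields an $n_0$, a point $x_0$, and a closed absolutely convex $0$-neighbourhood $U$ with $x_0+U\subseteq A_{n_0}^r$. I would then complexify: for $u\in U$ the map $z\mapsto f_k(x_0+zu)$ is a polynomial of degree $\le k$ whose degree-$k$ coefficient is $f_k(u)$ (expand $\tilde f_k(x_0+zu,\dots,x_0+zu)$ by multilinearity). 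As $U$ is balanced, $x_0+zu\in x_0+U$ for $|z|\le 1$, so $|f_k(x_0+zu)|\le n_0 r^{-k}$ on the unit circle, and Cauchy's coefficient inequality gives $|f_k(u)|\le n_0\,r^{-k}$ for every $u\in U$ and every $k$.

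Finally I would transfer this bound to an arbitrary bounded set $b$. As $U$ is a $0$-neighbourhood it absorbs $b$, say $b\subseteq\lambda U$; then $k$-homogeneity gives $|f_k(x)|\le n_0\,(\lambda/r)^k$ for all $x\in b$, and choosing $r$ large enough that $\lambda/r<1$ would produce a geometric bound on the tails, hence uniform (indeed Mackey) convergence of $\sum_k f_k$ on $b$. The main obstacle is exactly this last step: the neighbourhood $U$ and the constant $n_0$ delivered by Baire depend on $r$, so the absorption constant $\lambda=\lambda(r)$ cannot be decoupled from $r$ and the inequality $\lambda/r<1$ is not automatic. A natural attempt to tame it is to work inside the spaces $E_b$, which are genuine Banach spaces because $E$ is \mco (Proposition~\ref{E_b_mco}), and to aim only for uniform convergence on the unit ball of each $E_b$; but the same tension between the Baire radius and the homogeneity scaling reappears there, so the delicate point is controlling how the Baire neighbourhood shrinks as $r$ grows — this is where the real work, and the genuine danger, of the argument lies.
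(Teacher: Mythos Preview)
Your strategy---Baire category on the closed sets $A_n^r$, then a one-variable Cauchy estimate to pass from $x_0+U$ to $U$---is precisely the paper's; the paper phrases the second step via the polarisation formula rather than Cauchy's inequality, but the content is the same. Your unease at the final step is not excess caution: the lemma is \emph{false} as stated, so the gap you isolated cannot be closed. Take $E=\ell^2$ and $f_k(x)=x_k^{\,k}$, the $k$-th coordinate raised to the $k$-th power. The associated symmetric $k$-linear form $(y^1,\dots,y^k)\mapsto y^1_k\cdots y^k_k$ has norm~$1$, so $f_k\in\mathcal{L}^k(\ell^2,\CC)$. For any $x\in\ell^2$ one has $x_k\to 0$, hence eventually $|x_k|<\tfrac12$ and $\sum_k|x_k|^k<\infty$: the series converges pointwise on all of $\ell^2$. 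Yet $\sup_{\|x\|\le 1}|f_k(x)|=|f_k(e_k)|=1$ for every $k$, so the general term does not tend to zero uniformly on the closed unit ball, and $\sum_k f_k$ cannot converge uniformly there. Your proposed retreat to the spaces $E_b$ meets the same obstruction, since for $b$ the unit ball of $\ell^2$ one has $E_b=\ell^2$ itself.

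The paper's own argument contains exactly this gap: the clause ``and thus on every bounded set of $E$'' jumps from the single $0$-neighbourhood supplied by Baire to arbitrary bounded sets, and the example shows that jump is illegitimate. What \emph{is} true---and what suffices for the sole application, Proposition~\ref{simply_unif_cv}---is the version with the additional hypothesis that the pointwise limit $f$ is \bornof. Under that hypothesis no Baire argument is needed: for a bounded absolutely convex $b$, boundedness of $f$ on $2b$ gives a constant $M$ with $\bigl|\sum_k f_k(x)z^k\bigr|=|f(zx)|\le M$ for all $x\in 2b$ and $|z|\le 1$; the scalar Cauchy inequality then yields $|f_k(x)|\le M$ on $2b$, whence $|f_k|\le M\,2^{-k}$ on $b$ by homogeneity, and the geometric tail delivers uniform (indeed Mackey) convergence on $b$.
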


\begin{proof}
(see~\cite[I.7.14]{KriMi} for details) The reverse implication is straightforward. Let us prove the direct implication, and suppose $\sum f_k$ converges pointwise. Consider a $0$-neighbourhood $U$ in $E$. We want to show that $\{\tilde f_k(x_1,\dots, x_k)\ |\ k \in \NN, x_i \in U\}$ is bounded. If this is true, then $\sum f_k$ converges uniformly on $\lambda U$ for $\lambda <1 $, and thus on every bounded set of $E$. 

Since the domain of the $f_k$ is $\mathbb{C}$, their boundedness implies their continuity. Hence, the sets $$ A_{K,r} = \{ x \in E\ |\ \forall k\in\NN,\ |f_k(x^k)| \leq Kr^k\}$$
are closed. Moreover, they do recover $E$ by hypothesis. Then, by Baire property, there is an $A_{K,r}$ whose interior is nonempty. Consider $x_0 \in A_{K,r}$ and $V$  a neighbourhood of $0$ such that $ x_0 - U \subseteq A_{K,r}$. By the polarization formula (see Lemma~\ref{polarization_formula}), there is $\lambda>0$ such that for all $x \in U$ and $k\in\NN$, $ | f(x^k)| \leq K \lambda^k$ for some $\lambda > 0$. Then $\{ f_k(x^k)\ |\ k\in\NN \}$ is bounded on $\frac{U}{\lambda}$.
\end{proof}

\begin{proof}[Proof of Proposition~\ref{simply_unif_cv}]
Let us fix $l \in F'$. For every $x \in E$, $\sum_k l \circ f_k (x)$ converges towards $l \circ f(x)$ in $\CC$, and $l\circ f$ is bounded.
Let $b$ be a bounded set. Then, according to Lemma~\ref{lem:cvs-cvu}  which relates pointwise convergence and uniform convergence of power series on Banach spaces, the power series $\sum_k l \circ f_k (x)$ converges uniformly on $E_b$ (as it is a Banach space, see Proposition~\ref{E_b_mco}), hence on $b$. We have proved that $\sum_k f_k$ converges weakly uniformly on bounded sets. By Proposition~\ref{wcv_unif_pws}, we know that it converges (strongly) uniformly on bounded subsets. 
\end{proof}


\begin{prop}\label{comp_lin_pws}
Let $l$ be a linear \bornof function from $F$ to $G$ and $f= \sum_k f_k$ a power series from $E$ to $F$. Then $ l \circ f $ is a power series and $ l \circ f = \sum_n l \circ f_n$. 
\end{prop}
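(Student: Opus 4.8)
The plan is to recognise composition with $l$ as a bounded linear operation between spaces of bounded maps, and to transport the Mackey-convergence of the partial sums of $f$ through it. First I would check that each $l\circ f_n$ is again an $n$-monomial: if $\tilde f_n$ denotes the bounded $n$-linear map with $f_n(x)=\tilde f_n(x,\dots,x)$, then $l\circ\tilde f_n$ is $n$-linear and \bornof (a composite of bounded maps), and $(l\circ f_n)(x)=(l\circ\tilde f_n)(x,\dots,x)$. So $\sum_n l\circ f_n$ is a series of monomials of the correct degrees, and it only remains to establish the two convergence requirements of Definition~\ref{def:powerseries}.

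The key step is to consider the map $\Phi:\cB(E,F)\to\cB(E,G)$, $g\mapsto l\circ g$. It is well defined, since $l\circ g$ is bounded whenever $g$ is; it is linear, because $l$ is; and it is \bornof: if $B\subset\cB(E,F)$ is equibounded, then for any bounded $b\subset E$ one has $\Phi(B)(b)=l(B(b))$, which is bounded in $G$ as $B(b)$ is bounded in $F$ and $l$ is \bornof. Hence $\Phi$ is a bounded linear map, and by Proposition~\ref{bounded_Mackey} it preserves Mackey-convergence.

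Now I would invoke Proposition~\ref{MC_pws}: the partial sums $S_N=\sum_{n=0}^N f_n$ Mackey-converge towards $f$ in $\cB(E,F)$. Applying $\Phi$ and using that it preserves Mackey-convergence yields that $\Phi(S_N)=\sum_{n=0}^N l\circ f_n$ Mackey-converges towards $\Phi(f)=l\circ f$ in $\cB(E,G)$. Since a Mackey-converging net converges in the ambient topology, and the topology of $\cB(E,G)$ is uniform convergence on bounded sets, the partial sums $\sum_{n=0}^N l\circ f_n$ converge to $l\circ f$ uniformly on bounded sets of $E$, hence in particular pointwise. Combined with the first paragraph, this is exactly what Definition~\ref{def:powerseries} requires, so $l\circ f\in\series(E,G)$ with development $\sum_n l\circ f_n$.

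I expect the only delicate point to be the verification that $\Phi$ is bounded between the two $\cB$-spaces, that is, that composing with $l$ sends equibounded families to equibounded families; everything else then follows formally from Propositions~\ref{MC_pws} and~\ref{bounded_Mackey}. A more hands-on route avoids $\Phi$ altogether: unwind Mackey-convergence of $S_N$ as $f-S_N\in\lambda_N B$ for some equibounded $B\subset\cB(E,F)$ and scalars $\lambda_N\to 0$, then apply $l$ to obtain $l\circ f-\sum_{n=0}^N l\circ f_n\in\lambda_N(l\circ B)$, where $l\circ B$ is equibounded, which directly gives the required uniform convergence on bounded sets.
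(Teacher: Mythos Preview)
Your argument is correct and, in fact, slightly more direct than the paper's. Both proofs start from Proposition~\ref{MC_pws} and push the Mackey-convergence of the partial sums through composition with $l$; your ``hands-on route'' at the end is word-for-word the paper's opening computation, arriving at $l\circ f-\sum_{k\le n} l\circ f_k\in\lambda_n\,(l\circ B)$ with $l\circ B$ equibounded. The difference lies in how one concludes. You observe that this relation is already Mackey-convergence in $\cB(E,G)$, hence convergence for the topology of uniform convergence on bounded sets, which is exactly the requirement of Definition~\ref{def:powerseries}; nothing further is needed. The paper instead evaluates at each $x\in E$ to obtain only \emph{pointwise} Mackey-convergence, then notes that $l\circ f$ is \bornof and invokes Proposition~\ref{simply_unif_cv} (pointwise convergence towards a \bornof limit yields a power series) to recover uniform convergence on bounded sets. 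Your route thus bypasses the appeal to Proposition~\ref{simply_unif_cv} and its underlying Banach-space lemma, at the modest cost of checking that $g\mapsto l\circ g$ sends equibounded families to equibounded families --- which, as you note, is immediate.
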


\begin{proof}
According to Proposition~\ref{MC_pws}, there is a sequence of scalars $(\lambda_n)$ decreasing towards $0$ and a bounded set $ B \subset \series(E,F)$ such that, for all $n$ :
$$ f- \sum_0^{n} f_k \in \lambda_n B.$$

Thus for every $n$, $l\circ f - \sum_0^{n} l \circ f_k \in \lambda_n l(B)$. Thus, applying this equation to every $x \in E$, we get that the partial sums of $\sum_k l \circ f_k(x)$ Mackey-converge towards $l \circ f(x)$. As $l \circ f$ is a bounded function, we have that $l \circ f$ is a power series thanks to Proposition~\ref{simply_unif_cv}.
\end{proof}

\subsection{A cartesian closed category}
\label{subsec:ccc}
\begin{defi}
Let us denote as $\series (E,F)$ the space of all power series between $E$ and $F$. We endow it with the topology of uniform convergence on bounded subsets of $E$. The bounded sets resulting from this topology are the equibounded sets of functions.
\end{defi}

Holomorphic maps, as defined in~\cite{KriMi} are in particular smooth~\cite[II.7.19.8]{KriMi}. Thus according to Proposition~\ref{powerseries_holo}, power series as defined here are smooth.

\begin{prop}
\label{bounded_incl_pws_smo}
We have a bounded inclusion of  $\series (E,F)$ into $\Cin(E,F)$.
\end{prop}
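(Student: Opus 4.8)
The inclusion of sets $\series(E,F) \subseteq \Cin(E,F)$ is already available: by Theorem~\ref{powerseries_holo} a power series preserves holomorphic curves, and holomorphic maps in the sense of~\cite{KriMi} are smooth (as recalled just above), so every power series is smooth. The inclusion map $\iota$ is plainly linear, so the only thing to prove is that $\iota$ is \bornof, \ie that an equibounded set $B \subset \series(E,F)$ is bounded in $\Cin(E,F)$. By the description of bounded sets of $\Cin(E,F)$, this amounts to showing that for every smooth curve $c \in \cC_E$, every order $i\in\NN$ and every bounded $b \subset \RR$, the set $\{(f\circ c)^{(i)}(t) \mid f\in B,\ t\in b\}$ is bounded in $F$.

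The plan is to differentiate the series $f = \sum_k f_k$ term by term along $c$ and to bound the result uniformly in $f\in B$ and $t\in b$ by a single scalar majorant. First I would fix $c$, $i$ and $b$ and build a bounded disk $\beta \subset E$ containing $c^{(j)}(b)$ for all $j\le i$; such a $\beta$ exists because $c$ is smooth (so each $c^{(j)}$ is bounded on bounded sets) and the absolutely convex hull of a finite union of bounded sets is bounded (bound each continuous seminorm). Choosing a scale $\rho > 2e$, I set $\beta'$ to be the closed absolutely convex hull of the bounded set $B(\rho\beta)=\{g(x)\mid g\in B,\ x\in \rho\beta\}$; this $\beta'$ is again bounded thanks to the estimate $\bar C \subset 3C$ valid for absolutely convex $C$, and it is closed and absolutely convex, so the Cauchy inequality of Proposition~\ref{Cauchy_ineq_pws} applies to it.

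Then, for each $f=\sum_k f_k \in B$, Proposition~\ref{Cauchy_ineq_pws} gives $f_k(\rho\beta)\subset\beta'$, hence by $k$-homogeneity $f_k(\beta)\subset\rho^{-k}\beta'$. Feeding this into the polarization formula (Lemma~\ref{polarization_formula}) and using that $\beta$ is a disk (so $\sum_{j}\epsilon_j x_j \in k\beta$), I obtain the multilinear estimate $\tilde f_k(\beta^k)\subset \tfrac{2^k k^k}{k!}\rho^{-k}\beta'$. The multilinear Leibniz rule expresses $(f_k\circ c)^{(i)}(t)$ as a sum over compositions $a_1+\dots+a_k=i$ of multinomial multiples of $\tilde f_k(c^{(a_1)}(t),\dots,c^{(a_k)}(t))$, each argument lying in $\beta$; since the multinomial coefficients sum to $k^i$, this yields $(f_k\circ c)^{(i)}(t)\in k^i\,\tfrac{2^k k^k}{k!}\rho^{-k}\beta'$ for every $t\in b$. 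Because $\rho>2e$, the scalar series $M:=\sum_k k^i\,\tfrac{2^k k^k}{k!}\rho^{-k}$ converges (Stirling), and since $F$ is \mco the span $F_{\beta'}$ is a Banach space (Proposition~\ref{E_b_mco}), so $\sum_k (f_k\circ c)^{(i)}$ converges absolutely and uniformly on $b$, with values in $M\beta'$. Classical term-by-term differentiation in the Banach space $F_{\beta'}$ then identifies this sum with $(f\circ c)^{(i)}$, whence $(f\circ c)^{(i)}(t)\in M\beta'$ for all $f\in B$ and $t\in b$, with $M$ and $\beta'$ independent of $f$ and $t$. This is exactly the required bound, so $B$ is bounded in $\Cin(E,F)$ and $\iota$ is \bornof.

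I expect the main obstacle to be the uniform control of the \emph{higher} derivatives: the polarization step introduces a factor $k^k/k!\sim e^k$, and the key point is that rescaling the domain by $\rho>2e$ — legitimate because $B$ is equibounded, so $B(\rho\beta)$ stays bounded — turns the resulting series into a convergent majorant independent of $f$ and $t$. The second delicate point is the justification of term-by-term differentiation into a topological vector space; reducing it to the Banach space $F_{\beta'}$, available because $F$ is \mco, is what makes this step routine.
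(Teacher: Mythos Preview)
Your proof is correct and follows essentially the same route as the paper's: collect the derivatives $c^{(j)}(b)$, $j\le i$, in a bounded disk, rescale it, apply the Cauchy inequality (Proposition~\ref{Cauchy_ineq_pws}) to bound each $f_k$ uniformly over $B$, pass to $\tilde f_k$ via polarization, use the multilinear Leibniz rule for $(f_k\circ c)^{(i)}$, and sum against a convergent scalar majorant. The paper chooses the scale $4je$ (depending on the derivative order) and carries out an explicit binomial computation to extract a factor $(2e)^n$ from polarization, together with a cruder Fa\`a di Bruno count $n^j j^n$; you instead fix any $\rho>2e$, use the straightforward polarization bound $\tfrac{2^k k^k}{k!}\sim\tfrac{(2e)^k}{\sqrt{2\pi k}}$, and the sharper Leibniz bound $k^i$ coming from the sum of multinomials. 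These are equivalent up to harmless polynomial factors. Your explicit reduction to the Banach space $F_{\beta'}$ (via Proposition~\ref{E_b_mco}) to justify term-by-term differentiation is cleaner than the paper's appeal to ``derivating under the sum'', but the substance is the same.
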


\begin{proof} 

Let $B$ be a bounded set in $\series (E,F)$. Let us prove that $B$ is bounded in $\Cin (E,F)$, \ie for every smooth curve $c \in \cC_E$, every bounded set $b \subset \RR$ and every $j \in \NN$, the following set is bounded in $F$:

$$\{ (f \circ c )^{(j)} (x)\ |\ f \in B,\ x \in b\}.$$

Let us fix $c\in \cC_E$ and $j\in\NN$. 

Let $C_j$ be the set made of $c$ and its derivatives  of order at most $j$. Since $c$ and its up to $j$th derivatives are smooth, they are bounded and send $b$ on a common absolutely convex bounded set $b'$ of $E$, \ie $C_j(b)\subset b'$. 

As a power series $f=\sum_n f_n$ converges uniformly on bounded sets of $E$, we can derivate under the sum. Thus, $(f \circ c )^{(j)} (x) = \sum_n  (f_n \circ c )^{(j)} (x)$.  It is possible to show by induction on $j$ that $(f_n \circ c )^{(j)}(x)=\left(\tilde{f_n}(c(\cdot),\dots,c(\cdot))\right)^{(j)}(x) = \sum_{l=1}^{j^n}\alpha_j^l  \tilde{f_n}(c^l_1(x),\dots,c^l_n(x))$ with $c_k^l\in C_j$ and $\alpha_j^l\le n^j$ an integer, where $\tilde f_n$ is the symmetric $n$-linear map from which $f_n$ results. Therefore, we have:
$$(f_n \circ c )^{(j)} (b) \subset n^j j^n\tilde{f_n}(b').$$ 
Now, let $b_E=4je\, b'$. According to Proposition~\ref{Cauchy_ineq_pws}, as $f(b_E)\subset B(b_E)$, we get:
$$f_n(b') \subset \tfrac 1{(4je)^n}B(b_E).$$ 
Thanks to the polarization formula (see Lemma~\ref{polarization_formula}), for any $x_1,\dots, x_n\in  b'$:
$$\tilde f_n(x_1,\dots,x_n)= \tfrac 1{n!} \sum_{\epsilon_1,\dots,\epsilon_n =0}^1 (-1)^{n-\sum_{j=1}^n \epsilon_j} f_n\left(\sum_{j=1}^n\epsilon_jx_j\right).$$
Then, for any $l\in F^\times$, we get:
\begin{eqnarray*}
  \left|l\circ\tilde f_n(x_1,\dots,x_n)\right|&\le& \frac 1{n!} \sum_{\epsilon_1,\dots,\epsilon_n =0}^1  \left|l\circ f_n\left(\sum_{j=1}^n\epsilon_jx_j\right)\right|\\
    &= & \frac 1{n!} \sum_{\epsilon_1,\dots,\epsilon_n =0}^1 (\textstyle\sum_{i=1}^n \epsilon_i)^n \left|l\circ f_n\left(\frac{\sum_{j=1}^n\epsilon_jx_j}{\sum_{j=1}^n\epsilon_j}\right)\right|\\
\end{eqnarray*}

Note that in the last sum, $\sum_j \epsilon_j$ can be supposed to be strictly positive, as when all $\epsilon_j$ equals $0$ then $ \left|l\circ f_n\left(\sum_{j=1}^n\epsilon_jx_j\right)\right| =0$. Now there is exactly $ \binom nj$ ways of having $\sum_{i=1}^n \epsilon_i =j$ :

$$\left|l\circ\tilde f_n(x_1,\dots,x_n)\right| \le  \frac 1{n!}  \sum_{j=0 }^n\binom nj j^n  \left|l\circ f_n\left(\frac{\sum_{j=1}^n\epsilon_jx_j}{\sum_{j=1}^n\epsilon_j}\right)\right|$$

However, by differentiating $n$ times the binom formula $(1+x)^n =  \sum_{j=0}^n \binom nj x^j$, one gets 
$$ \sum_{k=1}^n n \dots (n-k-1) x^{n-k}x^{k-1} = \sum_{j=0}^n j^n \binom nj x^j.$$

Taking $x =1$ thus implies $ \sum_{k=1}^n \frac{n !}{(n-k)!} 2^{n-k} =  \sum_{j=0 }^n\binom nj j^n $. We have then $$\frac 1{n!}  \sum_{j=0 }^n\binom nj j^n  = \sum_{k=0}^{n-1} \frac{1}{(k)!} 2^{k} \leq 2^n e.$$ 

Therefore :

\begin{eqnarray*}
\left|l\circ\tilde f_n(x_1,\dots,x_n)\right| &\le &   (2e)^n\left|l\circ f_n\left(\frac{\sum_{j=1}^n\epsilon_jx_j}{\sum_{j=1}^n\epsilon_j}\right)\right|
\\ &\le & (2e)^n \frac{1}{(4ie)^n}  |l\circ B(b_E)|\\
&  \le & \frac{1}{(2i)^n}|l\circ  B(b_E)|
\end{eqnarray*}

Thanks to Lemma~\ref{scalbound},  $b_F=n!(2i)^n \left\{\tilde f_n(x_1,\dots,x_n) \ \mid\ \forall f\in B,\  \forall x_1,\dots, x_n\in b'\right\}$ is bounded.

To conclude, for every $f \in B$,
$$(f_n \circ c )^{(i)} (b) \subset n^i\,i^n \tilde{f_n}(b') \subset \frac{n^i\,i^n}{n!(2i)^n}  b_F\subset \frac{n^i}{n!\,2^n}b_F, $$
so that,
$$(f \circ c )^{(i)} (b) \subset\sum_n (f_n\circ c)^{(i)}(b)\subset \sum_{n} \frac{n^{i}}{n!\,2^n}b_F. $$


\end{proof}

Let us note that any subset of $S(E,F)$ which is the restriction to $S(E,F)$ of a bounded set in $\Cin (E,F)$ is bounded. Indeed, according to \cite[4.4.7]{KriMi}, the bornology on $\Cin(E,F)$ is the coarsest one making all pointwise evaluations $ev_x : \Cin(E,F) \rightarrow F$ bounded. But when we artificially consider on $\Cin (E,F)$ the bornology of all uniformly bounded set, all pointwise evaluation are bounded. So this bornology is finer than the one resulting from the topology of $\Cin(E,F)$, that is bounded sets of $\Cin(E,F)$ are uniformly bounded. 

\begin{prop}
\label{prop:compl_series}
When $F$ is \mco so is $\series(E,F)$.
\end{prop}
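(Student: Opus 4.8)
The plan is to realise $\series(E,F)$ as a Mackey-closed subspace of $\cB(E,F)$, which is \mco because $F$ is (Proposition~\ref{prop:B(E,F)mco}). Concretely, I start from a \mca net $(f^\gamma)_\gamma$ in $\series(E,F)$, witnessed by scalars $\lambda_{\gamma,\gamma'}\to 0$ and an equibounded $B$ with $f^\gamma-f^{\gamma'}\in\lambda_{\gamma,\gamma'}B$. Since $\series(E,F)$ carries the topology of uniform convergence on bounded sets induced from $\cB(E,F)$ and has the same equibounded sets, this is also a \mca net in $\cB(E,F)$; by Proposition~\ref{prop:B(E,F)mco} it converges to a \bornof map $f:E\to F$. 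It then remains only to prove $f\in\series(E,F)$.

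First I would control the monomial components. Writing $f^\gamma=\sum_k f^\gamma_k$, the Cauchy inequality (Proposition~\ref{Cauchy_ineq_pws}) shows that for any absolutely convex bounded $b$ and $b'$ the absolutely convex closed hull of $B(b)$, every $g\in B$ satisfies $g_k(b)\subset b'$; hence $B_k=\{g_k\mid g\in B\}$ is equibounded for each $k$. By uniqueness of the monomial decomposition (Corollary~\ref{unique_dvp_pws}), extracting the $k$-th monomial is a linear operation, so $f^\gamma_k-f^{\gamma'}_k\in\lambda_{\gamma,\gamma'}B_k$; thus $(f^\gamma_k)_\gamma$ is \mca in $\nlin(E,F)$, which is \mco (Proposition~\ref{nlinmco}). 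Let $f_k$ denote its limit, a $k$-monomial.

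Next comes the heart of the argument, the pointwise identity $f(x)=\sum_k f_k(x)$. Being \mca, the net sits inside an equibounded set $B'$, and the same Cauchy-inequality computation yields, for $x$ in some $\tfrac12 b$ and $b''$ the absolutely convex closed hull of $B'(b)$, the uniform tail bound $\sum_{k>N}f^\gamma_k(x)\in\tfrac1{2^N}b''$ for all $\gamma$, by $k$-homogeneity and summation of a geometric series. Passing to the limit in $\gamma$ (using that $\tfrac1{2^k}b''$ is closed) gives $f_k(x)\in\tfrac1{2^k}b''$ as well, so $\sum_k f_k(x)$ is \mca and converges to some $S(x)$, the tails of $\sum f^\gamma_k(x)$ and of $\sum f_k(x)$ both lying in $\tfrac1{2^N}b''$ uniformly in $\gamma$. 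To identify $S(x)$ with $f(x)$, I split $f(x)-S(x)$ into the four pieces $f(x)-f^\gamma(x)$, the tail of $f^\gamma$ at $x$, the finite difference $\sum_{k\le N}(f^\gamma_k(x)-f_k(x))$, and the tail of $\sum f_k$ at $x$. Given a $0$-neighbourhood $U$, I first fix $N$ large so the two tails lie in $\tfrac14 U$ (uniformly in $\gamma$), then choose $\gamma$ large so the remaining two terms — one from the convergence $f^\gamma\to f$, the other a finite sum of converging terms — also lie in $\tfrac14 U$; absolute convexity then gives $f(x)-S(x)\in U$, whence $f(x)=S(x)$.

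Finally, $f$ is \bornof and $\sum_k f_k$ converges pointwise to it, so Proposition~\ref{simply_unif_cv} upgrades this to uniform convergence on bounded sets and certifies $f\in\series(E,F)$. The net therefore converges inside $\series(E,F)$, and the space is \mco. The main obstacle is exactly the interchange of the $\gamma$-limit with the infinite sum carried out in the third paragraph; it is resolved by extracting from the Cauchy inequality and $k$-homogeneity tail estimates that are \emph{uniform in $\gamma$}, which decouples the two limiting processes.
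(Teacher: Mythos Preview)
Your proposal is correct and follows essentially the same route as the paper: obtain a \bornof limit $f$ (you via Proposition~\ref{prop:B(E,F)mco}, the paper by a direct check at the end), use the Cauchy inequality to make each $(f^\gamma_k)_\gamma$ a \mca net in $\nlin(E,F)$, establish the pointwise identity $f=\sum_k f_k$ by a splitting argument, and conclude with Proposition~\ref{simply_unif_cv}. The only organisational difference is that you secure tail estimates uniform in $\gamma$ and fix $N$ before $\gamma$, whereas the paper fixes $\gamma$ first and then $N$; both orderings work.
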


\begin{proof}
Consider $(f_{\gamma})_{\gamma \in \Gamma}$ a \mca net in $\series(E,F)$.
There is a positive real net $(\lambda_{\gamma, \gamma'})_{\gamma,\gamma'\in \Gamma}$ converging towards $0$ and an equibounded set $B$ in $\series(E,F)$ such that

\begin{equation}\label{eq:CN}
f_{\gamma} - f_{\gamma'} \in \lambda_{\gamma, \gamma'} B.
\end{equation}

We can suppose \wlg that $B$ is absolutely convex and closed and that $B=\{f\ \mid\ \forall b\text{ bounded in }E,\ f(b)\subset B(b)\}$. For all $ x \in E$, $B(\{x\})$ is bounded in $F$, and $(f_{\gamma }(x))_{\gamma \in \Gamma}$ is a \mca net in $F$. Since $F$ is \mco, for each $x\in E$, $f_\gamma(x)$ converges towards $f(x)$ in F.

Let us show that $f : E \rightarrow F$ is a power series. Since $f_\gamma\in \series(E,F)$, we can write $f_{\gamma} = \sum_n f_{\gamma,n}$. Now, we fix $n \in \mathbb{N}$ and
prove that
$$ f_{\gamma,n} - f_{\gamma',n} \in \lambda_{\gamma, \gamma'} B.$$
From Equation~\eqref{eq:CN}, we have that, for any $b$ absolutely convex and bounded in $E$, $(\sum_n f_{\gamma,n}-f_{\gamma',n})(b)\in\lambda_{\gamma,\gamma'}B(b)$. Thus, by Proposition~\ref{Cauchy_ineq_pws}, for all $n\in\NN$, $(f_{\gamma,n}-f_{\gamma',n})(b)\in\lambda_{\gamma,\gamma'}B(b)$. We conclude by our assumption on the shape of $B$.

Then, $(f_{\gamma,n})_{\gamma \in \Gamma}$ is a \mca net in $\nlin (E,F)$, which is \mco according to Proposition~\ref{nlinmco}. Thus $(f_{\gamma,n})_{\gamma \in \Gamma}$ converges  in $\nlin(E,F)$ and we denote as $f_n$ its limit.

Let us show that $\sum_n f_n$ converges pointwise towards $f$. Let us  fix  $x \in E$ and $V$ an absolutely convex neighborhood of $0$ in $F$. We denote as $\DD x$ the  set $\{z x \ \mid z\in\CC, |z|<1\}$. 
We will show that each part of the following expression is small enough:
\begin{equation*}
  f(x)-\sum_{n<N}f_n(x) = \left( \lim_{\gamma'\to\infty} f_{\gamma'}(x)- f_\gamma(x)\right) +\left(f_\gamma(x)-\sum_{n<N}f_{\gamma,n}(x)\right) + \sum_{n<N}\left( f_{\gamma,n}(x)-f_n(x)\right)
\end{equation*}

Since $2\DD x$ is bounded, then so is $B(2\DD x)$ and there is $\mu>0$ such that $B(2 \DD x) \subset \mu V $. Let $\gamma_0 \in \Gamma$ be such that when $\gamma, \gamma'  \geq \gamma_0$, we have $ | \lambda_{\gamma, \gamma'} \mu |<1 $, and so $B( \DD x) \subset \lambda_{\gamma, \gamma'} B(2 \DD x)  \subset V$.  Then, 
\begin{equation*}
\forall\gamma',\gamma \geq \gamma_0,\ f_{\gamma'}(x) - f_{\gamma}(x)  \in \lambda_{\gamma, \gamma'} B( \DD x) 
\quad\text{and}\quad \lim_{\gamma'\to\infty} f_{\gamma'}(x)- f_\gamma(x)\in V.
\end{equation*}

By convergence, for $N \in \mathbb{N}$ big enough,
$$ f_{\gamma}(x)- \sum_{n < N} f_{\gamma,n}(x) \in V .$$


Moreover, for every $n$ we have 
$ f_{\gamma,n}(2x) - f_{\gamma',n}(2x) \in \lambda_{\gamma,\gamma'} B( 2 \DD x )$, and since they are $n$-monomials, $f_{\gamma,n}(2x) - f_{\gamma',n}(2x)= 2^n(f_{\gamma,n}(x) - f_{\gamma',n}(x)) $. Finally, by taking the limit $\gamma' \rightarrow \infty$, we get

\begin{equation*}
f_{\gamma,n}(x) - f_{n}(x) \in \frac{1}{2^n} \overline{V}
\end{equation*}

To sum up, $\sum_n f_n$ converges pointwise towards f, for $N$ big enough,
$$f(x)- \sum_{n < N} f_{n}(x) \in V + V +\left( \sum_{n < N} \frac{1}{2^n} \right) \overline V \subset 5 V. $$

Now, we apply Proposition~\ref{simply_unif_cv}, to show that $\sum
f_k$ does converge uniformly on bounded sets of $E$ towards $f$ and therefore
$f\in \series(E,F)$. It is sufficient to show that $f$ is bounded since we
have already shown the simple convergence. Let $b$ be an absolutely
convex and bounded set $b$ of $E$. Consider $\gamma \in \Gamma$. Then we get
$$ f(x)= f_\gamma(x) + (f(x) - f_\gamma(x)) = f_\gamma(x) + 
\lim_{\gamma'\rightarrow\infty} \sum_n (f_{\gamma', n}(x)-f_{\gamma,n}(x)).$$
 If $M$ is an upper bound of the net $(\lambda_{\gamma,\gamma'})$,
we get that $f(b) \subset f_{\gamma}(b) + M
\overline{B(b)}$.

\end{proof}

In order to prove that the composite of two power series is also a power series, we need to use Fubini's theorem and permute sums. We will have to embed our series in $\CC$ and to use Propositions~\ref{wcv_pointwise_pws} and~\ref{simply_unif_cv} that relates weak, strong, pointwise and uniform convergences.

\begin{theorem}
The composition of two power series is a power series.
\end{theorem}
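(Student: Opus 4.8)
The plan is to reduce everything, via the convergence criteria of the previous subsection, to the scalar case followed by an application of Fubini's theorem. Write $f=\sum_k f_k:E\to F$ and $g=\sum_m g_m:F\to G$ for the two power series, with $\tilde g_m$ the symmetric bounded $m$-linear map underlying $g_m$. First I would observe that $g\circ f$ is already \bornof, being the composite of two \bornof maps (Proposition~\ref{pws_borno}); hence by Proposition~\ref{simply_unif_cv} it suffices to exhibit $g\circ f$ as a \emph{pointwise} converging series of monomials. The natural candidate comes from expanding formally and regrouping by total degree: substituting $f(x)=\sum_k f_k(x)$ into $g_m(f(x))=\tilde g_m(f(x),\dots,f(x))$ and using $m$-linearity suggests setting
\[
 h_n(x)=\sum_m\sum_{k_1+\dots+k_m=n}\tilde g_m\bigl(f_{k_1}(x),\dots,f_{k_m}(x)\bigr),
\]
each summand being an $n$-monomial since $f_{k_i}$ is a $k_i$-monomial.

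To make this rigorous I would first establish weak convergence and defer the strong statement to Proposition~\ref{wcv_pointwise_pws}. Fix $x\in E$ and $l\in G'$. By Proposition~\ref{comp_lin_pws}, $l\circ g=\sum_m l\circ g_m$, and since $\sum_k f_k(x)$ Mackey-converges in $F$ (Proposition~\ref{MC_pws}), Lemma~\ref{lem:lbdd-MC} lets me expand each $l\circ\tilde g_m(f(x),\dots,f(x))$ as the convergent multiple sum $\sum_{k_1,\dots,k_m}l\circ\tilde g_m(f_{k_1}(x),\dots,f_{k_m}(x))$. Summing over $m$ presents $l\circ g(f(x))$ as a multiple series, and the whole point is to prove this series converges \emph{absolutely}, for then Fubini's theorem both justifies the interchange of the summations and the regrouping into $\sum_n l\circ h_n(x)$.

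The heart of the argument, and the step I expect to be the main obstacle, is this absolute-convergence estimate, which I would obtain by feeding two Cauchy inequalities (Proposition~\ref{Cauchy_ineq_pws}) into one another with carefully chosen scalings. Fixing $R>1$ and letting $b_F$ be the closed absolutely convex hull of $f(R\DD x)$, the Cauchy inequality for $f$ gives $f_k(x)\in R^{-k}b_F$. Fixing $\rho>1$ and letting $b_G$ be the closed absolutely convex hull of $g(\rho\,b_F)$, the Cauchy inequality for $g$ gives $g_m(b_F)\subset\rho^{-m}b_G$; combining this with the polarization bound already established inside the proof of Proposition~\ref{bounded_incl_pws_smo} yields $\lvert l\circ\tilde g_m(y_1,\dots,y_m)\rvert\le (2e/\rho)^m M$ for all $y_i\in b_F$, where $M=\sup_{w\in b_G}\lvert l(w)\rvert$. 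Putting $y_i=R^{k_i}f_{k_i}(x)\in b_F$ and extracting the scalars by multilinearity,
\[
 \bigl\lvert l\circ\tilde g_m(f_{k_1}(x),\dots,f_{k_m}(x))\bigr\rvert\le R^{-(k_1+\dots+k_m)}\,(2e/\rho)^m\,M .
\]
Choosing $R=2$ and then $\rho>4e$ makes $M\sum_m(2e/\rho)^m\bigl(\sum_{k\ge 0}R^{-k}\bigr)^m$ a convergent geometric series, which is exactly the absolute convergence required.

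With Fubini in hand the conclusion is routine. Regrouping by total degree gives $\sum_n l\circ h_n(x)=l\circ g(f(x))$ for every $l\in G'$; moreover the bounds above, being uniform in $x$ over bounded sets and scalar in $l$, show that each grouped series defining $h_n$ Mackey-converges in the \mco space $\nlin(E,F)$ (Proposition~\ref{nlinmco}), so that $h_n$ is genuinely an $n$-monomial. (Alternatively, one first reduces to $f_0=0$ by translating $g$, which keeps all inner sums finite.) Proposition~\ref{wcv_pointwise_pws} then upgrades weak convergence to convergence of $\sum_n h_n(x)$ in $G$, and since $G'$ separates points the limit must be $g(f(x))$. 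Finally $g\circ f$ is \bornof and pointwise equal to the converging series of monomials $\sum_n h_n$, so Proposition~\ref{simply_unif_cv} concludes that $g\circ f$ is a power series.
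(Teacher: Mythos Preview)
Your proposal is correct and follows essentially the same route as the paper: define the candidate monomials $h_n$ by formal substitution, reduce to absolute scalar convergence via Cauchy-type bounds so as to apply Fubini, then upgrade via Propositions~\ref{wcv_pointwise_pws} and~\ref{simply_unif_cv}. The only notable difference is that you obtain the absolute-convergence estimate explicitly through a polarization bound on $\tilde g_m$, whereas the paper invokes the complex-analytic fact that convergence on $3\DD x$ forces absolute convergence on $2\DD x$ and then uses a slightly less detailed Cauchy-inequality argument for the boundedness of $h_m$; your version is a bit more careful on this point (note the minor slip: your $\nlin(E,F)$ should read $\nlin(E,G)$).
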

\begin{proof}
Consider $ f = \sum_n f_n: E \rightarrow F$ and $g = \sum_k g_k: G \rightarrow E$ two power series. Let us show that $f \circ g: G \rightarrow F$ is a sum $\sum_m h_m$ of $m$-monomials converging uniformly on bounded sets of $G$. Let us use $\tilde{f_n}$ (resp. $\tilde{g_k}$) for the $n$-linear (resp. $k$-linear) function corresponding to $f_n$ (resp. $g_k$).

Because the series $\sum_k g_k$ Mackey-converges (see Proposition~\ref{MC_pws}), and because, for each $n \in \mathbb{N}$, $f_n$ is an $n$-monomial, we have:
 $$ \forall x \in G, \tilde{f_n}(g(x)) = \sum\limits_{k_1,\dots k_n \geq 0} f_n (g_{k_1} (x), \dots,g_{k_n} (x) ) .$$
Notice that $\tilde{f_n} (g_{k_1} (x), \dots,g_{k_n} (x) )$ is a $(k_1 + \dots + k_n)$-monomial in $x$. 

Let us write 
\begin{equation}\label{eq:hm}
 h_m: x \mapsto \sum\limits_{n \geq 0} \sum\limits_{k_1 + \dots + k_n = m \atop k_i \geq 0 }  \tilde{f_n} (g_{k_1} (x), \dots,g_{k_n} (x) )
\end{equation}

and show that $h_m$ is a well defined \bornof $m$-monomial such that $f \circ g = \sum_m h_m$. 

Let us consider $x \in G$ and fix $l \in F'$. The power series 
\begin{equation}\label{eq:lfg}
l \circ f \circ g  =  \sum\limits_{k_1,\dots,k_n \geq 0} l( \tilde{f_n} (g_{k_1} (x), \dots,g_{k_n} (x) ) )
\end{equation}
is convergent on $3 \DD x $, hence absolutely convergent on $ 2 \DD x $ where $\DD$ stands for the unit ball of $\CC$. Thus, we can permute coefficients in the converging sum above. Therefore, the general term $l\circ h_m(x)$ of the series $\sum_{m\ge 0} l \circ h_m (x)$, which is obtained from~\eqref{eq:lfg} by permuting indices of the sum, is also the sum of an absolutely converging series. By Proposition~\ref{wcv_pointwise_pws}, since for any $l\in F'$ and any $x\in G$, $l\circ h_m(x)$ is the limit of a converging sum in $\CC$, then for any $x\in E$, $h_m(x)$ is well-defined in $F$. Moreover, for any $l\in F'$, we have proved that $l\circ f\circ g(x)=\sum_{m\ge 0}l\circ h_m(x)=l\circ \sum_{m\ge 0}h_m(x)$, so by Hahn-Banach Separation theorem:
\begin{equation*}
  \forall x\in G,\ f\circ g(x)=\sum_m h_m(x).
\end{equation*}

Let $b$ be a bounded set in $G$. Since $g$ is \bornof, $g(2b)$ is a bounded set in $E$, and we set $b'$ its absolutely convex and closed closure which is also bounded. Let $b''$ be the absolutely and closed closure of the bounded set $f(2 g(2b))$ of $F$. Now, by Proposition~\ref{Cauchy_ineq_pws}, if $x \in b$, then $g_{k} (2x) \in b'$ and  $\tilde{f_n} (2g_{k_1} (2x), \dots,2g_{k_n} (2x) ) \in  b'' $. Since $g_{k_i}$ and $f_n$ are monomials,  for $x \in b$ we get $g_{k_i}(x) \in \frac{1}{2^{k_i}} b'$ and $\tilde{f_n} (g_{k_1} (x), \dots,g_{k_n} (x) ) \in \frac{1}{2^n} \frac{1}{2^{\sum k_i}} b''$. Since there is exactly ${m + n -1 \choose m}$ ways of choosing $n$ natural numbers whose sum is $m$, we get from formula \ref{eq:hm} :
$$h_m (x)  \in \frac{1}{2^m} \sum_n  {m + n -1 \choose m} \frac{1}{2^n} b'.$$
Moreover, we have:
 $${m + n -1 \choose m} \sim_{n \rightarrow \infty} \tfrac{n^m}{m!}.$$
Thus, $\sum_n {m + n -1 \choose m} \frac{1}{2^n}$ is absolutely converging. We have
$$h_m (b) \subset  \sum_n  {m + n -1 \choose m} \frac{1}{2^n} b'$$
so $h_m$ is \bornof. As it is a converging sum of $m$-monomials, $h_m$ is also an $m$-monomial.

We conclude that $f\circ g$ is a power series by Proposition~\ref{simply_unif_cv}, as $\sum_m h_m$ is a series of bounded $m$-monomials pointwise converging to $f\circ g$ which is also bounded.



\end{proof}

We can finally address the problem of cartesian closedeness, which is solved by getting back to the scalar case and by using Fubini's theorem. 

\begin{theorem}
\label{cart_closed}
When $E$, $F$ and $G$ are \mco spaces, then $$\series (E , \series (F,G) ) \simeq \series (E \times F , G).$$ 
\end{theorem}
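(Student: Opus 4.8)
The plan is to exhibit the transpose bijection $\Phi:\series(E\times F,G)\to\series(E,\series(F,G))$ given by $\Phi(h)(x)(y)=h(x,y)$, together with its evident candidate inverse $\Psi(g)(x,y)=g(x)(y)$, and to check four things: that $\Phi$ and $\Psi$ are mutually inverse, that they are linear, that they land in the stated spaces of power series (\emph{well-definedness}), and that they are bounded. As maps between arbitrary functions $\Phi$ and $\Psi$ are manifestly linear and mutually inverse, so all the content sits in well-definedness and boundedness. Boundedness is almost formal: a bounded subset of $E\times F$ is contained in a product $b_E\times b_F$ of bounded sets, bounded sets of the function spaces are exactly the equibounded ones, and for an equibounded $B\subset\series(E\times F,G)$ one has $\{\Phi(h)(x)(y)\mid h\in B,\,x\in b_E,\,y\in b_F\}=B(b_E\times b_F)$, which is bounded in $G$; reading this identity in both directions shows $\Phi$ and $\Psi$ preserve equibounded sets. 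Since both target spaces are \mco by Proposition~\ref{prop:compl_series}, the theorem will follow once well-definedness is established, which is the real work.

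For well-definedness I would first decompose monomials into \emph{bihomogeneous} pieces. Writing $h=\sum_m h_m$ with $h_m$ coming from the symmetric $m$-linear map $\tilde h_m$, multilinearity and symmetry give
\begin{equation*}
h_m(x,y)=\sum_{i+j=m}\binom{m}{i}\,\tilde h_m(\underbrace{x,\dots,x}_{i},\underbrace{y,\dots,y}_{j})=:\sum_{i+j=m}h_{i,j}(x,y),
\end{equation*}
where $h_{i,j}$ is $i$-homogeneous in $x$ and $j$-homogeneous in $y$. Formally $h(x,y)=\sum_{i,j}h_{i,j}(x,y)$, and the two power series one wants are obtained by regrouping this doubly-indexed family: to see that $\Phi(h)(x)\in\series(F,G)$ one sums over $i$ at fixed $j$, its $j$-monomial being $y\mapsto\sum_i h_{i,j}(x,y)$; to see that $\Phi(h)\in\series(E,\series(F,G))$ one sums over $j$ at fixed $i$, its $i$-monomial being $x\mapsto\bigl(y\mapsto\sum_j h_{i,j}(x,y)\bigr)$; and $\Psi$ demands the opposite regrouping into $m$-monomials $\sum_{i+j=m}g_{i,j}(x)(y)$ on $E\times F$. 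Everything thus reduces to legitimizing the interchange of two summations of the family $(h_{i,j})$.

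I would justify the interchange by descending to the scalar case. For any $l\in G'$ the function $(s,t)\mapsto l\circ h(sx,ty)$ is separately holomorphic $\CC^2\to\CC$ by Theorem~\ref{powerseries_holo} (the maps $s\mapsto(sx,ty)$ and $t\mapsto(sx,ty)$ are affine, hence holomorphic curves), so it is jointly holomorphic, and its bihomogeneous Taylor coefficients are exactly the $l\circ h_{i,j}(x,y)$. On a product $E_{b_E}\times F_{b_F}$ of the Banach spaces of Proposition~\ref{E_b_mco}, the bidisk Cauchy estimate then shows $\sum_{i,j}l\circ h_{i,j}$ is absolutely convergent on a polydisk, exactly as in the proof of Theorem~\ref{powerseries_holo}; absolute convergence in $\CC$ legitimizes Fubini, so both iterated sums agree with $l\circ h(x,y)$. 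Since $G'$ is point-separating, the Hahn--Banach Separation theorem (Proposition~\ref{HBsep}) identifies the $G$-valued iterated sums once they exist, and Proposition~\ref{wcv_pointwise_pws} upgrades the scalar pointwise convergence just obtained into genuine convergence of the inner sums $\sum_i h_{i,j}(x,y)$ and of the outer sum in $G$. This makes $y\mapsto\sum_i h_{i,j}(x,y)$ a well-defined $j$-monomial and $\Phi(h)(x)$ the pointwise sum of these; the symmetric argument (with the two groupings exchanged) handles $\Psi$.

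What remains, and what I expect to be the main obstacle, is upgrading pointwise to uniform convergence on bounded sets, where the quantitative bookkeeping lives. Fixing absolutely convex bounded $b_E,b_F$ and letting $b'$ be the absolutely convex closed closure of $h(2b_E\times 2b_F)$, the bidisk Cauchy inequality of Proposition~\ref{Cauchy_ineq_pws} yields, after rescaling, geometric bounds of the form $h_{i,j}(b_E\times b_F)\subset C^{-(i+j)}b'$, uniform for $h$ in a fixed equibounded set. These bounds must be used to control everything at once: that each inner series $\sum_i h_{i,j}$ converges uniformly on $b_F$ uniformly in $x\in b_E$, that the $j$-monomials are then summed uniformly on $b_F$, and that the resulting families stay equibounded so that $\Phi(h)$ is a \emph{bounded} map $E\to\series(F,G)$. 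Feeding this into Proposition~\ref{simply_unif_cv} at the level of $\series(F,G)$ (and once more in the outer variable) gives $\Phi(h)\in\series(E,\series(F,G))$, while Proposition~\ref{wcv_unif_pws}, applied to the bounded map $\Psi(g)$ together with the weak uniform convergence obtained scalarly, gives $\Psi(g)\in\series(E\times F,G)$. The delicate part is precisely keeping all these estimates simultaneously uniform in the second summation index and in $h$, so that a single equibounded set absorbs the entire doubly-indexed family; this is the place where the earlier convergence propositions and the Cauchy inequality must be orchestrated carefully rather than applied one term at a time.
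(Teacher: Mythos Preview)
Your outline is correct and follows essentially the same route as the paper: define the transpose maps, reduce to the scalar case via linear forms, invoke Fubini on the absolutely convergent double series to legitimize regrouping, lift back to $G$ via Proposition~\ref{wcv_pointwise_pws}, and finally upgrade pointwise to uniform convergence. The paper likewise treats only one direction in detail (your $\Psi$) and declares the other easier.

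The one place where you diverge is the final upgrade, which you flag as ``the main obstacle'' requiring careful orchestration of Cauchy estimates to keep everything simultaneously uniform. The paper sidesteps this entirely: since $\Psi(g)(x,y)=g(x)(y)$, boundedness of $\Psi(g)$ is immediate from $\Psi(g)(B_1\times B_2)=g(B_1)(B_2)$, and then Proposition~\ref{simply_unif_cv} (pointwise convergence to a \emph{bounded} limit implies uniform convergence on bounded sets) finishes in one stroke---no bookkeeping needed. Your explicit bidisk Cauchy bounds would also work, but they duplicate effort already packaged inside Proposition~\ref{simply_unif_cv}. A minor technical difference: to get the scalar double series, the paper composes with the evaluation functional $\chi^y\colon h\mapsto l\circ h(y)$ in $\series(F,G)^\times$ (using that partial sums Mackey-converge, Proposition~\ref{MC_pws}), rather than invoking two-variable holomorphy directly; both routes lead to the same absolutely convergent scalar series.
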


\begin{proof}

Let us first notice that if the stated equality is true, then the topologies on these spaces are the same. Indeed, sending $B_1 \times B_2$ on a weak $0$-neighborhood $U$ is equivalent to sending $B_1$ on a function which will send $B_2$ on $U$. This will give us a homeomorphism, thus a \bornof isomorphism, between the two spaces. 

Let us define the two maps inverse of one another, as shown by direct computation:
\begin{equation*}
\phi: 
\left\lbrace
\begin{split}
\series( E \times F, G) & \rightarrow \series(E, \series(F,G)) \\
\sum_k f_k & \mapsto \left( x \mapsto \left( y \mapsto \sum\limits_n  \sum\limits_m \binom{n+m}{n} \tilde{f}_{n+m} (\underbrace{(x,0),\dots,(x,0)}\limits_{n \text{ times}},\overbrace{(0,y),\dots,(0,y)}\limits^{m \text{ times}}) \right) \right)
\end{split}
\right.,
\end{equation*}
and 
\begin{equation*}
\psi:
\left\lbrace
\begin{split}
 \series(E, \series(F,G)) & \rightarrow \series( E \times F, G) \\
 \sum\limits_n ( f_n: x \mapsto \sum\limits_m f^x_{n,m}) & \mapsto \left( (x,y) \mapsto \sum\limits_k \sum\limits_{n+m = k} f^x_{n,m} (y) \right) 
\end{split}
\right..
\end{equation*}
We need to show that they are well defined, linear and \bornof.
 The difficulty is in showing that their image is indeed made of power series. We will do it on $\psi$, the proof for $\phi$ using the same tools and being easier.

Consider a function $f \in \series(E, \series(F,G))$. Then $f$ can be written as $\sum_n ( f_n: x \mapsto \sum_m f^x_{n,m})$, each $f_n$ being a \bornof $n$-monomial from $E$ to $\series(F,G)$, and each $f^x_{n,m}$ being a \bornof $m$-monomial from $F$ to $G$. The function $ (x,y) \mapsto \sum_{n+m = k} f^x_{n,m} (y)$ is a \bornof $k$-monomial.

Let us fix $l \in G^{\times}$, $y\in F$ and define $\chi^y:\, \series(F, G) \to \CC,\: g\mapsto l\circ g(y)$. If $\cB$ is bounded in $\series(F,G)$, then $\cB(y)$ is bounded in $G$ and $\chi^y(g)$ is bounded in $\CC$, hence $\chi^y\in \series(F,G)^\times$.
Moreover, because $f$ is a power series, we know from Proposition~\ref{MC_pws} that its partial sums are Mackey-convergent and from Proposition~\ref{bounded_Mackey} that $\chi^y$ preserves Mackey-convergence. Thus, for any $x\in E$, we have that 
$$\sum_n \chi^{y} \left( \sum_m  f^x_{n,m} \right) = \sum_n \sum_m l \circ f^x_{n,m}(y).$$ 
In particular, let us fix $x$ and $y$, then $\sum_n \sum_m l \circ f^{2x}_{n,m}(2y)$ Mackey-converges in $\CC$. Therefore, $l \circ f^{2x}_{n,m}(2 y)=2^n2^ml \circ f^x_{n,m}(y)$ is the general term of a bounded double sequence and the radius of convergence of  the $\CC$-power series $\sum_n \sum_m l \circ f^x_{n,m}(y) z^{n+m} $ is at least $2$. Finally,  $\sum_n \sum_m l \circ f^x_{n,m}(y)$ converges absolutely in $\CC$. 
 Thanks to Fubini theorem, we know that  we can permute absolutely converging double series in $\mathbb{C}$. Then $\sum_k \sum_{n+m = k} l \circ f^x_{n,m} (y)$ converges and is equal to $\sum_n \sum_m l \circ f^x_{n,m}(y) $.   Thanks to Proposition~\ref{wcv_pointwise_pws}, for any $x\in E$ and $y\in F$,  $\psi(f)(x,y)\in  G$, that is $\psi(f)$ is pointwise convergent.

We now prove that $\psi (f)$ converges uniformly on bounded subsets of $E$. First, notice that $\psi(f)$ is \bornof. Indeed, $f$ is \bornof thanks to Proposition~\ref{pws_borno}, and $\psi (f)$ sends $B_1 \times B_2$ on $f(B_1)(B_2)$. Proposition~\ref{simply_unif_cv} states that a pointwise converging power series which converges towards a \bornof function converges uniformly on bounded subsets of its codomain. We conclude that $\psi (f) \in \series( E \times F, G)$. 

\end{proof}

\subsection{From $\Lin$ to $\Quant$}

So far, we have proven that the category $\Lin$ of \mco spaces and \bornof linear maps is symmetric monoidal closed and cartesian (see Section~\ref{sec:lin}). We have also proven that the category $\Quant$ of \mco spaces and smooth functions is cartesian closed (see Section~\ref{subsec:ccc}). We will now prove that there is a Linear-Non linear adjunction between $\Lin$ and $\Quant$ that comes from an exponential modality constructed exactly as in convenient spaces (see~\cite{BET10}\cite[5.1.1]{FroKri} and Section~\ref{subsec:DiLL}).

\begin{defi}
Let $E$ be a \mco space. For any $x\in E$, the Dirac delta distribution $\delta$ can be seen as a function on power series:
$$\delta:\left\lbrace
\begin{array}{rcl}
E&\rightarrow&\series(E,\CC)^\times\\
x&\mapsto&\delta_x:f\mapsto f(x)
\end{array}
\right.
$$
\end{defi}

\paragraph{Exponential modality.}

For any \mco space $E$, we construct a \mco space $!E$ from  $\delta(E)$ by applying the Mackey-completion procedure described in Proposition~\ref{mcompletion}.
\begin{defi}
Let us use $ ! E$ for the Mackey-completion of the linear span of $\delta (E) $ in $\series(E,\mathbb{C})^{\times}$ endowed with the topology of uniform convergence on bounded subsets of $\series(E,\mathbb{C})$.
\end{defi}

$\delta$ is clearly linear, and as it acts on \bornof functions (see Proposition \ref{pws_borno}) it is itself \bornof.

Thanks to Mackey-completion, in order to define a linear function on $!E$, it is sufficient to define it on $\delta_x$ for any $x\in E$. 
Let $f \in \linb(E,F)$ be a \bornof linear map. We define $!f:!E\to !F$ as the linear extension of:
\begin{equation*} 
!f: \left\lbrace 
\begin{split}
\delta (E) & \rightarrow !F \\
\delta_x & \mapsto \delta_{f(x)}
\end{split}
\right.
\end{equation*}
This function is linear by construction. Let us check that it is bounded. If  $B$ is an equibounded set in $\series(E,\CC)^{\times}$, then $\{\delta_{f(x)} \ |\  \delta_x \in B \} $ is equibounded. Indeed, if $B$ is bounded in $\series(E,\CC)$, then 
$$\{\delta_{f(x)}(B) \ |\  \delta_x \in B \} = \{ B(\{f(x)\})	 \ |\  \delta_x \in B \} = \{ \delta_x (B \circ f) \}	 \ |\  \delta_x \in B \}$$ is bounded, as $f$ bounded makes $B \circ f = \{ g \circ f \ |\  g \in B \}$ bounded. Hence $ !f $ is well defined, and is a bounded linear function. So we have indeed $ ! f \in \linb( !E,! F)$. 

\begin{defi}
  We write $ ! : \Lin \rightarrow \Lin$ for the functor sending a \mco space $E$ on $!E$, and a \bornof linear map $f\in\linb(E,F)$ on $!f\in\linb(!E,!F)$.
\end{defi}

\begin{prop}
  The functor $!$ is an exponential modality:
  \begin{itemize}
  \item $(!,\rho,\epsilon)$ is a comonad, with
\begin{equation*} 
\epsilon_E :
\left\lbrace
\begin{split}
!E & \rightarrow E\\
\delta_x & \mapsto x
\end{split}
\right.
\qquad\qquad\qquad
\rho_E : 
\left\lbrace
\begin{split}
!E & \rightarrow !!E\\
\delta_x & \mapsto \delta_{\delta_x}
\end{split}
\right. .
\end{equation*}
\item $!:(\Lin, \times,\top)\rightarrow (\Lin,\mctens, \unit)$ is a strong and symmetric monoidal functor, with
\begin{equation*} 
 m^0:
\left\lbrace
\begin{split}
1 & \rightarrow !\top=!\{0\}\\
1 & \mapsto \delta_0
\end{split}
\right.
\qquad\qquad\qquad
m^2_{E,F} : 
\left\lbrace
\begin{split}
!E\mctens !F&\rightarrow !(E\times F)\\
\delta_x \otimes \delta_y& \mapsto \delta_{(x,y)}
\end{split}
\right. .
\end{equation*}
\item the following diagram commute:
$$\xymatrix{
!E\mctens !F \ar[r]^{m^2_{E,F}} \ar[d]_{\rho_E\mctens\rho_F} & !(E\times F) \ar[r]^{\rho_{E\times F}} & !!(E\times F) \ar[d]^{!\langle!\pi_1,!\pi_2\rangle}\\
!!E\mctens !!F\ar[rr]_{m^2_{!E,!F}} &&!(!E\times !F)
}$$
  \end{itemize}


\end{prop}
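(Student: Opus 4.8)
The plan is to exploit the single structural fact underlying every clause: each of the maps $\epsilon_E$, $\rho_E$, $m^0$, $m^2_{E,F}$ is a \bornof linear map whose source is (the Mackey-completion of) the linear span of Dirac deltas, and whose target is \mco. By the universal property of the Mackey-completion (Proposition~\ref{mcompletion}), such a map is entirely determined by its restriction to the span of $\delta(E)$, hence by its prescribed values on the $\delta_x$. Conversely, two \bornof linear maps out of $!E$ that agree on every $\delta_x$ agree on the whole space, since the span of $\delta(E)$ is Mackey-dense in $!E$ and \bornof linear maps preserve Mackey-convergence (Proposition~\ref{bounded_Mackey}). I will therefore first check that the four maps are well defined and \bornof, and then verify each comonad, monoidal and coherence identity by a one-line computation on Diracs.

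For well-definedness and boundedness, $\epsilon_E$ and $\rho_E$ are defined on $\delta(E)$ by $\delta_x\mapsto x$ and $\delta_x\mapsto\delta_{\delta_x}$, extended linearly and then along the Mackey-completion; their boundedness is established exactly as for $!f$ above, by showing that an equibounded set of $\series(E,\CC)^\times$ is sent to an equibounded set. The map $m^0\colon\unit\to !\{0\}$, $1\mapsto\delta_0$, is plainly \bornof linear. For $m^2_{E,F}$ I would first define the \bornof bilinear map $!E\times !F\to !(E\times F)$ determined by $(\delta_x,\delta_y)\mapsto\delta_{(x,y)}$, linearize it through $!E\otimes_\beta !F$ by Proposition~\ref{prop:iso_tens_bilin}, and finally extend it to $!E\mctens !F$ by the universal property of the Mackey-completion; boundedness again reduces to an equiboundedness check against the generating bounded sets $b_E\otimes b_F$.

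All the equational obligations then become trivial evaluations on Diracs. The counit laws read $\epsilon_{!E}(\delta_{\delta_x})=\delta_x$ and $!\epsilon_E(\delta_{\delta_x})=\delta_{\epsilon_E(\delta_x)}=\delta_x$; coassociativity reads $\rho_{!E}(\delta_{\delta_x})=\delta_{\delta_{\delta_x}}=\delta_{\rho_E(\delta_x)}=!\rho_E(\delta_{\delta_x})$; naturality of $\epsilon$ and $\rho$ with respect to $!f$ is immediate from $!f(\delta_x)=\delta_{f(x)}$. The monoidal associativity, unit and symmetry coherences, as well as the final diagram of the statement, are checked the same way: for instance the symmetry sends $\delta_x\otimes\delta_y$ to $\delta_y\otimes\delta_x$ and then, through $m^2$, to $\delta_{(y,x)}$, which matches $!\langle\pi_2,\pi_1\rangle(\delta_{(x,y)})$, and the final diagram amounts on Diracs to the identity $\delta_{(\delta_x,\delta_y)}=\delta_{(\delta_x,\delta_y)}$.

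The genuinely hard part is strong monoidality, i.e.\ that $m^0$ and $m^2$ are isomorphisms. For $m^0$ this is easy: a power series on $\{0\}$ is a constant, so $\series(\{0\},\CC)\simeq\CC$ and $!\{0\}$ is the one-dimensional span of $\delta_0$, whence $m^0$ is a \bornof isomorphism. For $m^2_{E,F}$ I would construct the inverse on Diracs by $\delta_{(x,y)}\mapsto\delta_x\otimes\delta_y$; the obstacle is not its definition but its boundedness and the fact that it really inverts $m^2$ on all of $!(E\times F)$. Here I would leverage the cartesian closed structure: Theorem~\ref{cart_closed} gives $\series(E\times F,\CC)\simeq\series(E,\series(F,\CC))$, and dualizing this isomorphism together with the realization of $!(E\times F)$, $!E$ and $!F$ as Mackey-completions inside the corresponding \bornof duals lets one transport the candidate inverse to a \bornof linear map and check, on the Mackey-dense span of Diracs, that it is two-sided inverse to $m^2$. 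This dualization-plus-density argument is where all the real work lies; the comonad and coherence clauses are otherwise purely formal.
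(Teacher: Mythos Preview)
Your approach is essentially the paper's: define everything on Diracs, extend by linearity and Mackey-completion, and verify all diagrams on the Mackey-dense generators. The paper's own proof is in fact terser than yours---it simply records the inverse $(m^2_{E,F})^{-1}:\delta_z\mapsto\delta_{\pi_1 z}\otimes\delta_{\pi_2 z}$ on Diracs and leaves the extension and boundedness implicit, without invoking Theorem~\ref{cart_closed} at all.

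One remark on your route to strong monoidality: dualizing $\series(E\times F,\CC)\simeq\series(E,\series(F,\CC))$ yields an isomorphism of the ambient \bornof duals, but it does not by itself identify the subspace $!E\mctens !F$ inside $\series(E,\series(F,\CC))^\times$; you still need to argue that the image of $!(E\times F)$ under the dualized isomorphism lands in (and fills) $!E\mctens !F$, which is essentially the same boundedness-and-density check you were trying to avoid. The paper sidesteps this by simply asserting the inverse on Diracs, so neither argument is fully detailed, but your detour through cartesian closedness does not obviously shorten the path. A cleaner justification---if you want one---is to observe that $(m^2)^{-1}$ on Diracs is the restriction of the map induced by the bounded bilinear pairing $(\phi,g)\mapsto\phi(x\mapsto g(\pi_1 x,\cdot))$ and then chase equiboundedness directly, rather than appealing to Theorem~\ref{cart_closed}.
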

\begin{proof}
  Notice that the natural transformations $\epsilon$, $\rho$ and $m^2$ are defined by linearity and Mackey-complete extension. Then, it is enough to check the diagrams for comonad and symmetric monoidality on Dirac delta distributions. The morphisms $m^0$ and $m^2_{E,F}$ are isomorphisms with inverse:
\begin{equation*} 
 (m^0)^{-1}:
\left\lbrace
\begin{split}
!\top=!\{0\} & \rightarrow  \unit\\
\delta_0 & \mapsto 1
\end{split}
\right.
\qquad\qquad\qquad
(m^2_{E,F})^{-1} : 
\left\lbrace
\begin{split}
!(E\times F)&\rightarrow !E\mctens !F\\
 \delta_{z}& \mapsto \delta_{\pi_1 z} \otimes \delta_{\pi_2 z}
\end{split}
\right. .
\end{equation*}
\end{proof}

\paragraph{Distributions.}

The distribution space $\series(E,\CC)^\times$ is equipped with a convolution product defined as follow. Notice that when restricted to $!E$, the convolution product can be obtained from the cartesian structure of $\Lin$ and from $m^2$.
\begin{prop}\label{prop:convo}
For any $D_1$ and $D_2$ in $\series(E,\CC)^\times$, the convolution $D_1\ast D_2$ is in $\series(E,\CC)^\times$ and acts on $f\in\series(E,\CC)$ as:
$$(D_1\conv D_2) f = D_1(x \mapsto (D_2 (y \mapsto f(x+y)))).$$
Moreover, if $D_1$ and $D_2$ are in $!E$, then $D_1\conv D_2$ is in $!E$.
\end{prop}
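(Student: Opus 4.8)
The plan is to treat the two assertions separately: first that $D_1\conv D_2$ is a well-defined element of $\series(E,\CC)^\times$, and then that it lies in $!E$ whenever $D_1,D_2$ do.

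For the first assertion I would exhibit $D_1\conv D_2$ as a composite of bounded linear maps, the crucial input being cartesian closedness. Write $s:E\times E\to E$, $(x,y)\mapsto x+y$; it is bounded linear, hence a power series. For $f\in\series(E,\CC)$ the composite $f\circ s:(x,y)\mapsto f(x+y)$ is a power series in $\series(E\times E,\CC)$ by the composition theorem, and $f\mapsto f\circ s$ is bounded linear (a bounded set of $E\times E$ is sent by $s$ into a bounded set of $E$). Through the isomorphism $\series(E\times E,\CC)\simeq\series(E,\series(E,\CC))$ of Theorem~\ref{cart_closed}, $f\circ s$ corresponds to $\hat f:x\mapsto(y\mapsto f(x+y))\in\series(E,\series(E,\CC))$. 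Since $D_2$ is a bounded linear form, Proposition~\ref{comp_lin_pws} shows that $D_2\circ\hat f:x\mapsto D_2(y\mapsto f(x+y))$ is again a power series in $\series(E,\CC)$, to which $D_1$ may be applied. Thus
\[
\series(E,\CC)\xrightarrow{\,-\circ s\,}\series(E\times E,\CC)\xrightarrow{\simeq}\series(E,\series(E,\CC))\xrightarrow{\,D_2\circ-\,}\series(E,\CC)\xrightarrow{\,D_1\,}\CC,
\]
where each arrow is bounded linear (boundedness of $D_2\circ-$ uses that $\hat f$ is \bornof by Proposition~\ref{pws_borno}), so $D_1\conv D_2\in\series(E,\CC)^\times$. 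Bilinearity in $(D_1,D_2)$ is immediate from the formula, and the same chain shows that $D_1\mapsto D_1\conv D_2$ and $D_2\mapsto D_1\conv D_2$ are each bounded linear.

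For the second assertion, the key observation is that on Dirac masses convolution coincides with the multiplication induced by the monoidal structure. I would set $\nabla:=\,!s\circ m^2_{E,E}:!E\mctens!E\to!E$, a bounded linear map as a composite of bounded linear maps, and let $\tilde\nabla:!E\times!E\to!E$ be the associated bounded bilinear map via Proposition~\ref{prop:iso_tens_bilin}; explicitly $\tilde\nabla(\delta_x,\delta_y)=\delta_{x+y}$. A direct computation gives $(\delta_x\conv\delta_y)(f)=\delta_x(u\mapsto f(u+y))=f(x+y)=\delta_{x+y}(f)$, so $\delta_x\conv\delta_y=\tilde\nabla(\delta_x,\delta_y)$, and by bilinearity the two maps $\conv$ and $\tilde\nabla$ agree on $\mathrm{span}\,\delta(E)\times\mathrm{span}\,\delta(E)$.

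It remains to upgrade this to agreement on all of $!E\times!E$, and here I would invoke the uniqueness part of the universal property of the Mackey-completion (Proposition~\ref{mcompletion}). Both $\series(E,\CC)^\times$ (Mackey-complete by Proposition~\ref{lin_mco} since $\CC$ is) and $!E$ are Mackey-complete, and $!E$ is the Mackey-completion of $\mathrm{span}\,\delta(E)$. Fixing $D_1\in\mathrm{span}\,\delta(E)$, the bounded linear maps $D_2\mapsto D_1\conv D_2$ and $D_2\mapsto\tilde\nabla(D_1,D_2)$ from $!E$ into $\series(E,\CC)^\times$ agree on $\mathrm{span}\,\delta(E)$, hence coincide by uniqueness of the bounded linear extension; then fixing $D_2\in!E$ and repeating the argument in the first variable yields $D_1\conv D_2=\tilde\nabla(D_1,D_2)$ for all $D_1,D_2\in!E$. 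Since $\tilde\nabla$ takes values in $!E$, this gives $D_1\conv D_2\in!E$. The main obstacle is the well-definedness in the first assertion, namely that $x\mapsto D_2(y\mapsto f(x+y))$ is genuinely a power series, which rests essentially on cartesian closedness together with stability of power series under post-composition by a bounded linear form; once separate boundedness is in hand, the identification with $\tilde\nabla$ on $!E$ is a routine extension-by-density argument.
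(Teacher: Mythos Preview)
Your proof is correct and follows essentially the same route as the paper: the first assertion is obtained via cartesian closedness (Theorem~\ref{cart_closed}) together with post-composition by the bounded linear form $D_2$, and the second by identifying convolution on $!E$ with the morphism $!s\circ m^2_{E,E}$ checked on Dirac masses and extended by Mackey-completion. Your two-variable extension argument is spelled out more carefully than the paper's, which simply remarks that it suffices to check on Dirac distributions since they generate $!E$.
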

\begin{proof}
  Let $f\in \series(E,\CC)$ and $x\in E$. Since $(x,y)\mapsto x+y$ is linear and \bornof (and so a power series), the function $(x,y)\mapsto f(x+y)$ is a power series. Then, by cartesian closedness (Theorem~\ref{cart_closed}), $ x \mapsto ( y \mapsto f(x+y)) \in\series(E,\series(E,\CC))$. Since $D_2$ is \bornof and linear, we get by postcomposition that $x \mapsto D_2 ( y \mapsto f(x+y)) \in\series(E,\CC)$, thus we can apply $D_1$ to compute $(D_1\ast D_2) f$. Notice that $D_1\conv D_2$ is linear and \bornof since all the involved operations are both \bornof and linear.

Let $D_1$ and $D_2$ be in $!E$. Then the convolution operator $\conv$ is the morphism:
$$ \xymatrix@R=3pt@C=50pt{
  !E\mctens !E \ar[r]^{m^2_{E,E}} & !(E\times E) \ar[r]^{!((x,y)\mapsto x+y)} &!E\\
\delta_x \otimes \delta_y \ar@{|->}[r] & \delta_{(x,y)} \ar@{|->}[r]& \delta_{x+y}}
$$
Indeed, it is sufficient to prove it on Dirac delta distributions as they generate the \mco space $!E$.
\end{proof}

In general $\delta$ reflects the shape of the functions of its codomain (see~\cite{BET10} where $\delta$ is smooth). In Proposition~\ref{prop:deltaseries}, we show that $\delta$ is a power series by following the scheme introduced in~\cite{Ehr05}. First,  we focus on the maps $\theta_n:E\rightarrow \series(E,\CC)^\times$ that will be the components of the power series $\delta$.
\begin{defi}
Let $\theta_n: E\rightarrow \series(E,\CC)^\times$ be defined by induction on $n$ by:
\begin{equation*}
  \theta_0(x)=\delta_0,\qquad\qquad
  \theta_1(x)= \lim\limits_{t\to 0} \frac{\delta_{tx}-\delta_0}t,\qquad\qquad
  \forall n\in\mathbb N,\ \theta_{n+1}(x)= \theta_1(x)\conv\theta_n(x).
\end{equation*}
\end{defi}

\begin{prop}
\label{theta}
For any $n\in\mathbb N$, $\theta_n$ is a \bornof $n$-monomial from $E$ to $ !E$. Besides, for any $x\in E$ and $f=\sum_n f_n\in\series(E,\CC)$, we have
 $\theta_n(x)f=n!\,f_n(x)$.
\end{prop}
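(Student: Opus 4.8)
The plan is to prove the evaluation formula $\theta_n(x)f = n!\,f_n(x)$ by induction on $n$ using the convolution recursion, and then to read off the bounded $n$-monomial structure of $\theta_n$ from the observation that $\theta_n(x)$ is the $n$-fold convolution power of $\theta_1(x)$. The two assertions are thus handled by the same machinery, the evaluation formula feeding the structural claim.

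First I would check that $\theta_1$ is well defined, namely that the net $\big(\tfrac{\delta_{tx}-\delta_0}{t}\big)_{t\to 0}$ is Mackey-Cauchy in $!E$. Evaluating on $f=\sum_k f_k$ gives $\tfrac{f(tx)-f(0)}{t}=\sum_{k\ge 1}t^{k-1}f_k(x)$, so for any equibounded $B\subset\series(E,\CC)$ one applies the scalar Cauchy inequality to the holomorphic functions $z\mapsto f(2zx)$: since $B(2\DD x)$ is bounded, there is $M$ with $|f_k(x)|\le M/2^k$ for all $f\in B$ and all $k$. This uniform bound controls the differences $\tfrac{\delta_{tx}-\delta_0}{t}-\tfrac{\delta_{sx}-\delta_0}{s}$, which for $|t|,|s|\le\tfrac12$ lie in $|t-s|\,\mathcal B$ for a fixed bounded $\mathcal B\subset\series(E,\CC)^\times$; the net is therefore Mackey-Cauchy. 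As its terms lie in the linear span of $\delta(E)$ and $!E$ is the Mackey-completion of that span, the limit $\theta_1(x)$ lies in $!E$, and passing to the limit term by term gives $\theta_1(x)f=f_1(x)$. Since $f_1$ is linear, $\theta_1$ is linear in $x$ and bounded, i.e. a bounded $1$-monomial; the case $n=0$ is immediate from $\theta_0(x)f=\delta_0 f=f(0)=f_0(x)$.

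For the inductive step I would use $\theta_{n+1}(x)=\theta_1(x)\conv\theta_n(x)$ together with the convolution formula of Proposition~\ref{prop:convo}, writing $\theta_{n+1}(x)f=\theta_1(x)\big(u\mapsto\theta_n(x)(v\mapsto f(u+v))\big)$. The map $v\mapsto f(u+v)$ is the composite of the power series $f$ with the degree-one polynomial $v\mapsto u+v$, hence a power series whose $n$-th monomial is $v\mapsto\sum_{k\ge n}\binom{k}{n}\tilde f_k(u^{k-n},v^n)$. Applying the induction hypothesis gives $g(u):=\theta_n(x)(v\mapsto f(u+v))=n!\sum_{k\ge n}\binom{k}{n}\tilde f_k(u^{k-n},x^n)$; this is a genuine power series in $u$, because $u\mapsto(v\mapsto f(u+v))$ is a power series $E\to\series(E,\CC)$ by cartesian closedness (as in Proposition~\ref{prop:convo}) and post-composing with the bounded linear form $\theta_n(x)$ preserves power series by Proposition~\ref{comp_lin_pws}. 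Its linear part in $u$ is $u\mapsto n!(n+1)\tilde f_{n+1}(u,x^n)$, so by the $n=1$ case $\theta_1(x)g=g_1(x)=n!(n+1)\tilde f_{n+1}(x^{n+1})=(n+1)!\,f_{n+1}(x)$, which closes the induction.

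Finally, unfolding the recursion gives $\theta_n(x)=\theta_1(x)^{\conv n}$, so I would set $\tilde\theta_n(x_1,\dots,x_n)=\theta_1(x_1)\conv\cdots\conv\theta_1(x_n)$. Since $\theta_1$ is bounded linear and, by the structural description of Proposition~\ref{prop:convo}, convolution on $!E$ factors as the monoidal map $m^2$ followed by $!$ applied to addition, it is a bounded, associative and commutative bilinear operation on $!E$; hence $\tilde\theta_n$ is a symmetric bounded $n$-linear map valued in $!E$ whose diagonal is $\theta_n$, so $\theta_n$ is a bounded $n$-monomial from $E$ to $!E$. I expect the main obstacle to be precisely the inductive step: one must justify that $v\mapsto f(u+v)$ and $u\mapsto g(u)$ are power series with exactly the claimed homogeneous components and that the reindexing of the resulting double expansion is legitimate, all of which rests on the composition theorem, Proposition~\ref{comp_lin_pws}, and the uniqueness of the monomial decomposition (Corollary~\ref{unique_dvp_pws}); the well-definedness of $\theta_1$ is a secondary technical point settled by the scalar Cauchy estimate above.
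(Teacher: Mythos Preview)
Your proposal is correct and follows essentially the same inductive-via-convolution route as the paper: both establish the base cases $\theta_0,\theta_1$, then compute $\theta_{n+1}(x)f$ by expanding $f(u+v)$ binomially, applying the induction hypothesis to get $g(u)=n!\sum_{k\ge n}\binom{k}{n}\tilde f_k(u^{k-n},x^n)$, and reading off its linear part. The only noteworthy differences are cosmetic: the paper obtains $\theta_1(x)\in!E$ by observing that $t\mapsto\delta_{tx}$ is locally Lipschitz and invoking \cite[I.1.7]{KriMi}, whereas you give the Cauchy-estimate Mackey-Cauchy argument directly; and you make the bounded $n$-monomial structure explicit via $\tilde\theta_n(x_1,\dots,x_n)=\theta_1(x_1)\conv\cdots\conv\theta_1(x_n)$, which the paper leaves implicit in the convolution recursion.
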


\begin{proof}
We prove this proposition by induction on $n\in\mathbb N$. Let $x\in E$ and $f=\sum_n f_n\in\series(E,\CC)$.

First, $\theta_0$ is constant, $\theta_0(x)=\delta_0$ in $!E$ and $\theta_0(x)f=f(0)=f_0(x)$.

Then, $\theta_1(x)(f)= \lim_{t\rightarrow 0}\frac{f(tx)-f(0)}t=f_1(x)$. Indeed, by Lemma~\ref{derivatives_pws},  the derivative of $c:z\in\CC\mapsto f(zx)$ at $0$ is $f_1(x)$. Besides, $\theta_1$ is linear as for $h\in\CC$, $\theta_1(x+h y)f =f_1(x+hy)=f_1(x)+hf_1(y)$ by linearity of $f_1$. Finally, notice that $t\mapsto \delta_{tx}$ is locally lipschitzian as for any $a\in\RR$ and $B\subset \series(E,\CC)$ equibounded, the set $\{\frac{f(tx)-f(0)}t\ |\ 0<t<a, f\in B\}\subset 2B(\{tx \ |\ 0<t<a\})$ is bounded. Thus, as proved in~\cite[Prop. I.1.7]{KriMi}, the net $\left(\frac{\delta_{tx}-\delta_0}t\right)_{t\in\RR}$ is Mackey-convergent and its limit $\theta_1(x)$ is in the \mco space $!E$.

Assume that $\theta_n(x)$ is in $!E$ and for any $g=\sum_n g_n$,  $\theta_n(x)g=n!\,g_n(x)$. Then thanks to Proposition~\ref{prop:convo}, $\theta_{n+1}(x)=\theta_1(x)\conv\theta_n(x)$ is in $!E$ and $$\theta_{n+1}(x)(f)=\theta_1(x)( y \mapsto \theta_n(x)( z \mapsto f(y+z))).$$ By induction hypothesis, 
$$\theta_n(x)(z\mapsto f(y+z))=n!\,\sum_{m\ge n} \binom mn \tilde f_m(\underbrace{y,\dots,y}\limits_{m-n},\underbrace{x,\dots,x}\limits_{n}),$$
where we denote by $\tilde f_m$ the symmetric $m$-linear \bornof map from which the $m$ monomial $f_m$ is constructed. So that, 
$$\theta_{n+1}(x)(f)=n!\,\binom {n+1}n \tilde f_{n+1}(x,\underbrace{x,\dots,x}\limits_{n})=(n+1)!\,f_{n+1}(x).$$
\end{proof}

As in~\cite{BET10}, the differential structure comes from the codereliction. Besides in this setting, this operator extracts the first coefficient of the power series.
\begin{prop}
The category $\Lin$ is equipped with a codereliction:
\begin{equation*}
\mathrm{coder_E}=\theta_1:
\left\lbrace
\begin{split}
E & \rightarrow !E \\
y & \mapsto \lim\limits_{t \rightarrow 0} \frac{\delta(ty) - \delta(0)}{t}
\end{split}
\right.
\end{equation*}
\end{prop}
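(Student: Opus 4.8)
The plan is to reduce the statement to Proposition~\ref{theta}, which already contains all the analytic work. Instantiating that proposition at $n=1$ shows that $\theta_1$ is a bounded $1$-monomial from $E$ to $!E$; since a bounded $1$-monomial is exactly a bounded linear map, this gives $\mathrm{coder}_E=\theta_1\in\Lin(E,!E)$, so the operator is a genuine morphism of $\Lin$. The existence of the limit $\lim_{t\to 0}\frac{\delta_{ty}-\delta_0}{t}$ inside the Mackey-complete space $!E$ was also established there, through the local Lipschitzness of $t\mapsto\delta_{ty}$ and the Mackey-convergence criterion of~\cite[Prop.~I.1.7]{KriMi}. Hence well-definedness and the morphism status are immediate, and what remains is to verify that $\theta_1$ obeys the equations of a codereliction in a differential category~\cite{BCS06,Ehr11}.

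First I would list these equations: compatibility with the counit (dereliction), $\epsilon_E\circ\mathrm{coder}_E=\mathrm{id}_E$; naturality, $!f\circ\mathrm{coder}_E=\mathrm{coder}_F\circ f$ for every $f\in\Lin(E,F)$; the chain-rule compatibility with the comultiplication $\rho$; and the Leibniz-type compatibility with the bialgebra structure $(\Delta,e,\nabla,m^0)$ carried by $!E$. The key observation is that every map involved --- $\epsilon_E$, $\rho_E$, $m^0$, $m^2_{E,F}$, $!f$, and the bialgebra maps --- is obtained by bounded-linear Mackey-complete extension of its values on the Dirac delta distributions, and $\theta_1$ is itself linear; so each axiom can be checked by evaluating both sides on the generating net $\frac{\delta_{ty}-\delta_0}{t}$ and letting $t\to 0$. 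These interchanges of a structure map with the limit are legitimate precisely because the net is Mackey-convergent and bounded linear maps preserve Mackey-convergence (Proposition~\ref{bounded_Mackey}).

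Concretely, for the counit law I would compute $\epsilon_E(\theta_1(y))=\lim_{t\to 0}\frac{\epsilon_E(\delta_{ty})-\epsilon_E(\delta_0)}{t}=\lim_{t\to 0}\frac{ty}{t}=y$; naturality follows from $!f(\delta_z)=\delta_{f(z)}$ together with the commutation of the bounded linear $f$ with the limit. For the Leibniz rule, writing $\delta_{ty}\otimes\delta_{ty}-\delta_0\otimes\delta_0=(\delta_{ty}-\delta_0)\otimes\delta_{ty}+\delta_0\otimes(\delta_{ty}-\delta_0)$ and dividing by $t$ yields, in the limit, $\Delta(\theta_1(y))=\theta_1(y)\otimes\delta_0+\delta_0\otimes\theta_1(y)$, which is the expected first-order expansion.

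The main obstacle I anticipate is twofold. The routine-but-essential part is justifying the interchanges of structure maps with the limit defining $\theta_1$: this is exactly where Proposition~\ref{bounded_Mackey} and the Mackey-convergence from Proposition~\ref{theta} are indispensable, and where an argument with ordinary topological limits would break down, since bounded linear maps need not be continuous. The genuinely delicate axiom is the compatibility of $\mathrm{coder}$ with the comultiplication $\rho$ (the chain rule for $!$), since it couples $\theta_1$ with $\rho_E(\delta_x)=\delta_{\delta_x}$ and thus forces one to control the first-order behaviour of $\delta$ at the level of $!!E$. There too I would reduce to Dirac deltas and use that all the maps are Mackey-complete extensions, comparing the outcome with the identical computation performed for smooth maps in~\cite{BET10}, whose codereliction is the very same differentiation-at-$0$ operator.
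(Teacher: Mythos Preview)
Your proposal is correct and follows essentially the same approach as the paper: both reduce the verification of the codereliction axioms to the values of the structure maps on Dirac delta distributions, and both ultimately defer to the identical computation in~\cite{BET10}. The paper's proof is considerably terser --- it simply observes that the strength and comonad diagrams of~\cite{Fiore} hold ``exactly as in~\cite{BET10} since the actions of the involved natural transformations are defined similarly on the Dirac delta distributions'' --- whereas you spell out the mechanism (Mackey-convergence of the defining net, preservation by bounded linear maps via Proposition~\ref{bounded_Mackey}) and give explicit computations for the individual axioms. One minor discrepancy: the paper checks Fiore's three diagrams (strength, $\epsilon\circ\mathrm{coder}=\mathrm{id}$, and the $\rho$-compatibility), while you list a slightly different but equivalent set including naturality and a Leibniz law for $\Delta$; this is a difference of presentation rather than substance.
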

\begin{proof}
  The strength and comonad diagrams of~\cite{Fiore}:
\[
\xymatrix@C=50pt{
E \mctens ! F \ar[r]^{{\sf coder}_E\mctens 1}
\ar[dr]_{1\mctens \epsilon_E}& 
! E \mctens! F \ar[r]^\phi&
! (E\mctens F)\\
& E\mctens F \ar[ur]_{{\sf coder}_{E\mctens F}}
}
\]
\[
\xymatrix@C=10pt{
& ! E \ar[dr]^\epsilon& \\
E \ar[ur]^{{\sf coder}_E} \ar[rr]_1&&E 
}
\qquad
\xymatrix@C=40pt{
E   \ar[d]|{\simeq}  \ar[r]^{{\sf   coder}_E}&   !  E\ar[r]^{\rho}
&!! E\\
E\mctens I \ar[r]_{{\sf coder}_E\mctens \nu}& 
! E \mctens ! E\ar[r]_{{\sf coder}\mctens \rho}
 & !! E \mctens !! \ar[u]_{\nabla}
E}
\]
 are shown exactly as in~\cite{BET10} since the actions of the involved natural transformations are defined similarly on the Dirac delta distributions.
\end{proof}

\begin{prop}\label{prop:deltaseries}
The map $\delta$ is a power series in $\series(E,\series(E,\CC)^\times)$:
$$\delta=\sum\limits_{n=0}^\infty \frac{\theta_n}{n!}.$$
\end{prop}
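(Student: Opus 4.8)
The plan is to exhibit $\delta$ as the pointwise limit of the series $\sum_n \theta_n/n!$ and then to invoke Proposition~\ref{simply_unif_cv} in order to promote this pointwise convergence to uniform convergence on bounded subsets of $E$, which is precisely what Definition~\ref{def:powerseries} requires of a power series in $\series(E,\series(E,\CC)^\times)$. Three facts have to be collected: that each $\theta_n/n!$ is a \bornof $n$-monomial with values in $\series(E,\CC)^\times$, that $\delta$ is itself \bornof, and that $\sum_n \theta_n(x)/n!$ converges to $\delta_x$ for every $x$.

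The first two facts are routine. By Proposition~\ref{theta}, $\theta_n$ is a \bornof $n$-monomial from $E$ to $!E$; composing with the \bornof inclusion $!E\hookrightarrow \series(E,\CC)^\times$ shows that each $\theta_n/n!$ is a \bornof $n$-monomial from $E$ to $\series(E,\CC)^\times$. For the boundedness of $\delta$, take $b\subset E$ bounded and $\cB\subset \series(E,\CC)$ bounded, \ie equibounded; then $\{\delta_x(f)\mid x\in b,\ f\in\cB\}=\cB(b)$ is bounded in $\CC$, so $\delta(b)$ is equibounded, that is bounded in $\series(E,\CC)^\times$.

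The crux is the pointwise convergence. Proposition~\ref{theta} gives $(\theta_n(x)/n!)(f)=f_n(x)$ for every $f=\sum_k f_k\in\series(E,\CC)$, so that, applied to a test function, the series reads $\sum_n (\theta_n(x)/n!)(f)=\sum_n f_n(x)=f(x)=\delta_x(f)$. To turn this into a genuine convergence in the topology of $\series(E,\CC)^\times$, which is that of uniform convergence on bounded subsets of $\series(E,\CC)$, I would control the tails uniformly over a bounded family of test functions. Fix $\cB\subset\series(E,\CC)$ bounded and $x\in E$. Since scalar multiplication is \bornof, the set $2\DD x$ is bounded, hence $M=\sup\{\,\abs{f(zx)}\mid f\in\cB,\ \abs z\le 2\,\}$ is finite by equiboundedness of $\cB$. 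For each $f\in\cB$ the curve $z\mapsto f(zx)=\sum_k f_k(x)z^k$ is holomorphic into $\CC$ with $n$-th Taylor coefficient $f_n(x)$ (Lemma~\ref{derivatives_pws}), so the scalar Cauchy inequality at radius $2$ yields $\abs{f_n(x)}\le M\,2^{-n}$. Consequently $\sup_{f\in\cB}\bigl|\sum_{n>N} f_n(x)\bigr|\le M\,2^{-N}$, which tends to $0$; this says exactly that the partial sums $\sum_{n\le N}\theta_n(x)/n!$ converge to $\delta_x$ in $\series(E,\CC)^\times$. As the evaluations $D\mapsto D(f)$ are point-separating on $\series(E,\CC)^\times$, the limit is unambiguously $\delta_x$.

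Having pointwise convergence of the series of \bornof $n$-monomials $\theta_n/n!$ towards the \bornof function $\delta$, Proposition~\ref{simply_unif_cv} applies and shows that $\sum_n\theta_n/n!$ converges uniformly on bounded subsets of $E$, whence $\delta=\sum_n\theta_n/n!\in\series(E,\series(E,\CC)^\times)$. The only delicate point is the pointwise-convergence step, and more precisely the uniform control of the tails of the Taylor expansions over a bounded set of test functions; the geometric decay $2^{-n}$ furnished by the scalar Cauchy inequality applied to $z\mapsto f(zx)$ is what makes this work.
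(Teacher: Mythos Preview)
Your proof is correct and follows essentially the same route as the paper's: show $\delta$ is \bornof, establish pointwise convergence of $\sum_n \theta_n(x)/n!$ to $\delta_x$ in $\series(E,\CC)^\times$ via a $2^{-n}$ tail estimate, and conclude by Proposition~\ref{simply_unif_cv}. The only cosmetic difference is that you invoke the scalar Cauchy inequality on the holomorphic curve $z\mapsto f(zx)$ at radius $2$ (via Lemma~\ref{derivatives_pws}), whereas the paper appeals to the vector-valued Cauchy inequality of Proposition~\ref{Cauchy_ineq_pws} on an absolutely convex set containing $2x$; both yield the same bound $\abs{f_n(x)}\le M\,2^{-n}$.
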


\begin{proof}
In order to show that $\delta$ is a power series, we apply Proposition~\ref{simply_unif_cv}. 

First, notice that $\delta$ is \bornof from $E$ to $\series(E,\CC)^\times$. Indeed, let $b$ be bounded in $E$, then $\delta(b)$ is equibounded in $\series(E,\CC)^\times$, since if $B$ is equibounded in $\series(E,\CC)$, $\delta(b)(B)=B(b)$ is bounded.

Now, let us prove that $\sum_{n=0}^\infty \frac{\theta_n}{n!}$ converges pointwise to $\delta$. Let $x\in E$, we need to prove that $\sum_{n=0}^\infty \frac{\theta_n(x)}{n!}$ converges to $\delta_x$ uniformly on bounded sets of $\series(E,\CC)$. We apply the Cauchy Inequality of Proposition~\ref{Cauchy_ineq_pws}. Let $b$ be absolutely convex such that $2x\in b$ and $B\in\series(E,\CC)$ be equibounded, then $B(b)$ is bounded in $\CC$, \ie there is $M$ such that $| f(y)| \leq M$ for every $ f \in B$ and $y\in b$. Thus, for any $f\in B$,  $|\frac{\theta_n(x)}{n!}(f)|=\frac 1{2^n}|f_n(2x)|\le \frac M{2^n}$ and the series $ \sum_n \frac{\theta_n(x)}{n!}$ converges uniformly on $B$. Its limit is $\delta_x$ as for every $f \in \series(E,\mathbb{C})$ and $x \in E$, we have $\delta_x(f) = f(x)= \sum\limits_{n=0}^\infty f_n (x) =\sum_{n=0}^\infty \frac{\theta_n (x)}{n!} (f) $. From this we conclude that pointwise, we have $\delta = \sum \theta_k$. 

As $\delta$ is bounded, Proposition \ref{simply_unif_cv} implies that the sum uniformly converges on bounded subsets of $E$. Thus $\delta$ is a power series.

\end{proof}

We just proved that we have a model of Intuitionist Linear Logic and thus, that the cokleisli category $\Lin_{!}$ is cartesian closed. We want now to show that the category $\Quant$ of \mco spaces and power series is isomorphic to $\Lin_!$, that is:
\begin{theorem}
\label{adjunctioncomp}
For every \mco space $E$ and $F$, we have the following bounded isomorphism $$ \series(E,F) \simeq \linb( ! E, F).$$ 
\end{theorem}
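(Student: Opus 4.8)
The plan is to realise the isomorphism by pre-composition with the Dirac power series $\delta$. I would define
$$\Phi:\linb(!E,F)\to\series(E,F),\quad \Phi(\phi)=\phi\circ\delta,\qquad\qquad \Psi:\series(E,F)\to\linb(!E,F),\quad \Psi(f)=\hat f,$$
where $\hat f$ is the bounded linear map determined by $\hat f(\delta_x)=f(x)$, and then check that $\Phi$ and $\Psi$ are mutually inverse bounded linear maps. Before anything else I would record that $\delta\in\series(E,!E)$: Proposition~\ref{prop:deltaseries} gives $\delta=\sum_n\theta_n/n!\in\series(E,\series(E,\CC)^\times)$, and since each $\theta_n$ already takes values in $!E$ by Proposition~\ref{theta}, while $!E$ is a Mackey-closed subspace of the Mackey-complete space $\series(E,\CC)^\times$ (Mackey-complete by Proposition~\ref{lin_mco}, as $\CC$ is), the decomposition converges inside $!E$. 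Consequently, for $\phi\in\linb(!E,F)$, Proposition~\ref{comp_lin_pws} makes $\Phi(\phi)=\phi\circ\delta$ a power series, so $\Phi$ is well defined and linear; its boundedness follows because $\delta$ sends a bounded $b\subset E$ to a bounded set of $!E$ on which an equibounded family of maps $\phi$ stays uniformly bounded.

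For $\Psi$ the crucial device, which sidesteps a direct analysis of the bounded sets of $!E$, is that $!E$ lives inside the bounded dual $\series(E,\CC)^\times$, so evaluation is available. Given $f=\sum_n f_n\in\series(E,F)$ and $l\in F'$, the scalar power series $l\circ f=\sum_n l\circ f_n$ lies in $\series(E,\CC)$, and for every $x\in E$ one has $l(f(x))=(l\circ f)(x)=\delta_x(l\circ f)$. Hence on the linear span $V$ of $\delta(E)$ the functional $l\circ\hat f$ coincides with the evaluation $\mathrm{ev}_{l\circ f}:D\mapsto D(l\circ f)$. This identity gives well-definedness of $\hat f$ on $V$ at once: if $\sum_i\lambda_i\delta_{x_i}=0$ then $l\big(\sum_i\lambda_i f(x_i)\big)=0$ for all $l\in F'$, so the combination vanishes by Hahn--Banach (Proposition~\ref{HBsep}, which separates points). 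It also gives boundedness: evaluation at a fixed element of $\series(E,\CC)$ is bounded on $\series(E,\CC)^\times$ by the very definition of its equibounded bornology, so $\hat f|_V$ is scalarly bounded with respect to $F'$, hence bounded by the Mackey--Arens argument of Lemma~\ref{scalbound}. As $F$ is Mackey-complete and $!E$ is the Mackey-completion of $V$, Proposition~\ref{mcompletion} extends $\hat f|_V$ uniquely to a bounded linear $\hat f:!E\to F$, and the same evaluation identity, applied to an equibounded family in $\series(E,F)$, shows that $\Psi$ is itself bounded.

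Finally I would verify that the two composites are identities. For $\Phi\circ\Psi$, one computes $(\hat f\circ\delta)(x)=\hat f(\delta_x)=f(x)$ pointwise, so $\Phi(\Psi(f))=f$. For $\Psi\circ\Phi$, the map $\widehat{\phi\circ\delta}$ agrees with $\phi$ on every $\delta_x$, hence on $V$ by linearity, hence on all of $!E$ by the uniqueness clause of Proposition~\ref{mcompletion}. The main obstacle is genuinely the well-definedness and boundedness of the extension $\hat f$; I expect the evaluation identity $l\circ\hat f=\mathrm{ev}_{l\circ f}$ to be exactly what carries the argument, since it reduces both points to the defining bornology of $\series(E,\CC)^\times$ together with Mackey--Arens, rather than forcing a painful description of the bounded subsets of $!E$.
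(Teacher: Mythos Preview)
Your proof is correct and follows essentially the same route as the paper: both directions are defined by $\phi\mapsto\phi\circ\delta$ and $f\mapsto\hat f$ with $\hat f(\delta_x)=f(x)$, both invoke Proposition~\ref{comp_lin_pws} to see that post-composing $\delta$ with a bounded linear map yields a power series, and both extend $\hat f$ from the span of the Dirac masses to $!E$ via the universal property of Mackey-completion. Your treatment is in fact more careful than the paper's on the one delicate point---well-definedness and boundedness of $\hat f$ on the span of $\delta(E)$---where the paper is quite terse; your evaluation identity $l\circ\hat f=\mathrm{ev}_{l\circ f}$ together with Hahn--Banach and Lemma~\ref{scalbound} is a clean way to handle this.
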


\begin{proof}
Consider $ f \in \series(E,F)$. Then define $\hat{f} : ! E \rightarrow F $ as $\hat{f} ( \delta_x) = f(x)$, extended linearly and Mackey-completed. We can define this function on $!E$ as $\hat{f}_{| \delta (E )}$ is \bornof : $\hat{f}_{| \delta (E )}^{-1} (U) = U_{\{ f\}, U} \cap \delta (E)$. By definition of the Mackey-completion of a lctvs, $ ! f$ is linear and \bornof. 

Now consider $ g \in \linb( ! E, F)$ and define $\check{g} : E \rightarrow F$ by $\check{g}(x) = g (\delta_x) = g \circ \delta$. As $g$ is bounded, we have by Proposition \ref{comp_lin_pws} that $\check{g} =  \sum_k \tfrac{1}{k!}g (\theta_k)$.
%
%

We check that $\hat{\check{g}} = g$, $\check{\hat{f}} = f $, that $ g \mapsto \check{g}$ and $f \mapsto \hat{f}$ are both linear and \bornof as $\delta$ is, and this induces a bounded isomorphism which is natural in $E$ and $F$ and so the wanted adjunction. 
\end{proof}

%
%
%
%
%
%
%
This concludes our construction of our denotational model of Linear Logic.
\begin{theorem}
\label{quantmodelILL}
The category $\Lin$, equipped with the comonad $ !$, is a quantitative model of intuitionist Linear Logic whose cokleisli category is $\Quant$, and a differential category.
\end{theorem}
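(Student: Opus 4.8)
The plan is to assemble the structures built in Sections~\ref{sec:lin} and~\ref{sec:pws} into a model of intuitionist Linear Logic and then to read off its cokleisli, quantitative and differential features; almost all of the work has already been carried out, so the proof is an orchestration of the preceding results. First I would collect the linear data: the category $(\Lin,\mctens,\unit)$ is symmetric monoidal closed, and $\Lin$ carries finite biproducts (finite products and coproducts coincide), so the multiplicative and additive connectives are interpreted. The exponential comonad $(!,\rho,\epsilon)$, together with its strong symmetric monoidal structure $(m^0,m^2)$ and the coherence hexagon proved in the preceding Proposition, exhibits $!$ as an exponential modality with Seely isomorphisms $m^0:\unit\xrightarrow{\sim}!\top$ and $m^2_{E,F}:!E\mctens!F\xrightarrow{\sim}!(E\times F)$ compatible with the comultiplication $\rho$. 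By the standard correspondence between such data and models of intuitionist Linear Logic (see~\cite{Mell08}), $(\Lin,!)$ is such a model.

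Next, to identify the cokleisli category, I would invoke Theorem~\ref{adjunctioncomp}, which gives a bounded isomorphism $\series(E,F)\simeq\linb(!E,F)$ natural in $E$ and $F$. Since the morphisms of $\Lin_!$ from $E$ to $F$ are by definition the elements of $\linb(!E,F)$, and since $\delta$ is the unit of the comonad while the composite of two power series is again a power series, this isomorphism identifies $\Lin_!$ with $\Quant$; the latter is cartesian closed by Theorem~\ref{cart_closed}, which furnishes the linear-non-linear adjunction underlying the model.

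The quantitative and differential features then follow directly. By Proposition~\ref{prop:deltaseries}, the Dirac map is itself a power series $\delta=\sum_n\theta_n/n!$ with $\theta_n(x)f=n!\,f_n(x)$, so under the above isomorphism every cokleisli morphism $g\in\linb(!E,F)$ is recovered as the Taylor sum $\check g=\sum_k\tfrac1{k!}\,g(\theta_k)$ of its homogeneous components: this is exactly the decomposition of a non-linear proof into monomials characteristic of quantitative semantics. For the differential structure, the codereliction $\mathrm{coder}_E=\theta_1$ constructed above, paired with the bialgebra structure on $!E$ transported from the biproducts (the maps $\Delta,\nabla,e,m^0$ of Section~\ref{subsec:DiLL}), satisfies the strength and comonad diagrams of~\cite{Fiore}, which are the axioms of a differential category in the sense of~\cite{BCS06}; the induced derivative coincides with the usual one.

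I expect the only genuine difficulty to be coherence bookkeeping rather than mathematics: one must verify naturality of the Seely isomorphisms against $\rho$ and the compatibility of the codereliction with the monoidal comonad. Because every natural transformation in sight ($\epsilon$, $\rho$, $m^2$, $\mathrm{coder}$) is defined by linear Mackey-complete extension of its action on the generators $\delta_x$, each required diagram reduces to a finite check on Dirac distributions, which is routine given the preceding propositions.
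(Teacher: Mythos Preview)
Your proposal is correct and follows essentially the same approach as the paper: the theorem is stated there without proof, as a summary of the structures built in Sections~\ref{sec:lin} and~\ref{sec:pws} (the symmetric monoidal closed and cartesian structure on $\Lin$, the exponential comonad with its Seely isomorphisms, the adjunction of Theorem~\ref{adjunctioncomp}, the codereliction, and the Taylor decomposition via Proposition~\ref{prop:deltaseries}). Your write-up simply makes explicit the orchestration that the paper leaves implicit.
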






\section{Quant is not *-autonomous}
One of the limits of the approach with bornologies is the extension to *-autonomous categories~\cite{Bar79}. Indeed, one could transform this model into a model of (classical) Differential Linear Logic by considering pairs $(E,\Ec)$ of Mackey-complete spaces, where $\Ec$ denotes the spaces of all bounded linear forms on $E$. This would be a construction alike the Chu construction.  

It is however difficult to have a more intrinsic approach. One could define a notion of b-reflexive space, as a space which equals its bounded bidual $\Ecc$. However, there is no handy Hahn-Banach theorem for bounded linear maps (see~\cite{Hog70}), and one cannot prove that the symmetric monoidal category of b-reflexive Mackey-complete spaces and \bornof maps is closed. 
Let us point out that this problem is not simpler with usual reflexive spaces, as the category of reflexive topological spaces and linear continuous maps is notoriously not closed. For example if we consider the bi-dimensional reflexive Hilbert space $l^2$, the space $\mathcal{B}(l^2)$ of bounded (equivalently continuous) endomorphisms in not reflexive (nor b-reflexive).

\section*{Conclusion}
This paper may be seen as a quantitative adaptation of~\cite{BET10}. It also brings a smooth and general point of view on quantitative semantics.
One can try to understand the computing meaning of this structure of power series, as some refinement to quantitative semantics. Indeed, many constructions of the present work relies on the Cauchy formula that power series satisfy. The same phenomenon happens in Girard's Coherent Banach spaces~\cite{Gir96}.

The next step now in understanding smooth models of Differential Linear Logic would be to go towards differentiation in manifolds. The first step in this direction would be to work on the logic underneath the theory of diffeology.

\bibliographystyle{alpha}
\bibliography{biblioMC}

\label{lastpage}
\end{document}